\newtheorem{theorem}{Theorem}
\newtheorem{lemma}{Lemma}
\newtheorem{proposition}{Proposition}
\newtheorem{assump}{Assumption}
\newcommand{\A}{{\bm  A}} 
\newcommand{\D}{{\bm  D}} 
\newcommand{\V}{{\bm V}} 
\newcommand{\X}{{\bm  X}} 
\newcommand{\Xbf}{{\bm   X}} 
\newcommand{\Lambdabf}{{\mathbf \Lambda}} 
\newcommand{\Sigmabf}{{\mathbf \Sigma}} 
\newcommand{\xbf}{{\bm x}}
\newcommand{\ubf}{{\bm u}}
\newcommand{\wbf}{{\bm w}}
\newcommand{\ybf}{{\bm y}}
\newcommand{\zbf}{{\bm z}}
\newcommand{\zerobf}{{\mathbf 0}}
\newcommand{\onebf}{{\mathbf 1}}
\newcommand{\greekbold}[1]{\mbox{\boldmath $#1$}}
\newcommand{\betabf}{\greekbold{\beta}}
\newcommand{\gammabf}{\greekbold{\gamma}}
\newcommand{\mubf}{\greekbold{\mu}}
\newcommand{\ind}{\mathbbm{1}} 
\newcommand{\s}{{\sigma^2}}
\newcommand{\R}{{R^2}}
\begin{document}

\def\spacingset#1{\renewcommand{\baselinestretch}%
{#1}\small\normalsize} \spacingset{1}



{
  \title{\bf Bayesian Regression Using a Prior on the Model Fit: The R2-D2 Shrinkage Prior}
  \author{{\normalsize Yan Dora Zhang$^{1}$, Brian P. Naughton$^2$, Howard D. Bondell$^{3}$, and Brian J. Reich$^2$}\\ 
  	    \\ {\normalsize $^1$\textit{Department of Statistics and Actuarial Science, The University of Hong Kong }} \\
     {\normalsize $^2$\textit{Department of Statistics, North Carolina State University} }\\
   {\normalsize  $^3$\textit{School of Mathematics and Statistics, University of Melbourne}}}
  \maketitle
}

\bigskip
\begin{abstract}
Prior distributions for high-dimensional linear regression require specifying a joint distribution for the unobserved regression coefficients, which is inherently difficult. We instead propose  a new  class of shrinkage priors for linear regression via specifying a prior first on the model fit, in particular, the coefficient of determination, and then distributing through to the coefficients in a novel way. The  proposed    method  compares favorably to  previous approaches in terms of both concentration around the origin and tail behavior, which leads to improved performance both in posterior contraction and in empirical performance.  The limiting behavior of the proposed prior is $1/x$, both around the origin and in the tails. This behavior is optimal in the sense that it simultaneously lies on the boundary of being an improper prior both in the tails and around the origin. None of the existing shrinkage priors obtain this behavior in both regions simultaneously. We also demonstrate  that  our proposed  prior  leads to the same near-minimax posterior contraction 
rate as the  spike-and-slab prior. 
\end{abstract}

\noindent%
{\it Keywords:}  Global-Local Shrinkage, High-dimensional regression, Beta-prime distribution,  Coefficient of Determination
\vfill

\newpage
\spacingset{1.45} 


\clearpage
\section{Introduction}
\label{sec:intro}

Consider the linear regression model, 
\begin{equation} 
\label{eq_linear_regression}
Y_i=  \bm{x}_i^T \bm{\beta} + \varepsilon_i,   \ i=1,\cdots, n, 
\end{equation}
where $Y_i$ is the $i$th response,  $ \bm{x}_i$  is the $p$-dimensional  vector of     covariates for the $i$th observation,  $ \bm\beta=(\beta_{1},\cdots,\beta_p)^T$ is the   coefficient vector, and the  $\varepsilon_i$'s are the error terms assumed be normal and independent with  E$( \varepsilon_i) = 0$ and $\text{var}( \varepsilon_i) = \sigma^2$.  
High-dimensional data with $p>n$ in this context   is   common in diverse  application areas. 
It is  well known that   maximum likelihood estimation performs poorly in this setting,  and this motivates a number  of approaches in shrinkage estimation and variable selection.    In   the Bayesian framework,   there  are  two  main  approaches to address such problems:   two component discrete mixture prior (also referred as spike and slab prior)  and  continuous shrinkage priors.   
The  discrete mixture priors  \citep{mitchell1988bayesian, george1993variable,  ishwaran2005spike,  narisetty2014bayesian}   put  a point mass (spike) at $\beta_j=0$ and a continuous prior (slab)  for the terms with $\beta_j\neq0$. Although these priors 
have an intuitive and appealing representation, they   lead to computational issues due to  the  spread of posterior probability over the $2^p$ models formed by including subsets of the coefficients to zero.    
Implementation instead can proceed instead by applying approximation methods, such as stochastic search variable selection \citep{george1993variable},  shotgun stochastic search \citep{hans2007shotgun}, variational Bayes \citep{ormerod2017vb}, and EM \citep{rovckova2014emvs} all of which have improved the computational feasibility and include theoretical underpinnings.

The computation issues with discrete mixture priors motivate  continuous shrinkage priors. The shrinkage priors are  essentially written as global-local    scale  mixture Gaussian  family as summarized in \cite{polson2010shrink}, i.e., 
\[
\beta_j \mid \phi_j, \omega \sim N(0, \omega\phi_j), \ \phi_j\sim\pi(\phi_j), \ (\omega, \sigma^2)\sim \pi(\omega, \sigma^2), 
\]  
where $\omega$ represents the global shrinkage, while $\phi_j$'s are the local variance components. 
Current existing global-local priors  exhibit  desirable  theoretic and  empirical  properties. They can shrink the overall signal, while varying the amount of shrinkage on different components. These continuous priors exhibit both heavy tails and high concentration around zero. The heavy tail reduces the bias in estimation of large coefficients, while the high concentration around zero shrinks the irrelevant coefficients heavily to zero, thus reducing the noise. 
Some examples include 
Normal-Gamma mixtures \citep{griffin2010inference},    Horseshoe    \citep{carvalho2009handling, carvalho2010horseshoe},
generalized Beta \citep{armagan2011generalized}, 
generalized double Pareto \citep{armagan2013generalized}, 
Dirichlet-Laplace   \citep{bhattacharya2015dirichlet},      Horseshoe+  \citep{bhadra2016horseshoe+}, normal-beta prime prior \citep{bai2019large}. 

In general, it is difficult to specify a $p$-dimensional prior on $ \bm{\beta}$, particularly with high dimensional data. Instead, we propose to first construct a prior on the coefficient of determination, $R^2$, for which the model-based version is  defined as the square of the correlation coefficient between the dependent variable  and its modeled expectation. A prior on this one-number summary forms a prior on a function of the parameter vector, and is then distributed through to the individual parameters in a natural way. We develop a class of priors that are constructed via marginalizing over the design, as well as those conditioning on the design. By viewing things in this framework, our proposed class of priors are induced by a Beta$(a,b)$ prior on $R^2$ and lead to priors having desirable properties both asymptotically and in finite samples.

We show that our class of priors, which we term the $R^2$-induced Dirichlet Decomposition   (R2-D2) priors, simultaneously obtain both heavier tails and tighter concentration around zero than all previously proposed approaches. This optimal result translates into improved performance in estimation and inference. We also  offer a theoretical framework to compare different global-local priors.  
 The  proposed    method  compares favorably to  the other global-local shrinkage priors 
 in terms of both its concentration around the origin and its tail behavior obtaining a limiting behavior of $1/x$ in both regions. This behavior is optimal in the sense that it simultaneously lies on the boundary of being an improper prior in both areas, and translates into improved theoretical and empirical performance.

 The rest of the paper is outlined as follows. Section 2 motivates the idea of inducing a prior via $R^2$, and distinguishes between a marginal and conditional version. Section 3 presents the details of the conditional version in both the low- and high-dimensional settings. Section 4 details the marginal version and provides theoretical properties of both the prior and the  posterior. Section 5 discusses novel MCMC algorithms for computation of both the conditional and marginal versions, while Section 6 provides simulation results. Section 7  provides real data examples.
All proofs are given in the Appendix.

\section{Motivation}

The typical Bayesian approach specifies a joint distribution on the model parameters, namely for the regression coefficients and error variance. 
Instead, we specify a distribution for $R^2$ with practical meaning,   and  then induce  a prior on the $p$-dimensional $ \bm{\beta}$.  


Suppose that  the predictor vector for each observation $ \bm{x} \sim H(\cdot)$,    with E$(  \bm{x}) =  \bm{\mu}$  and $\text{cov}( \bm{x}) = \Sigma$.   
Assume  that $ \bm{x}$ is  independent of the   error, $ \varepsilon$,  and      then the marginal  variance of  $y=\xbf^T\betabf+\varepsilon$ is $\text{var}( \bm{x}^T \bm{\beta})+\sigma^2$. 
For simplicity,  we  assume that the response is centered and  covariates are standardized so that $\bm{\mu} = \bm{0}$,  there is no intercept term in (\ref{eq_linear_regression}), and    all diagonal elements of $\Sigma$ are $1$. 
The coefficient of determination,  $R^2$,    can be calculated as the square of the correlation coefficient between the  dependent variable, $y$,   and  the modeled  value, $ \bm{x}^T \bm{\beta}$, i.e., 
\begin{eqnarray} \label{eq_rsquare}
	R^2 = \frac{\text{cov}^2(y, \bm{x}^T \bm{\beta})}{ \text{var}(y)\text{var}( \bm{x}^T \bm{\beta})}
	= \frac{\text{cov}^2( \bm{x}^T \bm{\beta}+\varepsilon, \bm{x}^T \bm{\beta})}{ \text{var}( \bm{x}^T \bm{\beta}+\varepsilon)\text{var}( \bm{x}^T \bm{\beta})}
	=  \frac{\text{var}({ \bm{x}}^T\bm{\beta})}{\text{var}({\bm  x}^T \bm\beta) + \sigma^2} . 
\end{eqnarray}

A hypothesized value of $\R$ has been used previously to tune informative priors, and to select hyper-parameters for regularization problems. 
\cite{scott2014predicting} elicit an informative distribution for the error variance, $\sigma^2$, based on elicitation of the expected $\R$, and the response. 
\cite{zhang2018variable} proposed to choose hyper-parameters for shrinkage priors by empirically minimizing the Kullback-Leibler divergence between the expected distribution of $\R$ and a Beta distribution. Here, in contrast, we develop our approach from first principles via placing a prior distribution on $\R$ directly, rather than using a hypothesized value as a tool to tune parameters in already existing priors. 

Based on this representation of $R^2$, two alternative approaches can be taken to construction of the prior. A conditional version places a Beta prior on the conditional distribution of $\R$ which depends on the model design, $\X$. Conversely, a marginal version assumes that the marginal distribution of $\R$ (after integrating out $\betabf$ and $\X$) has a Beta distribution.

The former has the interpretation of the usual sample-based version of $\R$, while the latter allows for more direct asymptotic analysis of the posterior, as the design is integrated out. We will show that both versions lead to priors having different, but desirable properties. 

\section{Conditional $\R$ Prior} \label{section:conditional prior}
\subsection{$R^2$ as Elliptical Contours}

We now introduce the conditional version, which, conditioning on the design points, yields  
\begin{equation} \label{eq:R2_b}
R^2(\betabf) = \frac{\betabf^T \Xbf^T\Xbf \betabf}{\betabf^T \Xbf^T\Xbf \betabf + n\sigma^2}.
\end{equation}
We specifically write $\R(\betabf)$ to reflect the fact that $\R$ depends on the unknown vector $\betabf$ (as well as $\sigma^2$). 
Notice that (\ref{eq:R2_b}) will reduce to the familiar sample statistic, $R^2$, if the least-squares estimates were substituted for $\betabf$ and $\sigma^2$. 
Conditional on $\sigma^2$ and $\Xbf$, a distribution on $R^2$ induces a distribution on the quadratic form, $\betabf^T\Xbf^T\Xbf\betabf$.

We choose a Beta($a,b$) prior for $R^2$, where the choices of shape parameters $a$ and $b$ will  determine the posterior behavior and will be discussed in more detail in the theoretical results and the implementation. 
An Inverse-Gamma($a_1, b_1$) prior is used for $\sigma^2$, but we note that other choices may also  be  applied. 
A prior for $\betabf$ given $(R^2, \sigma^2)$ then must be defined on the surface of the ellipsoid: $\left\{ \betabf: \betabf^T\X^T\X\betabf = {n R^2 \sigma^2}/({1-R^2}) \right\}$.
When $\X^T\X$ is full rank we may choose $\betabf$ to be uniformly distributed on this ellipsoid;
that is, the distribution of $\betabf$ is constant given the quadratic form. 
We call this choice the ``uniform-on-ellipsoid'' prior for $\betabf$, and show a connection to a variation on a mixture of $g$-priors.
The following proposition shows that $\betabf$ given $\sigma^2$ has an elliptical distribution after integrating out $\R$.

\begin{proposition}\label{prop1}
    If $R^2 \mid \sigma^2$ has a Beta(a,b) distribution and $\betabf\mid R^2$ has a uniform prior on the ellipsoid, then $\betabf\mid \sigma^2$ has the probability density function:
\begin{equation}
    \label{eq:priorbetabf}
    p(\betabf \mid \sigma^2) = \frac{\Gamma\left( p/2\right)|\Sigmabf_\X|^{1/2} }{B(a,b)~\pi^{p/2}} \left( \sigma^2 \right)^{-a} \left( \betabf^T \Sigmabf_\X \betabf \right)^{a-p/2} \left(1 + \betabf^T \Sigmabf_\X \betabf /\sigma^2\right)^{-(a+b)}, 
\end{equation}
where $\betabf \in \mathbbm R^p, \Sigmabf_\X = \X^T\X/n, $ and $B(\cdot, \cdot)$ denotes the Beta function.
\end{proposition}

As a special case, if $a=p/2$ and $b=1/2$, then $\betabf$ has a multivariate Cauchy distribution with spread parameter $\Sigmabf_\X/\s$. 
\cite{zellner1980posterior} recommended these Cauchy priors for model selection problems.
The next proposition shows that for  $a \leq p/2$, the distribution in (\ref{eq:priorbetabf}) is equivalent to a mixture of normals $g$-prior, with a hyperprior on $g$ that is the product of Beta and Inverse-Gamma distributions.  

\begin{proposition}\label{gibbs.uoe}
If $\betabf\mid \sigma^2, z,w \sim N_p\left(\zerobf, \, zw  \sigma^2 (\X^T\X)^{-1} \right), z \sim $ Inverse-Gamma $(b, n/2), $ $w \sim $ Beta $(a, p/2-a)$, and $a \leq p/2 $, then $\betabf \mid  \sigma^2$ has the distribution given by the density in (\ref{eq:priorbetabf}).  
\end{proposition}
This representation eases the posterior computations for a Gibbs sampler discussed in Section \ref{post.comp}.

\subsection{Sparse Regression and Local Shrinkage}
The prior on $R^2$ regulates $\betabf$ through the quadratic form $\betabf^T\Sigmabf_\X\betabf$, which can shrink the regression coefficients globally, but lacks the flexibility to handle different forms of sparsity.  
In addition, the posterior is not a proper distribution when $\X$ is not full rank (e.g. when $ n < p $). 
Rather than letting $\betabf \mid (R^2,\s)$ be uniformly distributed on the ellipsoid, we put a Normal-Gamma prior on $\betabf $ \citep{griffin2010inference}, but restrict its support to lie on the surface of the ellipsoid. 
Specifically, we let
\begin{flalign}
    \label{eq:sparseprior}
    \betabf\mid (R^2, \sigma^2, \Lambdabf) &\sim N_p\left( \zerobf, \frac{\s R^2}{1-R^2} \Lambdabf \right) \ind \left\{ \betabf' \Sigmabf_\X \betabf = \frac{\s R^2 }{1-R^2} \right\} \\
    \lambda_j &\sim \text{Gamma} \left( \nu, \mu \right), \text{ for } j = 1, \dots, p,  
\end{flalign}
where  Gamma$( \nu, \mu )$ represents Gamma distribution with shape parameter  $\nu$ and rate parameter $\mu$, 
$\Lambdabf = diag\left\{ \lambda_1, \dots, \lambda_p \right\}$ and $\ind \left\{ \cdot \right\}$ is the indicator function.
Note that this prior no longer requires $\Sigmabf_\X$ to be full rank for it to be proper. 
Proposition \ref{prop.uoe.equiv} shows that the induced model described in the previous section is a special case of the hierarchical model proposed here with fixed $\Lambdabf$.
\begin{proposition}\label{prop.uoe.equiv}
    If $\betabf\mid  (R^2, \s, \Lambdabf) $ has the distribution in (\ref{eq:sparseprior}), and $\Lambdabf = (\X^T\X)^{-1}$, then $\betabf\mid\s$ has the distribution in (\ref{eq:priorbetabf}).    
\end{proposition}
That is, if the contours of the Normal distribution align with the ellipsoid, then we recover the uniform-on-ellipsoid prior.

In general, the conditional distribution of $\betabf$ is similar to a Bingham distribution, which is a multivariate Normal distribution conditioned to lie on the unit sphere. 
This Bingham distribution has density $f(\wbf\mid  \A) = C_A^{-1} \exp\{ -\wbf^T\A\wbf\}$ with respect to the uniform measure on the $p-1$ dimensional unit sphere, where $C_A$ is the normalizing constant \citep{bingham1974antipodally}.
Here, $\betabf\mid  (R^2, \s, \Lambdabf)$ is a Bingham distributed random vector that has been rotated and scaled to lie on the ellipsoid rather than the unit sphere. 
The matrix $\X$ determines the rotation of the ellipsoid, $R^2$ and $\sigma^2$ determine the size of the ellipsoid, and the conditional prior on $\betabf$ determines the direction to the surface.  
If the local variance components ($\lambda_j$) are small, then regions of the ellipsoid near the axes will be favored, encouraging sparser estimates.
Like the Normal-Gamma priors, this is primarily controlled by the shape parameter, $\nu$.
Figure \ref{fig:ellipsoids} illustrates the local shrinkage properties of the prior, showing 10,000 samples of $\betabf \mid  (R^2, \sigma^2)$.
Default choices of the hyper-parameters $\nu$ and $\mu$ follow from the recommendations of \cite{griffin2010inference} and are discussed in the implementation in Section \ref{sim.section}. 

\begin{figure}[htbp]
	    \spacingset{1.1}
   \centering
   \includegraphics[width=0.32\linewidth]{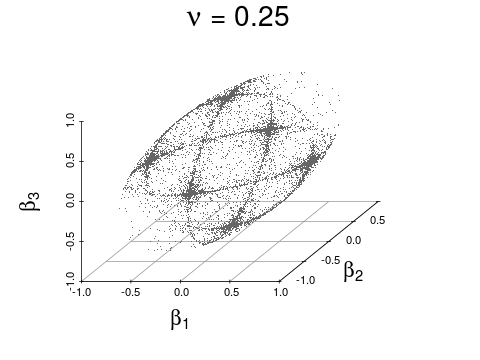}  \hfill
   \includegraphics[width=0.32\linewidth]{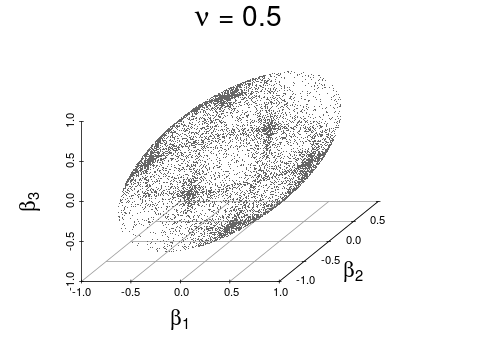}  \hfill
   \includegraphics[width=0.32\linewidth]{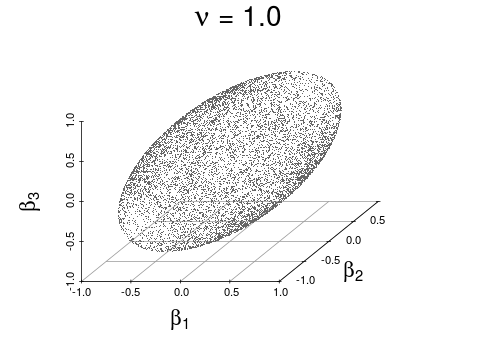}  
   \caption{10,000 samples from the prior of $\betabf\mid (\R, \sigma^2)$ for different choices of $\nu$.}
   \label{fig:ellipsoids}
\end{figure}

\section{Marginal $\R$ Prior} \label{section:marginal R2-D2}
\subsection{The R2-D2 Global-Local Shrinkage Prior}

Rather than conditioning on the design $\X$, we now instead show how to construct a prior while marginalizing out both $\bm{\beta}$ and the design. While the conditional version retains the interpretation of $R^2$ as elliptical contours in the design space, the marginal version allows for an in depth study of the asymptotic properties of   both the prior and the resulting posterior. 

Consider a prior for   $ \bm{\beta}$  satisfying  E$(\bm{\beta}) = \bm 0$ and $\text{cov}(\bm{\beta}) =  \sigma^2\Lambdabf$, where $\Lambdabf$ is a diagonal matrix with diagonal elements  $ \lambda_1, \cdots, \lambda_p$.  
Then
\begin{eqnarray*}
	\text{var}( \bm{x}^T\bm{\beta}) &=& \text{E}_{\bm{x}}\{ \text{var}_{\bm{\beta}}( \bm{x}^T\bm{\beta} \mid  \bm{x})\} + \text{var}_{\bm{x}}\{ \text{E}_{\bm{\beta}}( \bm{x}^T\bm{\beta}\mid \bm{x})\}  
	= \text{E}_ {\bm{x}} (  \sigma^2  {\bm{x}}^T  \Lambdabf  {\bm{x}} ) + \text{var}_ {\bm{x}}( 0)\\
	&=& \sigma^2  \text{E}_ {\bm{x}}\{ \text{tr} ( {\bm{x}}^T  \Lambdabf   {\bm{x}}  )\}  
	= \sigma^2 \text{tr}\{  \Lambdabf E_{\bm{x}} (     {\bm{x}}  {\bm{x}}^T ) \}
	= \sigma^2 \text{tr} (  \Lambdabf  \Sigma )   
	= \sigma^2 \sum_{j=1}^p \lambda_j.
\end{eqnarray*}
Then  $R^2$ is represented as 
\begin{equation} \label{eq_rsquare linear regression model}
R^2  = \frac{\text{var}({{\bm{x}}}^T\bm{\beta})}{\text{var}({ {\bm{x}}}^T\bm{\beta}) + \sigma^2}  =  \frac{\sigma^2 \sum\limits_{j=1}^p \lambda_j}{\sigma^2 \sum\limits_{j=1}^p \lambda_j  + \sigma^2} 
= \frac{  \sum\limits_{j=1}^p \lambda_j}{  \sum\limits_{j=1}^p \lambda_j  + 1}  \equiv \frac{W}{W+1}, 
\end{equation}
where $W \equiv \sum_{j=1}^p \lambda_j $ is the sum of the prior variances scaled by $\sigma^2$.

Similarly as conditional $R^2$ prior, we   also assume  $R^2\sim \text{Beta}(a,b)$,  a Beta distribution with shape parameters $a$ and $b$. Then in this case, 
the induced prior density   for  $ W =  {R^2}/(1-R^2)$  is a Beta Prime distribution \citep{johnson1995continuous} denoted as BP$(a,b)$,  with probability density function 
\begin{equation*}\label{eq_varienceterm's pdf}
\pi_W(x)=  \frac{\Gamma(a+b)}{\Gamma(a)\Gamma(b)}\frac{x^{a-1}}{(1+x)^{a+b}}, \ ( x> 0). 
\end{equation*}
Therefore $W\sim \text{BP}(a,b)  $ is equivalent to  $R^2\sim\text{Beta}(a,b)$.  The following section will induce a prior on $ \bm{\beta}$ based on the distribution of  the sum of prior variances,  $W$.

Any prior of the form  $\text{E}(\bm \beta) = 0$, $\text{cov}( \bm\beta) =  \sigma^2\Lambdabf$ and $ W = \sum_{j=1}^p \lambda_j \sim \text{BP}(a,b)$ induces a Beta$(a,b)$ prior on $R^2$.  To construct a prior with such properties,   we follow the global-local prior framework and express $\lambda_j =  \phi_j\omega$  with  $\sum_{j=1}^p \phi_j = 1$.  Then $W = \sum_{j=1}^p    \phi_j\omega= \omega$ is the total prior variability, and $\phi_j$ is the proportion of total variance allocated to the $j$-th covariate. It is natural to assume that  $\omega\sim\text{BP}(a,b)$ and the  variances across covariates have a Dirichlet prior with concentration parameter  $(a_\pi,\cdots,a_\pi)$, i.e.,  $ {\phi} = (\phi_1,\cdots, \phi_p)\sim\text{Dir}(a_\pi,\cdots,a_\pi)$. 
Since  
$\sum_{j=1}^p\phi_j = 1$,  
$
\text{E} (\phi_j) = 1/p$, and $ \text{var}(\phi_j) = (p-1)/\{p^2(pa_\pi+1)\}$, 
then smaller $a_\pi$ would lead to  larger variance of ${\phi}_j$,  $j=1,\cdots,p$, thus more  ${\phi}_j $ would be  close to zero with  only a small proportion  of larger components;  
while larger $a_\pi$  would lead to smaller variance of  ${\phi}_j$,  $j=1,\cdots,p$, thus producing a  more 
uniform $\bm{\phi}$, i.e., $\bm{\phi} \approx (1/p, \cdots, 1/p)$. So $a_\pi$ controls the sparsity of the model.

To fully define the global-local prior, we   further  need to  assign a kernel   distribution 
on  each dimension of $ \bm{\beta}$.   
Since the Laplace  distribution ensures more mass around zero and heavier tails than the normal kernel,   we consider a Laplace prior on $\beta_j$ for $j=1,\cdots,p$. 
%
The prior is then  summarized as 
\begin{equation}\label{eq_R2-D2 prior_dirichlet BP form}
\beta_j  \mid \sigma^2,   \phi_j, \omega \sim \text{DE}( \sigma (\phi_j \omega/2)^{1/2} ), \ 
{\phi} \sim \text{Dir}(a_\pi,\cdots,a_\pi), \ 
\omega \sim \text{BP}(a,b), 
\end{equation}
where 
$\text{DE}(\delta)$  denotes  a double-exponential distribution (i.e., Laplace distribution) with mean $0$ and   variance $2\delta^2$.  
Such prior is induced by a  prior on $R^2$   and the total prior  variance of  $ \bm{\beta}$ is 
decomposed through a Dirichlet prior, 
therefore we refer to the prior as  the  $R^2$-induced Dirichlet Decomposition (R2-D2) prior. 
Here $\omega$ controls the global shrinkage degree through $a$ and $b$, 
while $\phi_j$ controls the local shrinkage through $a_\pi$.  Assume the variance $\sigma^2\sim \text{Inverse-Gamma} (a_1,b_1)$,  an inverse Gamma distribution with shape and scale parameters $a_1$ and $b_1$ respectively.  

\begin{proposition}\label{proposition_1}
	If   $\omega\mid\xi \sim \text{Ga}(a, \xi) $ and $\xi\sim \text{Ga}(b, 1)$, then 
	$\omega\sim\text{BP}(a,b)$, where $\text{Ga}(\mu, \nu)$ is the Gamma random variable with shape $\mu$ and rate $\nu$. 
\end{proposition}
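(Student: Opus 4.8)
The plan is to verify the claim directly by writing the joint density of $(\omega,\xi)$ and integrating out $\xi$, then matching the result to the BP$(a,b)$ density displayed just before the proposition.

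First I would record the two ingredients in the stated parametrization: since $\omega\mid\xi\sim\text{Ga}(a,\xi)$ has shape $a$ and rate $\xi$, its conditional density is $\pi(\omega\mid\xi)=\xi^a\omega^{a-1}e^{-\xi\omega}/\Gamma(a)$ for $\omega>0$, and since $\xi\sim\text{Ga}(b,1)$ its density is $\pi(\xi)=\xi^{b-1}e^{-\xi}/\Gamma(b)$ for $\xi>0$. Multiplying gives the joint density
\begin{equation*}
\pi(\omega,\xi)=\frac{\omega^{a-1}}{\Gamma(a)\Gamma(b)}\,\xi^{a+b-1}\,e^{-\xi(1+\omega)},\qquad \omega>0,\ \xi>0.
\end{equation*}

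Next I would marginalize over $\xi$, pulling the factors not depending on $\xi$ outside the integral and applying the elementary Gamma identity $\int_0^\infty \xi^{c-1}e^{-s\xi}\,d\xi=\Gamma(c)/s^{c}$ with $c=a+b$ and $s=1+\omega$. This yields
\begin{equation*}
\pi_\omega(\omega)=\frac{\omega^{a-1}}{\Gamma(a)\Gamma(b)}\cdot\frac{\Gamma(a+b)}{(1+\omega)^{a+b}}=\frac{\Gamma(a+b)}{\Gamma(a)\Gamma(b)}\frac{\omega^{a-1}}{(1+\omega)^{a+b}},\qquad \omega>0,
\end{equation*}
which is exactly the BP$(a,b)$ density stated in Section~\ref{Section_motivation R2-D2}. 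Hence $\omega\sim\text{BP}(a,b)$, completing the proof.

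There is no real obstacle here; the argument is a one-line conjugacy computation. The only point requiring mild care is bookkeeping of the Gamma parametrization (shape/rate versus shape/scale), so that the factor $e^{-\xi\omega}$ in the conditional has coefficient $1$ on $\xi$ and the $\xi$-integral indeed produces $(1+\omega)^{-(a+b)}$; I would state the parametrization explicitly at the outset to avoid ambiguity. One could alternatively note that this is the standard Gamma–Gamma mixture representation of the Beta prime (equivalently, that $R^2=\omega/(1+\omega)\sim\text{Beta}(a,b)$), but the direct integration above is self-contained and shortest.
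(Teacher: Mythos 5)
Your proof is correct and matches the paper's own argument: both write $\pi(\omega)=\int_0^\infty\pi(\omega\mid\xi)\pi(\xi)\,d\xi$ and evaluate the $\xi$-integral with the Gamma identity to recover the $\text{BP}(a,b)$ density. Your explicit note on the shape/rate parametrization is a sensible addition but does not change the substance.
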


By applying above Proposition \ref{proposition_1},  the prior in (\ref{eq_R2-D2 prior_dirichlet BP form}) can also be written as 
\[
\beta_j\mid\sigma^2,     \phi_j, \omega  \sim    \text{DE}( \sigma (\phi_j \omega/2)^{1/2} ),  \ 
{\phi} \sim \text{Dir}(a_\pi,\cdots,a_\pi), \ 
\omega\mid\xi \sim \text{Ga}(a, \xi), \  \xi\sim \text{Ga}(b, 1). 
\]

\begin{proposition} \label{proposition_tauphi_j}
	If $\omega \mid \xi \sim \text{Ga}(a, \xi)$,  $(\phi_1,\cdots, \phi_p)\sim\text{Dir}(a_\pi,\cdots,a_\pi)$, and  $a = pa_\pi$, then    it follows  $ \phi_j  \omega \mid \xi \sim \text{Ga}(a_\pi,  \xi)$, $j=1,\dots, p$  independently.  
\end{proposition}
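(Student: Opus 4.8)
The plan is to exploit the classical connection between the Dirichlet and the gamma distributions. First I would recall the standard representation: if $G_1,\ldots,G_p \iidsim \text{Ga}(a_\pi,\xi)$ and $S=\sum_{k=1}^p G_k$, then $S\sim\text{Ga}(pa_\pi,\xi)$, the normalized vector $(G_1/S,\ldots,G_p/S)\sim\text{Dir}(a_\pi,\ldots,a_\pi)$, and, crucially, this normalized vector is independent of $S$. Under the hypothesis $a=pa_\pi$, the pair $(\bm\phi,S)$ arising from this construction has exactly the same joint law as $(\bm\phi,\omega)$ in the proposition, namely a $\text{Dir}(a_\pi,\ldots,a_\pi)$ vector independent of a $\text{Ga}(a,\xi)$ scalar. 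Consequently the joint law of $(\phi_1\omega,\ldots,\phi_p\omega)$ equals that of $(G_1/S\cdot S,\ldots,G_p/S\cdot S)=(G_1,\ldots,G_p)$, which is precisely a vector of independent $\text{Ga}(a_\pi,\xi)$ variables; that is the claim.

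For a self-contained argument I would instead do a direct change of variables. Take $(\phi_1,\ldots,\phi_{p-1},\omega)$ as the free coordinates (with $\phi_p=1-\sum_{j<p}\phi_j$) and map them to $\psi_j=\phi_j\omega$, $j=1,\ldots,p$. The transformation has a block-structured Jacobian whose determinant is $\omega^{p-1}$. Substituting $\phi_j=\psi_j/\omega$ and $\omega=\sum_j\psi_j$ into the product of the Dirichlet density and the $\text{Ga}(a,\xi)$ density and collecting the powers of $\omega$, the net exponent of $\omega$ comes out to $a-pa_\pi$, which vanishes exactly because $a=pa_\pi$; at the same time $\Gamma(a)=\Gamma(pa_\pi)$ cancels the Dirichlet normalizing constant and $\xi^a=(\xi^{a_\pi})^p$. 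The joint density of $(\psi_1,\ldots,\psi_p)$ therefore factors as $\prod_{j=1}^p \frac{\xi^{a_\pi}}{\Gamma(a_\pi)}\psi_j^{a_\pi-1}e^{-\xi\psi_j}$, the product of $p$ independent $\text{Ga}(a_\pi,\xi)$ densities, which again gives the result.

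The only obstacle here is bookkeeping rather than ideas: correctly evaluating the Jacobian determinant and checking that all $\omega$-powers and gamma constants cancel under $a=pa_\pi$. Since the first approach sidesteps the Jacobian entirely by invoking the standard gamma representation of the Dirichlet (itself a one-line consequence of the same computation), I would present the short proof through that representation and either relegate the explicit density calculation to a remark or omit it.
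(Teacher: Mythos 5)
Your argument is correct and is essentially the paper's proof: the paper simply cites Lemma IV.3 of \cite{zhou2015negative}, which is exactly the gamma--Dirichlet normalization fact you invoke (independent $\text{Ga}(a_\pi,\xi)$ variables normalized by their $\text{Ga}(pa_\pi,\xi)$ sum give a Dirichlet vector independent of that sum, so multiplying back recovers independent gammas). Your second, change-of-variables route with Jacobian $\omega^{p-1}$ is just a self-contained verification of that same lemma, so nothing is missing.
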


Thus,  by Proposition  \ref{proposition_tauphi_j},   when $a=pa_\pi$,  prior in  (\ref{eq_R2-D2 prior_dirichlet BP form}) can also be written as 
\begin{equation}   \label{eq_R2-D2 prior, linear model, simple representation}
\beta_j\mid\sigma^2,   \lambda_j    \sim   \text{DE}(\sigma (\lambda_j/2)^{1/2}),  \ 
\lambda_j \sim \text{BP}(a_\pi, b). 
\end{equation}  or  equivalently 
\begin{equation}   \label{eq_R2-D2 prior, linear model, simple representation1}
\beta_j\mid\sigma^2,   \lambda_j    \sim   \text{DE}(\sigma (\lambda_j/2)^{1/2}),  \ 
\lambda_j  \mid \xi \sim \text{Ga}(a_\pi, \xi), \ 
\xi \sim \text{Ga}(b,1) . 
\end{equation} 
We   set  $a=p a_\pi$ in  the rest of the paper for the R2-D2 prior.

\subsection{Properties of the R2-D2 Prior}\label{properties.sec}
In this section,  the marginal density as well as its theoretical properties of  the proposed  R2-D2  prior    are established.  The properties of the Horseshoe   \citep{carvalho2009handling,carvalho2010horseshoe}, Horseshoe+   \citep{bhadra2016horseshoe+}, generalized double Pareto prior   \citep{armagan2013generalized} and Dirichlet-Laplace  prior \citep{bhattacharya2015dirichlet}    are provided   as a comparison. 
Proofs and technical details are given in the Appendix. 

For simplicity of comparison across different priors, the  variance term $\sigma^2$ is fixed at  $1$.

\begin{proposition} \label{proposition_marginal density}
	Given the  R2-D2  prior (\ref{eq_R2-D2 prior, linear model, simple representation}), the marginal density of $\beta_j$ for any $j=1,\cdots, p$ is  
	\begin{equation*} \label{eq_meijerG marginal density}
	\pi_{\text{R2-D2}}(\beta_j) =  \frac{	 G^{3,1}_{1,3}\left( \frac{\beta_j^2}{2} \left\vert_{a_\pi-\frac{1}{2},0,\frac{1}{2}}^{\frac{1}{2}-b}\right.\right)}{  {(2\pi)^{1/2}}\Gamma(a_\pi)\Gamma(b) } 
	=  \frac{ G^{1,3}_{3,1}\left( \frac{2}{\beta_j^2} \left\vert_{ \frac{1}{2}+b }^{\frac{3}{2} -  a_\pi,1,\frac{1}{2}}\right.
		\right)  }{{(2 \pi)^{1/2}}\Gamma(a_\pi)\Gamma(b) } 
	\end{equation*} where $\Gamma(\cdot)$ denotes the Gamma function and 
	  $ G^{m,n}_{p,q}\left(z \left\vert_{a_1, \dots, a_p}^{b_1,\dots, b_q} \right.  \right)$ denotes the Meijer G-function (see Appendix for the   detailed definition). 
\end{proposition}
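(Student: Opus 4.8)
The plan is to reduce to a single coordinate (taking $\sigma^2=1$ and $a=pa_\pi$, so that $\lambda_j\sim\text{BP}(a_\pi,b)$ by Propositions \ref{proposition_1}--\ref{proposition_tauphi_j}) and to work from the scale‑mixture‑of‑normals form (\ref{eq_R2-D2 prior, linear model, normal representation}), under which $\beta_j=(\psi_j\lambda_j/2)^{1/2}Z$ with $Z\sim N(0,1)$, $\psi_j\sim\text{Exp}(1/2)$, $\lambda_j\sim\text{BP}(a_\pi,b)$ mutually independent. The marginal density of $\beta_j$ is recovered from its Mellin transform, i.e.\ from its fractional absolute moments, and the point is that this Mellin transform factorizes over the three independent ingredients into an elementary ratio of Gamma functions, which is precisely the form of a Mellin--Barnes integrand for a Meijer $G$-function; Mellin inversion then returns the density in closed form.

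Concretely, the steps are: (i) For $\Re s$ in a suitable vertical strip, compute $\mathcal{M}[\pi_{\text{R2-D2}}](s)=\tfrac12 E|\beta_j|^{s-1}=\tfrac12\,2^{-(s-1)/2}\,E|Z|^{s-1}\,E\psi_j^{(s-1)/2}\,E\lambda_j^{(s-1)/2}$, using the Gaussian absolute moment $E|Z|^r=2^{r/2}\Gamma(\tfrac{r+1}{2})/\sqrt{\pi}$, the exponential moment $E\psi_j^r=2^{r}\Gamma(1+r)$, and the Beta‑prime moment $E\lambda_j^r=\Gamma(a_\pi+r)\Gamma(b-r)/\{\Gamma(a_\pi)\Gamma(b)\}$ (finite only for $-a_\pi<r<b$). (ii) Apply the Legendre duplication formula to collapse $\Gamma(\tfrac s2)\Gamma(\tfrac{s+1}{2})$ from the Gaussian/exponential factors into $2^{1-s}\sqrt{\pi}\,\Gamma(s)$; after the substitution $t=\beta_j^2/2$ this leaves $\mathcal{M}$ equal to a constant times $\Gamma(a_\pi-\tfrac12+\tfrac s2)\Gamma(\tfrac s2)\Gamma(\tfrac12+\tfrac s2)\Gamma(\tfrac12+b-\tfrac s2)$, which is exactly $\int_0^\infty t^{\sigma-1}G^{31}_{13}(t\mid\cdots)\,dt$ (with $\sigma=s/2$) for the Meijer $G$-function with lower parameters $a_\pi-\tfrac12,\,0,\,\tfrac12$ and upper parameter $\tfrac12-b$ — the entries $0$ and $\tfrac12$ in the lower list being the images of the Gaussian's half‑integer shift. (iii) Invert the Mellin transform along a Bromwich contour placed in the strip separating the poles of $\Gamma(\tfrac12+b-\tfrac s2)$ from those of $\Gamma(a_\pi-\tfrac12+\tfrac s2)$, $\Gamma(\tfrac s2)$, $\Gamma(\tfrac12+\tfrac s2)$; this yields the first displayed expression. (iv) The second expression follows from the first by the Meijer $G$ reflection identity $G^{m,n}_{p,q}(z\mid{}_{\mathbf b}^{\mathbf a})=G^{n,m}_{q,p}(z^{-1}\mid{}_{1-\mathbf a}^{1-\mathbf b})$ applied with $z=\beta_j^2/2$, which maps $(a_\pi-\tfrac12,0,\tfrac12;\,\tfrac12-b)$ to $(\tfrac32-a_\pi,1,\tfrac12;\,\tfrac12+b)$.

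The main obstacle is bookkeeping and rigor rather than any single hard estimate: getting the Gamma‑parameter lists, the powers of $2$, and the argument scaling ($\beta_j^2/2$ versus $2/\beta_j^2$) exactly right after the duplication step, and justifying the Mellin inversion — verifying that the moment integral converges absolutely on a non‑empty vertical strip (non‑empty because $b>0$), that the Bromwich contour can be chosen to split the two families of poles, and that the resulting contour integral genuinely represents the density and not merely a function with the same moments. As an independent check one can bypass the Mellin inversion entirely by substituting the Mellin--Barnes representation of $(1+\lambda)^{-(a_\pi+b)}$ into the one‑dimensional integral $\pi_{\text{R2-D2}}(\beta_j)=\int_0^\infty \mathrm{DE}\big(\beta_j;(\lambda/2)^{1/2}\big)\,\pi_{\text{BP}(a_\pi,b)}(\lambda)\,d\lambda$ coming from (\ref{eq_R2-D2 prior, linear model, simple representation}) and evaluating the inner $\lambda$-integral via the standard closed form for $\int_0^\infty e^{-c/\sqrt{\lambda}}\lambda^{\rho-1}(1+\lambda)^{-\nu}\,d\lambda$ in terms of a Meijer $G$-function, then reconciling the two parametrizations.
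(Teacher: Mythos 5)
Your proposal is correct: I checked the Mellin computation and the parameter/constant bookkeeping, and it does come out right. With $\beta_j=(\psi_j\lambda_j/2)^{1/2}Z$ one gets $\tfrac12 E|\beta_j|^{s-1}=\frac{2^{(s-1)/2}}{2\sqrt{\pi}\,\Gamma(a_\pi)\Gamma(b)}\Gamma(\tfrac s2)\Gamma(\tfrac{s+1}2)\Gamma(a_\pi-\tfrac12+\tfrac s2)\Gamma(b+\tfrac12-\tfrac s2)$ on the nonempty strip $\max(0,1-2a_\pi)<s<1+2b$, and after the change of variable $t=\beta_j^2/2$ this is exactly the Mellin transform of $\frac{1}{(2\pi)^{1/2}\Gamma(a_\pi)\Gamma(b)}G^{31}_{13}\bigl(t\,\big\vert{}_{a_\pi-\frac12,0,\frac12}^{\frac12-b}\bigr)$, so Mellin inversion (the contour separating the poles of $\Gamma(\tfrac12+b-\sigma)$ from the other three families, with exponential decay of the integrand guaranteeing absolute convergence) gives the first display, and the standard $z\mapsto 1/z$ identity gives the second. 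This is, however, a genuinely different route from the paper's: the paper stays with the one-dimensional mixture $\int_0^\infty \mathrm{DE}(\beta;(\lambda/2)^{1/2})\,\pi_{\mathrm{BP}}(\lambda)\,d\lambda$, substitutes $x=(2/\lambda)^{1/2}$ to reach $\int_0^\infty e^{-|\beta|x}x^{2b}(x^2+2)^{-(a_\pi+b)}dx$, and evaluates this by the tabulated formula 3.389.2 of Gradshteyn--Ryzhik, then applies the same reflection identity (DLMF 16.19.1); your "independent check" at the end is essentially this route. What your approach buys is a self-contained derivation from the scale-mixture structure (the three Gamma factors are read off from the normal, exponential and Beta-prime moments) rather than a table citation, at the cost of having to justify the Mellin inversion; what the paper's buys is brevity, and as a side benefit its intermediate Laplace-transform representation is reused later (Watson's lemma in Theorem 1 and the consistency proof), so if you adopted the Mellin route you would still want that integral form on record. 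Two minor remarks: the Legendre-duplication step in your item (ii) is superfluous, since the Meijer $G^{31}_{13}$ Mellin kernel already wants $\Gamma(\tfrac s2)\Gamma(\tfrac12+\tfrac s2)$ separately; and to conclude that the inverse Mellin transform is the density itself (not just a function with the same transform) it suffices to note that both are continuous on $(0,\infty)$ and integrable, so Mellin uniqueness applies.
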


Now we would like to compare the theoretical properties with our proposed R2-D2 prior with a couple of common global-local shrinkage priors. We first listed these priors. 

The Horseshoe prior proposed in  \cite{carvalho2009handling,carvalho2010horseshoe} is 
\[
\beta_j \mid\lambda_j \sim N(0, \lambda^2_j),  \text{   } \lambda_j \mid\tau  \sim C^+(0,\tau), 
\]  
where $C^+(0, \tau)$ denotes a half-Cauchy distribution with scale parameter $\tau$, with density 
$\pi(y\mid\tau) = 2/\{\pi\tau (1+(y/\tau)^2)\}$.
 
The Horseshoe+ prior proposed in  \cite{bhadra2016horseshoe+} is 
\[
\beta_j \mid\lambda_j \sim N(0, \lambda^2_j),  \    \lambda_j \mid\tau , \eta_j \sim C^+(0,\tau\eta_j),  \text{   }
\eta_j\sim C^+ (0,1). 
\]
 
The Dirichlet-Laplace prior proposed in \cite{bhattacharya2015dirichlet} is 
\begin{equation}   \label{eq_dl prior}
\beta_j\mid  \psi_j  \sim  \text{DE}(\psi_j),   \ 
\psi_j \sim \text{Ga}(a^\ast,1/2). 
\end{equation}
 
	The normal-beta prime prior proposed in  \cite{bai2019large} is as follows: 
\begin{equation}\label{eq_normal beta prime}
\beta_j \mid \tau_j \sim N(0, \tau_j), \ 
\tau_j \sim \text{BP}(a^\#,b^\#). 
\end{equation}

%
%

\begin{figure}[h!] 
	    \spacingset{1.1}
	\centering
	\includegraphics[scale=0.68]{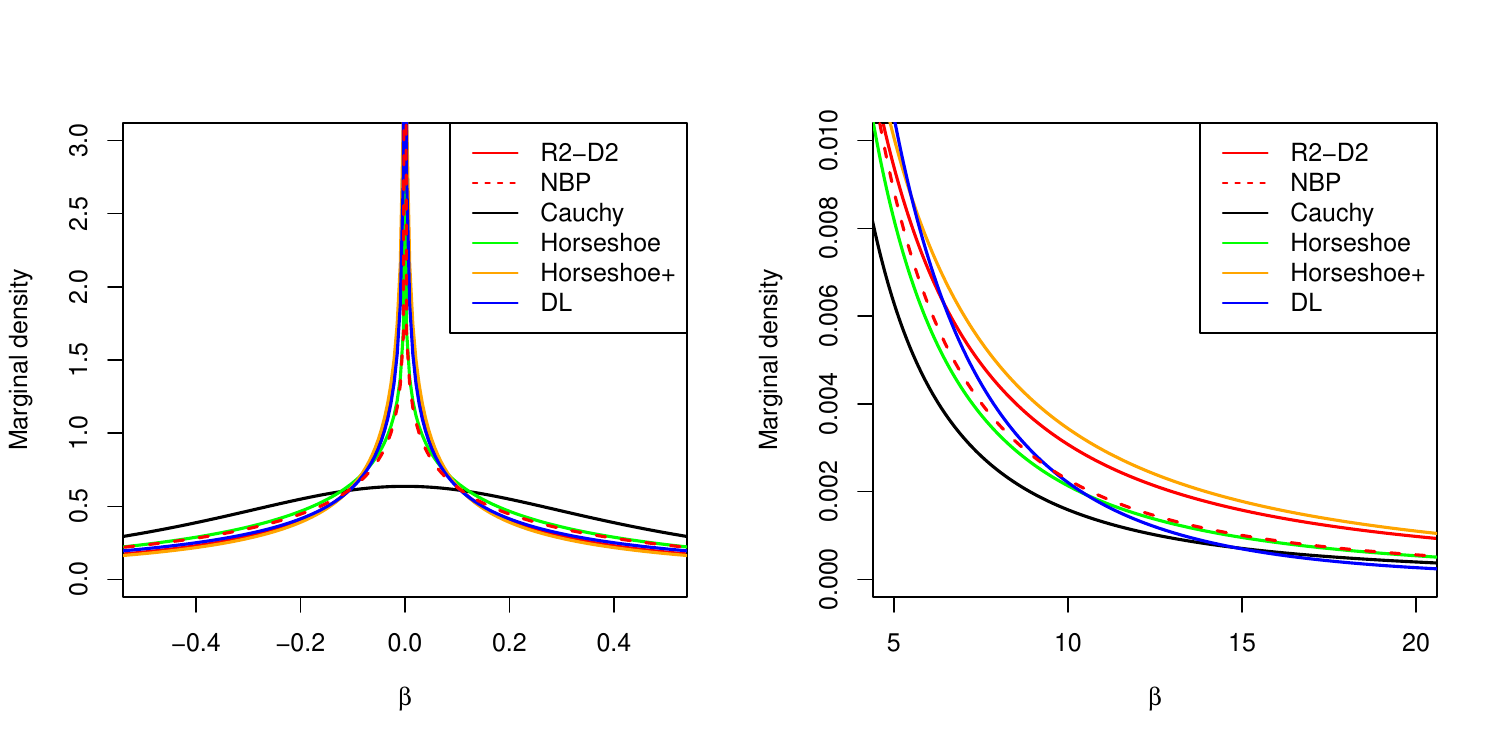}
	\caption{Marginal density of the  R2-D2 (R2-D2),  normal-beta prime (NBP),  Dirichlet-Laplace (DL),  Horseshoe, Horseshoe+ prior and  Cauchy distribution.  
		In all cases, the  hyper-parameters are selected  to ensure  the  interquartile range  is 1 for visual comparison.   }\label{figure_density function}
\end{figure}

Figure \ref{figure_density function} plots   the marginal density function of the  R2-D2 density along with   the  normal-beta prime, Horseshoe, Horseshoe+,  Dirichlet-Laplace,  and Cauchy distributions.   In the figure,  for   visual comparison, the hyper-parameter  in the priors, i.e.,  $\tau$ in  Horseshoe and Horseshoe+ prior, $a^\ast$ in Dirichlet-Laplace prior, $(a_\pi, b)$ in  the  R2-D2 prior, 
are   selected   to ensure the  interquartile range is 1.    Note that to make the plots comparable,   $a_\pi$ in the proposed R2-D2 prior  is set as $a^\ast/2$, which is half of the hyper-parameter  in the  Dirichlet-Laplace prior.
 It will be shown later that this results in the same behavior around the origin for the two priors. 
The other hyper-parameter  $b$   in the R2-D2  prior is then  tuned to ensure the interquartile range   of 1 to match the others.   For the normal-beta prime prior, we follow the values in \cite{bai2019large}, i.e.,  $a^\#=0.48$ and $b^\#=0.52$ which also ensures an interquartile range of 1.

From the plot, it appears that the R2-D2 prior can obtain both the highest concentration around zero and heaviest tail simultaneously. We will quantify these rates exactly in the next subsection, in Table 1. In particular, we will see that the R2-D2 prior is the only one obtaining polynomial behavior in both regions.

In the normal means model,  \cite{van2014horseshoe} and \cite{van2017adaptive} investigate the  Horseshoe  posterior contraction rate,   \cite{bhattacharya2015dirichlet} shows the optimal  posterior concentration results for Dirichlet-Laplace prior, \cite{bhadra2016horseshoe+} proves that the Horseshoe+ posterior concentrates at a faster rate than Horseshoe in the Kullback-Leibler sense, and \cite{bai2019large} shows that normal-beta prime prior leads to a near minimax posterior concentration rate.     

In this paper, we examine the concentration around zero and tail behaviors of the marginal densities of a number of priors, and show that our proposed approach simultaneously achieves high concentration at the origin and heavy tails. We will also study the posterior consistency and contraction properties  in the high-dimensional regression model setup. As shown in  Figure \ref{figure_density function},   all five global-local shrinkage priors have a marginal density with a singularity at zero while with different concentration rate.  Except for the Dirichlet-Laplace prior, all other priors' marginal density have a heavier tail than the Cauchy distribution. We formally investigate their marginal densities' properties in the following sections.   

\subsubsection{Asymptotic tail  behaviors}\label{Section_tail properties}
We  examine the behavior of the tails of the proposed   R2-D2 prior in this section.  A prior with heavy tails is desirable in high-dimensional regression to allow the posterior to estimate  large values   for important predictors.  
\begin{theorem}\label{theorem_tail properties}
	Given  $|\beta|\rightarrow\infty$, for any  $a_\pi>0$ and $b>0$, the marginal density of the  R2-D2 prior  (\ref{eq_R2-D2 prior, linear model, simple representation})
	satisfies 
	$\pi_{\text{R2-D2}}(\beta) = O({1}/{|\beta|^{2b+1} }) $. 
	Furthermore,  when $0<b<1/2$, $\lim_{|\beta|\rightarrow\infty} {\pi_{\text{R2-D2}}(\beta)}/\beta^{-2} = \infty $, i.e., the  R2-D2 prior has heavier tails than the Cauchy distribution. 
\end{theorem}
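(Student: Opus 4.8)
The plan is to work directly from the scale-mixture form of the prior. Starting from representation (\ref{eq_R2-D2 prior, linear model, simple representation}), writing out the double-exponential kernel density and the Beta-prime density of $\lambda$ and collecting powers of $\lambda$, one gets, for every $\beta\neq 0$,
\[
\pi_{\text{R2-D2}}(\beta)\;=\;\frac{\Gamma(a_\pi+b)}{\sqrt{2}\,\Gamma(a_\pi)\Gamma(b)}\int_0^\infty \lambda^{-\frac32-b}\Big(1+\tfrac1\lambda\Big)^{-(a_\pi+b)}e^{-|\beta|\sqrt{2/\lambda}}\,d\lambda ,
\]
a convergent integral (the integrand is $O(\lambda^{-3/2-b})$ at $\infty$ since $\tfrac32+b>1$, and is killed by the exponential as $\lambda\to0^+$). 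The key move is the substitution $u=|\beta|\sqrt{2/\lambda}$, i.e.\ $\lambda=2\beta^2/u^2$, which factors out the entire $|\beta|$-dependence as an explicit power:
\[
\pi_{\text{R2-D2}}(\beta)\;=\;\frac{\Gamma(a_\pi+b)}{2^{b}\,\Gamma(a_\pi)\Gamma(b)}\,|\beta|^{-(2b+1)}\int_0^\infty u^{2b}e^{-u}\Big(1+\tfrac{u^2}{2\beta^2}\Big)^{-(a_\pi+b)}du .
\]

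It then remains only to evaluate the limit of the last integral. For each fixed $u>0$ the factor $(1+u^2/(2\beta^2))^{-(a_\pi+b)}$ is bounded above by $1$ and converges to $1$ as $|\beta|\to\infty$, and $u^{2b}e^{-u}$ is integrable on $(0,\infty)$; so dominated convergence gives $\int_0^\infty u^{2b}e^{-u}(1+u^2/(2\beta^2))^{-(a_\pi+b)}\,du\to\Gamma(2b+1)$. Hence $\pi_{\text{R2-D2}}(\beta)\sim c_{a_\pi,b}\,|\beta|^{-(2b+1)}$ with $c_{a_\pi,b}=\Gamma(a_\pi+b)\Gamma(2b+1)/\{2^{b}\Gamma(a_\pi)\Gamma(b)\}>0$, which in particular yields $\pi_{\text{R2-D2}}(\beta)=O(1/|\beta|^{2b+1})$. (Equivalently, one can read the same conclusion off Proposition~\ref{proposition_marginal density}: as the argument $2/\beta^2\to0$, the Mellin--Barnes representation of $G^{13}_{31}$ is dominated by the rightmost pole of $\Gamma(\tfrac12+b-s)$, namely $s=\tfrac12+b$, which is simple with nonzero residue for all $a_\pi,b>0$ and contributes the term of order $(2/\beta^2)^{1/2+b}$.) For the Cauchy comparison, apply the equivalence just obtained: when $0<b<1/2$,
\[
\frac{\pi_{\text{R2-D2}}(\beta)}{\beta^{-2}}\;\sim\;c_{a_\pi,b}\,|\beta|^{\,2-(2b+1)}\;=\;c_{a_\pi,b}\,|\beta|^{\,1-2b}\;\longrightarrow\;\infty,
\]
since $c_{a_\pi,b}>0$ and $1-2b>0$.

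The argument is otherwise routine; the only points requiring care are (i) justifying that the original $\lambda$-integral is finite so that the change of variables and the interchange of limit and integral are legitimate — the delicate end is $\lambda\to0^+$, where the factor $(1+1/\lambda)^{-(a_\pi+b)}\sim\lambda^{a_\pi+b}$ combines to $\lambda^{a_\pi-3/2}$, controlled by the exponential $e^{-|\beta|\sqrt{2/\lambda}}$ for any $|\beta|>0$; and (ii) verifying the domination hypothesis, which here is immediate since the extra Beta-prime-tail factor is uniformly $\le 1$. If one instead follows the Meijer-G route, the analogous obstacle is justifying the term-by-term asymptotic expansion of the $G$-function near the origin and confirming that no logarithmic correction arises (the relevant pole being simple for every $a_\pi,b>0$).
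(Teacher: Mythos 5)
Your argument is correct, and it reaches the same leading-order asymptotics (with the same constant) as the paper, but by a more elementary route. The paper first reduces the marginal to the Laplace-transform integral $\pi_{\text{R2-D2}}(\beta)=C^\ast\int_0^\infty e^{-|\beta|x}\,x^{2b}(x^2+2)^{-(a_\pi+b)}\,dx$ (your first display becomes exactly this under $x=\sqrt{2/\lambda}$) and then invokes Watson's Lemma to obtain a full asymptotic expansion in powers of $1/|\beta|$, whose leading term is $C^\ast 2^{-a_\pi-b}\Gamma(2b+1)|\beta|^{-(2b+1)}$; the heavier-than-Cauchy limit is read off from that expansion. You instead rescale via $u=|\beta|\sqrt{2/\lambda}$ to factor out $|\beta|^{-(2b+1)}$ exactly and finish with dominated convergence, which gives the sharp equivalence $\pi_{\text{R2-D2}}(\beta)\sim c_{a_\pi,b}|\beta|^{-(2b+1)}$ without appealing to Watson's Lemma; your domination and convergence checks (the Beta-prime factor bounded by $1$, integrability at both ends of the $\lambda$-integral) are the right ones and are adequately justified. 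What you lose relative to the paper is only the higher-order terms of the expansion, which the paper reuses later (its two-term expansion is cited in the posterior-consistency proof to lower-bound the marginal density at a growing argument), so your proof suffices for this theorem but would not by itself supply that later ingredient. The parenthetical Meijer-G remark is a harmless aside: the small-argument behavior $z^{1/2+b}$ of $G^{13}_{31}$ indeed reproduces the $|\beta|^{-(2b+1)}$ order, though the main argument does not depend on it.
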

With a polynomial tail heavier than Cauchy distribution, the proposed R2-D2  prior attains a substantial improvement over a large class of global-local shrinkage priors.  

As a comparison, we   study the tail behavior of the Dirichlet-Laplace   and double Pareto prior.  The  density of generalized double Pareto prior  proposed in \cite{armagan2013generalized} is 
\[
\pi_{\text{GDP}}  ( \beta_j \mid \eta, \alpha )= ( 1+ |\beta_j| / \eta) ^ {_-(\alpha + 1)}/  (2\eta/\alpha) , \ (\alpha, \eta >0). 
\]

\begin{theorem} \label{theorem_tail properties Pareto}
	Given  $|\beta| \rightarrow \infty$,  for any $\alpha>0$, 
	the marginal density of the   generalized double Pareto prior    satisfies  $\pi_{\text{GDP}}(\beta)  = O( 1/ {|\beta|^{\alpha + 1}})  $.  Furthermore,  when $\alpha < 1$, 
	$\lim_{|\beta|\rightarrow\infty} {\pi_{\text{GDP}}(\beta) }/{\beta^{-2} } = \infty $, i.e., the double Pareto  prior has heavier tails than the Cauchy distribution.  
\end{theorem}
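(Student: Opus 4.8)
The plan is to work directly with the closed-form density, since—unlike the R2-D2 case with its Meijer $G$-function—there is no integral to unravel. Writing the stated density as $\pi_{\text{GDP}}(\beta)=\frac{\alpha}{2\eta}\bigl(1+|\beta|/\eta\bigr)^{-(\alpha+1)}$, the first step is to isolate the leading power of $|\beta|$: factor $1+|\beta|/\eta=(|\beta|/\eta)\bigl(1+\eta/|\beta|\bigr)$, so that
\[
\pi_{\text{GDP}}(\beta)=\frac{\alpha}{2\eta}\left(\frac{|\beta|}{\eta}\right)^{-(\alpha+1)}\left(1+\frac{\eta}{|\beta|}\right)^{-(\alpha+1)}=\frac{\alpha\,\eta^{\alpha}}{2}\,|\beta|^{-(\alpha+1)}\left(1+\frac{\eta}{|\beta|}\right)^{-(\alpha+1)}.
\]
Since $\eta$ and $\alpha$ are fixed, the correction factor $\bigl(1+\eta/|\beta|\bigr)^{-(\alpha+1)}$ is bounded (indeed it lies in $(0,1]$ and tends to $1$) as $|\beta|\to\infty$, which immediately yields $\pi_{\text{GDP}}(\beta)=O\bigl(1/|\beta|^{\alpha+1}\bigr)$, and in fact the sharper equivalence $\pi_{\text{GDP}}(\beta)\sim \tfrac{\alpha\eta^{\alpha}}{2}|\beta|^{-(\alpha+1)}$.

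For the Cauchy comparison I would simply divide the display above by $\beta^{-2}$:
\[
\frac{\pi_{\text{GDP}}(\beta)}{\beta^{-2}}=\frac{\alpha\,\eta^{\alpha}}{2}\,|\beta|^{\,1-\alpha}\left(1+\frac{\eta}{|\beta|}\right)^{-(\alpha+1)}.
\]
When $\alpha<1$ the exponent $1-\alpha$ is strictly positive, so $|\beta|^{1-\alpha}\to\infty$ while the remaining factor converges to the positive constant $\alpha\eta^{\alpha}/2$; hence the ratio diverges, i.e.\ the GDP prior has strictly heavier tails than the Cauchy. (When $\alpha>1$ the same expression shows the opposite, and $\alpha=1$ gives a Cauchy-like tail up to a constant, though only the $\alpha<1$ case is asserted.)

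There is essentially no obstacle here: the statement is a routine consequence of the explicit density, and the only care needed is to keep the fixed constants $\eta$ and $\alpha$ outside the $O(\cdot)$ and to observe that the multiplicative correction $\bigl(1+\eta/|\beta|\bigr)^{-(\alpha+1)}$ stays bounded away from both $0$ and $\infty$ as $|\beta|\to\infty$. The role of the theorem is comparative—read alongside Theorem~\ref{theorem_tail properties}, it exhibits the GDP tail index $\alpha+1$ against the R2-D2 index $2b+1$, making precise for which hyperparameter regimes each prior is heavier-tailed than the other and than the Cauchy benchmark.
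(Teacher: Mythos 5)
Your argument is correct and is exactly the elementary reading of the tail from the closed-form density that the paper relies on: its own proof simply states the result is obvious from $\pi_{\text{GDP}}$, and your factorization $\pi_{\text{GDP}}(\beta)=\frac{\alpha\eta^{\alpha}}{2}|\beta|^{-(\alpha+1)}\bigl(1+\eta/|\beta|\bigr)^{-(\alpha+1)}$ just makes that explicit. Nothing further is needed.
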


\begin{theorem} \label{theorem_tail properties DL}
	Given  $|\beta| \rightarrow \infty$,  for any $a^\ast>0$, 
	the marginal density of the  Dirichlet-Laplace prior  as shown in (\ref{eq_dl prior})   satisfies  $\pi_{\text{DL}}(\beta)  = O( {|\beta|^{a^\ast/2-3/4}}/{  \exp\{ \sqrt{2|\beta|} \}  })  $.  Furthermore, 
	$\lim_{|\beta|\rightarrow\infty} {\pi_{\text{DL}}(\beta) }/{\beta^{-2} } = 0 $, i.e., the Dirichlet-Laplace  prior has lighter tails than the Cauchy distribution.  
\end{theorem}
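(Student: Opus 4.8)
The plan is to obtain an exact closed form for the marginal density $\pi_{\text{DL}}(\beta)$ in terms of a modified Bessel function and then invoke its large-argument asymptotics. First I would integrate out the latent scale. With the double-exponential kernel $\text{DE}(\psi)$ having density $(2\psi)^{-1}e^{-|\beta|/\psi}$ (recall the paper's convention that $\text{DE}(\delta)$ has variance $2\delta^2$) and $\psi\sim\text{Ga}(a_D,1/2)$ having density $\frac{(1/2)^{a_D}}{\Gamma(a_D)}\psi^{a_D-1}e^{-\psi/2}$, the marginal density is
\[
\pi_{\text{DL}}(\beta)=\frac{(1/2)^{a_D}}{2\Gamma(a_D)}\int_0^\infty \psi^{a_D-2}\,e^{-|\beta|/\psi-\psi/2}\,d\psi .
\]
Next I would recognize this integral via the standard identity $\int_0^\infty x^{\nu-1}e^{-u/x-vx}\,dx=2(u/v)^{\nu/2}K_\nu(2\sqrt{uv})$ with $\nu=a_D-1$, $u=|\beta|$, $v=1/2$, which gives
\[
\pi_{\text{DL}}(\beta)=\frac{(1/2)^{a_D}}{\Gamma(a_D)}\,(2|\beta|)^{(a_D-1)/2}\,K_{a_D-1}\!\big(\sqrt{2|\beta|}\big).
\]

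The second step is purely asymptotic. Using $K_\nu(z)=\sqrt{\pi/(2z)}\,e^{-z}\{1+O(1/z)\}$ as $z\to\infty$ for fixed order $\nu$, with $z=\sqrt{2|\beta|}\to\infty$, I would substitute and collect powers of $|\beta|$: the factor $(2|\beta|)^{(a_D-1)/2}$ times the $(2|\beta|)^{-1/4}$ coming from $K_{a_D-1}$ produces exponent $(a_D-1)/2-1/4=a_D/2-3/4$, and the exponential factor is $e^{-\sqrt{2|\beta|}}=\exp\{-(2|\beta|)^{1/2}\}$. This yields exactly $\pi_{\text{DL}}(\beta)=O\big(|\beta|^{a_D/2-3/4}\exp\{-(2|\beta|)^{1/2}\}\big)$ (indeed a sharp equivalence with explicit constant). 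For the Cauchy comparison, multiply by $\beta^2$: since $\beta^2|\beta|^{a_D/2-3/4}$ is polynomial in $|\beta|$ while $\exp\{-(2|\beta|)^{1/2}\}$ decays faster than any power of $|\beta|$, the ratio $\pi_{\text{DL}}(\beta)/\beta^{-2}\to 0$.

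I do not anticipate a serious obstacle; the only care needed is bookkeeping — matching the paper's parametrizations of $\text{DE}(\cdot)$ and $\text{Ga}(\cdot,\cdot)$, and noting that $a_D-2$ (hence $\nu=a_D-1$) may be negative, which is harmless since $K_\nu=K_{-\nu}$ and the integral converges at both endpoints for every $a_D>0$ once $|\beta|>0$. As an alternative to the Bessel identity one could apply Laplace's method directly to $\int_0^\infty \psi^{a_D-2}e^{-|\beta|/\psi-\psi/2}\,d\psi$: the exponent $-|\beta|/\psi-\psi/2$ is maximized at $\psi^*=\sqrt{2|\beta|}$ with value $-\sqrt{2|\beta|}$ and second derivative of order $(2|\beta|)^{-1/2}$, so the Gaussian width contributes the $(2|\beta|)^{1/4}$ factor, reproducing the same exponent $a_D/2-3/4$; this route avoids special functions but then requires controlling the Laplace remainder uniformly as $|\beta|\to\infty$.
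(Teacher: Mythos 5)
Your proposal is correct and follows essentially the same route as the paper: both rest on the Bessel-function form $\pi_{\text{DL}}(\beta)=\tfrac{1}{2^{(1+a_D)/2}\Gamma(a_D)}\,|\beta|^{(a_D-1)/2}K_{1-a_D}\bigl((2|\beta|)^{1/2}\bigr)$ of the marginal together with the large-argument asymptotic $K_\nu(z)\approx \pi^{1/2}(2z)^{-1/2}e^{-z}$, which yields the exponent $a_D/2-3/4$ and the sub-Cauchy tail exactly as you compute. The only difference is that you derive this closed form from the scale-mixture integral (your expression agrees with the paper's after simplifying constants and using $K_{a_D-1}=K_{1-a_D}$), whereas the paper simply cites it from Bhattacharya et al.\ (2015).
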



%

As noted in \cite{carvalho2010horseshoe}, the Horseshoe prior has  exact Cauchy-like tails that decay like $\beta^{-2}$, and the Horseshoe+ prior has a tail of $O({\log(|\beta|)}/{\beta^2})$  as illustrated in the proof of Theorem 4$.$6 in \cite{bhadra2016horseshoe+}. 
Therefore, the double Pareto prior and the  proposed   R2-D2  prior lead to   the heaviest tail, followed by  Horseshoe+,  then  Horseshoe, and  finally Dirichlet-Laplace prior.

\subsubsection{Concentration properties}\label{Section_concentration properties}
In this section, we study the concentration properties of the R2-D2 prior around the origin. The concentration properties of Dirichlet-Laplace, Horseshoe, and Horseshoe+ priors are also given.  We favor priors with high concentration near zero to reflect the prior that most of the covariates do not have a substantial effect on the response.   We now show that   R2-D2 prior has higher  concentration at zero to go along with   heavier tails than other global-local priors. 
\begin{theorem}\label{theorem_center properties R2-D2}
	As $|\beta|\rightarrow0$,   
	if  $0< a_\pi < {1}/{2}$ and $b>0$,  
	the marginal density of the  R2-D2  prior as shown in  (\ref{eq_R2-D2 prior, linear model, simple representation})
	satisfies 
	$\pi_\text{R2-D2}(\beta)  = O( 1/ |\beta|^{ 1- 2a_\pi})$. 
\end{theorem}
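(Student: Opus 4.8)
The plan is to work directly from the scale-mixture representation (\ref{eq_R2-D2 prior, linear model, simple representation}), which is equivalent to the Gamma form (\ref{eq_R2-D2 prior, linear model, gamma representation}) after integrating out $\xi$. Fixing $\sigma^2=1$, the $\text{DE}\big((\lambda/2)^{1/2}\big)$ kernel has density $(2\lambda)^{-1/2}\exp\!\big(-|\beta|\sqrt{2/\lambda}\big)$, so, writing $C=\Gamma(a_\pi+b)/\{\Gamma(a_\pi)\Gamma(b)\}$ and using $\lambda\sim\text{BP}(a_\pi,b)$,
\[
\pi_{\text{R2-D2}}(\beta)=C\int_0^\infty \frac{1}{\sqrt{2\lambda}}\,e^{-|\beta|\sqrt{2/\lambda}}\,\frac{\lambda^{a_\pi-1}}{(1+\lambda)^{a_\pi+b}}\,d\lambda .
\]
At $\beta=0$ the integrand is $\asymp \lambda^{a_\pi-3/2}$ near $\lambda=0$, which fails to be integrable when $a_\pi<1/2$; for $\beta\neq0$ the exponential factor cuts off the small-$\lambda$ region, and tracking this cutoff is precisely what produces the rate.

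First I would substitute $u=|\beta|\sqrt{2/\lambda}$, i.e.\ $\lambda=2\beta^2/u^2$. A routine change of variables (the powers of $|\beta|$ collecting as $|\beta|^{-1}\cdot|\beta|^{2a_\pi-2}\cdot|\beta|^{2}$, and the powers of $u$ as $u\cdot u^{2-2a_\pi}\cdot u^{-3}$) turns this into
\[
\pi_{\text{R2-D2}}(\beta)=C\,2^{a_\pi}\,|\beta|^{2a_\pi-1}\int_0^\infty u^{-2a_\pi}e^{-u}\Big(1+\tfrac{2\beta^2}{u^2}\Big)^{-(a_\pi+b)}du .
\]
The announced power $|\beta|^{2a_\pi-1}=|\beta|^{-(1-2a_\pi)}$ has fallen out of the scaling, so it remains only to show the last integral stays bounded as $|\beta|\to0$. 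For that, note $0<\big(1+2\beta^2/u^2\big)^{-(a_\pi+b)}\le 1$ for all $u>0$ and all $\beta$, so the integrand is dominated by $u^{-2a_\pi}e^{-u}$, which is integrable on $(0,\infty)$ exactly because $a_\pi<1/2$ (integrability at the origin) and $e^{-u}$ decays (integrability at infinity). Since for each fixed $u>0$ the weight $\big(1+2\beta^2/u^2\big)^{-(a_\pi+b)}\to1$ as $\beta\to0$, dominated convergence gives $\int_0^\infty u^{-2a_\pi}e^{-u}(1+2\beta^2/u^2)^{-(a_\pi+b)}du\to\Gamma(1-2a_\pi)\in(0,\infty)$. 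Hence $\pi_{\text{R2-D2}}(\beta)\sim C\,2^{a_\pi}\,\Gamma(1-2a_\pi)\,|\beta|^{-(1-2a_\pi)}$ as $|\beta|\to0$, which in particular yields the claimed $O\big(|\beta|^{-(1-2a_\pi)}\big)$ bound.

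The point that needs care is the lower endpoint $u\to0$: there the weight $\big(1+2\beta^2/u^2\big)^{-(a_\pi+b)}$ vanishes like $u^{2(a_\pi+b)}$ for each fixed $\beta\neq0$, which is why the integral is finite for every $\beta\neq0$ even though its $\beta=0$ limit $\Gamma(1-2a_\pi)$ is finite only under $a_\pi<1/2$; the uniform bound by $1$ is exactly what lets dominated convergence bridge these two facts, so I do not expect a genuine obstacle, only bookkeeping. An alternative is to read the leading term off the Mellin--Barnes / Meijer-$G$ form in Proposition \ref{proposition_marginal density}: the small-argument behavior of $G^{31}_{13}(\cdot)$ is governed by its bottom parameters $a_\pi-\tfrac12,\,0,\,\tfrac12$, and for $0<a_\pi<\tfrac12$ the dominant pole sits at $a_\pi-\tfrac12$, giving $G^{31}_{13}(\beta^2/2\mid\cdot)\asymp(\beta^2/2)^{a_\pi-1/2}$ and the same rate; this route needs the standard asymptotic expansion of Meijer-$G$ functions together with a check that no two of these bottom parameters differ by an integer (true since $a_\pi-\tfrac12\in(-\tfrac12,0)$), so I would prefer the elementary substitution argument above.
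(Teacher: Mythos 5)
Your proposal is correct, and it reaches the stated bound by a genuinely different and more elementary route than the paper. The paper works from the Meijer $G$-function representation of Proposition \ref{proposition_marginal density}, invokes the asymptotic expansion theorem of Fields (Lemma \ref{lemma_meijer g approximation}) to write $\pi_{\text{R2-D2}}(\beta)$ as $C_1^\ast U_1(\beta^2)+C_2^\ast U_2(\beta^2)+C_3^\ast U_3(\beta^2)$ with hypergeometric-type series $U_i$, and then sandwiches each $U_i$ via the alternating series test to extract the order $|\beta|^{2a_\pi-1}$. You instead stay with the scale-mixture integral, and your change of variables $u=|\beta|\sqrt{2/\lambda}$ is computed correctly: the prefactor indeed collapses to $C\,2^{a_\pi}|\beta|^{2a_\pi-1}$, the remaining weight $(1+2\beta^2/u^2)^{-(a_\pi+b)}$ is bounded by $1$, and $u^{-2a_\pi}e^{-u}$ is integrable precisely because $0<a_\pi<1/2$, so the uniform bound already gives $\pi_{\text{R2-D2}}(\beta)\le C\,2^{a_\pi}\Gamma(1-2a_\pi)\,|\beta|^{-(1-2a_\pi)}$; dominated convergence then sharpens this to the exact asymptotic equivalence $\pi_{\text{R2-D2}}(\beta)\sim C\,2^{a_\pi}\Gamma(1-2a_\pi)\,|\beta|^{-(1-2a_\pi)}$, which is stronger than the stated $O(\cdot)$ claim and shows the rate is attained. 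What the paper's heavier machinery buys in exchange is the full small-$|\beta|$ expansion, including the bounded constant term and the $O(|\beta|)$ term coming from $U_2$ and $U_3$; the paper reuses that refined form (the approximation $\pi_{\text{R2-D2}}(\beta)\approx C_1+C_2/\beta^{1-2a_\pi}$) in the proof of Theorem \ref{theorem_KL risk, R2-D2 prior}, although your leading-order asymptotics would in fact suffice for that integral as well. In short: your argument is correct, simpler, and sharper for this theorem; the paper's is more informative about lower-order behavior.
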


\begin{theorem}\label{theorem_center properties DL}
	As $|\beta|\rightarrow 0 $,   if  $0<a^\ast<1$,   the marginal density of the  Dirichlet-Laplace prior  as shown in (\ref{eq_dl prior}) satisfies    $
	\pi_{DL}(\beta)   = O(1/|\beta|^{1-a^\ast} ) $. 
\end{theorem}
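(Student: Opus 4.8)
The plan is to reduce the marginal density of the Dirichlet--Laplace prior to a single one-dimensional integral by integrating out the Laplace scale $\psi_j$, identify that integral as a modified Bessel function of the second kind, and then read off the behavior near the origin from the small-argument expansion of $K_\nu$.

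First I would write out the marginal. With the conventions of the excerpt, $\beta_j\mid\psi_j$ has density $(2\psi_j)^{-1}\exp(-|\beta_j|/\psi_j)$ and $\psi_j\sim\text{Ga}(a_D,1/2)$ has density proportional to $\psi_j^{a_D-1}e^{-\psi_j/2}$. Marginalizing gives
\begin{equation*}
\pi_{DL}(\beta)=\frac{(1/2)^{a_D}}{2\,\Gamma(a_D)}\int_0^\infty \psi^{a_D-2}\exp\!\left(-\frac{|\beta|}{\psi}-\frac{\psi}{2}\right)d\psi .
\end{equation*}
For every fixed $|\beta|>0$ this converges regardless of $a_D$, since $e^{-|\beta|/\psi}$ suppresses the singularity at $\psi=0$ and $e^{-\psi/2}$ controls the tail. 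Next I invoke the classical identity $\int_0^\infty x^{\nu-1}e^{-s/x-tx}\,dx=2(s/t)^{\nu/2}K_\nu(2\sqrt{st})$, valid for $s,t>0$ and all real $\nu$, with $\nu=a_D-1$, $s=|\beta|$, $t=1/2$; this yields
\begin{equation*}
\pi_{DL}(\beta)=\frac{(1/2)^{a_D}}{\Gamma(a_D)}\,(2|\beta|)^{(a_D-1)/2}\,K_{a_D-1}\!\big((2|\beta|)^{1/2}\big).
\end{equation*}

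Finally, since $0<a_D<1$ the Bessel order satisfies $\nu=a_D-1\in(-1,0)$, so $K_{a_D-1}=K_{1-a_D}$ with $1-a_D\in(0,1)$, and the small-argument expansion $K_{1-a_D}(z)\sim\tfrac12\Gamma(1-a_D)(z/2)^{-(1-a_D)}$ as $z\to0^{+}$ applies. Substituting $z=(2|\beta|)^{1/2}$ and collecting powers of $|\beta|$ — the exponent $(a_D-1)/2$ from the prefactor and the exponent $-(1-a_D)/2$ from the Bessel asymptotic sum to $a_D-1$ — gives $\pi_{DL}(\beta)\sim c_{a_D}\,|\beta|^{a_D-1}$ as $|\beta|\to0$ for an explicit positive constant $c_{a_D}$, which in particular establishes $\pi_{DL}(\beta)=O(1/|\beta|^{1-a_D})$.

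There is no deep obstacle here; the only points needing care are (i) matching the parametrizations of the double-exponential kernel (variance $2\delta^2$, hence rate $1/\psi_j$) and of the Gamma prior (shape $a_D$, rate $1/2$) so that the integrand exponent $a_D-2$ and the Bessel order $a_D-1$ come out correctly, and (ii) the case distinction in the small-$z$ behavior of $K_\nu$: the hypothesis $0<a_D<1$ is precisely what keeps $\nu=a_D-1$ strictly negative and bounded away from $0$, so that $|\nu|=1-a_D$ produces the stated exponent and no logarithmic term (which would arise at $\nu=0$) appears. One could also avoid special functions entirely by a two-region estimate of the integral, splitting at $\psi\asymp|\beta|^{1/2}$, but the Bessel route is cleaner and additionally delivers the sharp leading constant.
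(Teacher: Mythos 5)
Your proof is correct and follows essentially the same route as the paper: both express the Dirichlet--Laplace marginal as $|\beta|^{(a_D-1)/2}K_{1-a_D}\bigl((2|\beta|)^{1/2}\bigr)$ up to a constant and then apply the small-argument asymptotic $K_\nu(z)\approx \tfrac12\Gamma(\nu)(z/2)^{-\nu}$ for $\nu=1-a_D>0$ to obtain the $|\beta|^{a_D-1}$ behavior. The only difference is that you rederive this Bessel form of the marginal from the Laplace--Gamma mixture via the identity $\int_0^\infty x^{\nu-1}e^{-s/x-tx}\,dx=2(s/t)^{\nu/2}K_\nu(2\sqrt{st})$, whereas the paper simply quotes it from \cite{bhattacharya2015dirichlet}; your computation reproduces that formula exactly, so the two arguments coincide.
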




For the	Horseshoe prior,  as summarized in \cite{carvalho2010horseshoe},   the marginal density 
$
\pi_{HS}(\beta) =    {(2\pi^3)^{-1/2}}  \exp(\beta^2/2 )E_1(\beta^2/2),
$    where $E_1(z) = \int_{1}^{\infty} e^{-tz}/t \, dt$ is the exponential integral function. 
As $|\beta|\rightarrow 0$, 
\[
\frac{1}{   2{(2\pi^3)^{1/2}} }  \log(1+\frac{4}{\beta^2}) \leq 
\pi_{HS}(\beta) \leq \frac{1}{  {(2\pi^3)^{1/2}} }  \log(1+\frac{2}{\beta^2}). 
\]
Therefore around the origin, $\pi_{HS}(\beta)  = O(\log ( {1}/{|\beta|}))$. 
Also by the proof of Theorem 4$.$6 in \cite{bhadra2016horseshoe+}, as  $|\beta|\rightarrow 0$,  the marginal density of Horseshoe+ prior satisfies $\pi_{HS+}(\beta)  = O(\log^2(  {1}/{|\beta|}) )$. 

It is clear that  $2a_\pi$ in the R2-D2  prior  plays the same role around the  origin as   $a_D$ in the Dirichlet-Laplace prior. 
Accordingly,   when   $a^\ast=2a_\pi\in(0,1)$,    all  these  four priors   possess unbounded density near the origin. However,  the R2-D2  prior  and Dirichlet-Laplace  prior    diverge 
to infinity  with a   polynomial order, much  faster than the Horseshoe+ (with a squared logarithm order)  and the Horseshoe prior (with a logarithm order).   Although the double Pareto prior also has a polynomial order tail similar as our proposed  R2-D2  prior,  the double Pareto prior   differs around the origin, as it remains bounded, while  our proposed R2-D2 prior is unbounded at the origin.   As for the normal-beta prime prior, we show that its concentration rate is slower than the R2-D2 prior. 

The  results for all priors in both tail behavior and concentration around zero are summarized in  Table \ref{table-asymptotics}. The proposed R2-D2 prior is the only one that can achieve polynomial rates both in the tails as well as around zero. This is the limiting case in that it can then be arbitrarily close to the boundary case of $1/{|\beta|}$ in each region.

\begin{table} [h!]
    \spacingset{1.1}
	\renewcommand{\arraystretch}{1.5}
	\begin{center}
		\normalsize\addtolength{\tabcolsep}{-2pt}
		\begin{tabular}{| c| c |c| }
			\hline 
			& Tail Decay & Concentration at zero  		\\
			\hline 
			Horseshoe &  $O\left(\frac{1}{\beta^2}\right)$ &  
			$O\left(\log(\frac{1}{|\beta|})\right)$   \\ \hline
			Horseshoe+ &  $O\left(\frac{\log |\beta|}{\beta^2}\right)$
			& $O\left(\log^2(\frac{1}{|\beta|})\right)$   \\	\hline
			Dirichlet-Laplace & $O\left( \frac{|\beta|^{a^\ast/2-3/4}}{  \exp\{ \sqrt{2|\beta|}\} } \right)$ & $O\left(\frac{1}{|\beta|^{1-a^\ast}}\right)$   \\ \hline
			Generalized Double Pareto &	$O\left( \frac{1}{|\beta|^{1+\alpha} }\right)$ & $O (1)$ \\ \hline
			R2-D2 &  	$O\left( \frac{1}{|\beta|^{1+2b} }\right)$ &
			$O\left(\frac{1}{|\beta|^{1-2a_\pi}}\right) $  \\	
			\hline 
		\end{tabular}
	\end{center}
	\caption {Tail behavior and concentration around zero for Horseshoe, Horseshoe+,  Dirichlet-Laplace, Generalized Double Pareto, and R2-D2 priors.} 
\label{table-asymptotics}
\end{table}

\subsubsection{Consistency and contraction of R2-D2 Posterior}\label{Section_posterior consistency}

In what follows, we rewrite   $p$ as $p_n$ to indicate the dimension of $\bm{\beta}_n$ can increase with the sample size $n$. 
We denote the true regression parameter as  $ \bm{\beta}^0_n$.   
Let $q_n$ be the number of nonzero components in   $ \bm\beta^0_n$. 
We use $||\cdot||$ as the $L_2$  norm,   and $||\cdot||_1$ as $L_1$ norm for vectors, respectively. 
Denote $\bm Y_n =  (Y_1, \cdots, Y_n)^T$. 
Denote $\bm X_n = ({{\bm{x}}}_1^T, \cdots, {{\bm{x}}}_n^T)^T$.  
Let $\xi_n$ denote a set of indices where $\xi_n \subset  \{1, \cdots, p_n \}$, and let $\bm{X}_{\xi_n}$ denote the sub-matrix of ${\bm X}_n$ that contains the columns with indices in $\xi_n$. 
Denote $\xi^0_n$  as the set containing  the nonzero indices of $\bm{\beta}^0_n$. 
For two positive sequences $a_n$ and $b_n$, 
$a_n\asymp b_n$ means $0<\liminf a_n/b_n \leq \limsup a_n/b_n<\infty$;
$a_n\prec b_n$ means $a_n = o(b_n)$; 
$a_n\succ b_n$ means $b_n = o(a_n)$; 
$a_n\preceq b_n$ means either $a_n  \prec b_n$ or $a_n \asymp b_n$; 
and $a_n\succeq b_n$ means either $a_n  \succ b_n$ or $a_n \asymp b_n$. 

We now show that  the proposed R2-D2 prior     yields   strong posterior consistency in the case that $p_n \prec n$, and further that it attains the optimal near-minimax contraction rate in general, including with  $p_n\succeq n$, as given by \cite{castillo2015bayesian}, \cite{song2017nearly} and \cite{rovckova2018spike}.

Theorem \ref{theorem_R2-D2_consistency} shows strong posterior consistency under $p_n = o(n)$, and assumes the  following regularity conditions: 
\begin{enumerate} [label=(A\arabic*)]  
	\item  \label{a_p=o(n)}    $p_n \prec n$;
	\item  \label{a_eigenvalues} Let $d_{p_n}$ and $d_1$ be the smallest and the largest singular values of $ { \bm{X}_n^T \bm{X}_n}/{n}$ respectively. Assume
	$0<d_{\min}<\liminf_{n\rightarrow\infty} d_{p_n} \leq \limsup_{n\rightarrow\infty}d_1 < d_{\max} <\infty$, where $d_{\min}$ and $d_{\max}$ are fixed; 
	\item  \label{a_beta0_sup} $  \max_{j=1,\cdots,p_n} \{ |\beta_{nj}^0 | \} \leq E_n$ for some nondecreasing sequence, $E_n$, with $\log(E_n) = O(\log n)$;
 
	\item \label{a_qplog}  $q_n = o( n/ \log n)$. 
\end{enumerate}
\begin{theorem} \label{theorem_R2-D2_consistency}
	Under assumptions \ref{a_p=o(n)}--\ref{a_qplog},  for any $b>0$, given the linear regression model (\ref{eq_linear_regression}) with known $\sigma^2$,  if  $a_{\pi}= C/( p_n^{b/2} n^{r b/2 }\log n)$  for finite $r>0$ and  $C>0$,  then the  R2-D2 prior   (\ref{eq_R2-D2 prior, linear model, simple representation})   yields a   strongly consistent posterior, i.e., 
	for any $\epsilon>0$,  
	\[
	\text{Pr}_{\bm\beta^0_n}\left\{ \pi_n ( \bm{\beta}_n:|| \bm{\beta}_n- \bm{\beta}_n^0||\geq \epsilon \mid {\bm Y}_n)\rightarrow 0 
 \right\} = 1	\text{  as $n\rightarrow \infty$.}
	\]

\end{theorem}

Theorem \ref{theorem_R2-D2_consistency} shows the posterior strong consistency of the R2-D2 prior  under $p_n\prec n$. 
Furthermore,  in the high dimensional case, with $p_n\succeq n$, we place an inverse-Gamma prior on $\sigma^2$, and denote	$\sigma^0$  as the true parameter value  which is unknown but fixed.  
Theorem \ref{theorem_new theorem}  shows that the R2-D2 prior  contracts at the near-minimax rate in this regime, under the following regularity conditions: 
\begin{enumerate} [label=(B\arabic*)]  
		\item \label{new0}  All the covariates are uniformly bounded. For simplicity, we assume they are all bounded by 1; 
	
	\item \label{new1} $p_n\succeq n$;
	\item \label{new3}   There exists some integer $\bar p_n$ and fixed constant $d_0$ such that $\bar p_n\succ q_n  $ and the smallest singular value of matrix $ {\bm X}_{\xi_n}^T {\bm X}_{\xi_n}/n  $ is no smaller than 
	$ d_0$ for any subset model of size $|\xi_n| \leq \bar{p}_n$; 
		\item \label{new5}  $\text{max}_{j=1,\dots,p_n} \{ |\beta_{nj}^0/\sigma^0| \} \leq E_n$ for some nondecreasing sequence, $E_n$, with $\log(E_n) = O(\log p_n)$. 
	\item \label{new4} $q_n = o(n/\log p_n)$;

\end{enumerate}

\begin{theorem}\label{theorem_new theorem}
	Assume that    \ref{new0}-\ref{new4}  hold. Denote $\epsilon_n = M\sqrt{q_n (\log p_n)/n}$  where   $M>0$ is sufficiently large, and let $k_n\asymp  \sqrt{ q_n(\log p_n)/n}/p_n$. 
	Given 
	the linear regression model (\ref{eq_linear_regression}), suppose that we  place an inverse-Gamma prior  on  $\sigma^2$ and 
	place the R2-D2 prior  (\ref{eq_R2-D2 prior, linear model, simple representation})  on $\bm\beta$. 	 
	For any  $b>0$, if 
	$a_{\pi}  \leq   {\log\left(1-p_n^{-1+u} \right)} / ({2\log k_n})$ 
	where $u>0$,  
	then the  following hold: 
	\[
	\text{Pr}_{\bm\beta^0_n}  \Big\{\pi_n (\bm\beta_n: || \bm\beta_n - \bm\beta^0_n || \geq c_1 \sigma^0  \epsilon_n \mid \bm Y_n ) \geq e^{-c_2 n\epsilon_n^2}   \Big\}  \leq e^{-c_3 n\epsilon_n^2}, 
	\]	
	\[
	\text{Pr}_{\bm\beta^0_n}  \Big\{ \pi_n  (\bm\beta_n: || \bm\beta_n - \bm\beta^0_n ||_1 \geq c_1 \sigma^0  \sqrt{q_n}\epsilon_n \mid \bm Y_n ) \geq e^{-c_2 n\epsilon_n^2}   \Big\}   \leq e^{-c_3 n\epsilon_n^2}, 
	\]
	\[
	\text{Pr}_{\bm\beta^0_n}  \Big\{ \pi_n  (\bm\beta_n: || \bm X_n\bm\beta_n - \bm X_n \bm\beta^0 _n|| \geq c_1 \sigma^0   \sqrt{n}\epsilon_n \mid \bm Y_n ) \geq e^{-c_2 n\epsilon_n^2}   \Big\}  \leq e^{-c_3 n\epsilon_n^2}, 
	\]

	for some  positive constants   $c_1$, $c_2$, and  $c_3$. 
\end{theorem}

According to \cite{raskutti2011minimax}, the minimax $L_2$ rate is $O(\sqrt{q_n \log (p_n/q_n)/n})$. 
For the R2-D2 prior, the $L_2$ and $L_1$ contraction rates for the posterior of $\bm\beta_n$ are  $O(\sqrt{q_n (\log p_n)/n})$ and $O(q_n\sqrt{ (\log p_n)/n})$, respectively.  So the  contraction rates of R2-D2 prior are near-minimax. 
Note that these rates are the same as the rates achieved by spike-and-slab approaches as in  \cite{castillo2015bayesian}, \cite{song2017nearly} and \cite{rovckova2018spike}. 

We note that Theorem 6 shows strong consistency in the $p_n \prec n$ regime, while Theorem 7 shows the contraction rate for $p_n\succeq n$. While, the result in Theorem 7 is stronger, including stronger conditions on the hyperparameters, we do conjecture that the near-minimax contraction rate will also hold in the case of $p_n \prec n$, with a condition on the hyperparameters that is weaker than that of Theorem 7. As also pointed out in \cite{song2017nearly}, in this case it is not necessary to require a strong prior concentration which is ensured by the conditions of Theorem 7, and one only need to impose conditions on the local shape of the prior around the true $\mathbf{\beta}^0$.

\subsubsection{Choice of Hyperparameters}
Based on these properties, we now discuss a default choice of hyperparameters that can be implemented in practice. We set $a_\pi = a/p$ throughout, which is not as a choice, but it is an important step in the definition of the prior. The reason for this very specific relationship is to ensure that the R2-D2 prior in  (\ref{eq_R2-D2 prior_dirichlet BP form}) can be re-written as (\ref{eq_R2-D2 prior, linear model, simple representation}) and (\ref{eq_R2-D2 prior, linear model, simple representation1}). 
The theoretical properties are derived based on (\ref{eq_R2-D2 prior, linear model, simple representation})  (or (\ref{eq_R2-D2 prior, linear model, simple representation1})), hence they assume this specific form of $a_\pi$, as any other choice would no longer yield the same theoretical results. In addition, it ensures that the MCMC algorithm is fully Gibbs sampling. With any other choice of $a_\pi$, this would not be the case, and a Metropolis step would be needed in the algorithm. 

Hence there are then two  parameters to set, $a$ and $b$.  Based on the consistency results, we suggest to set $a$ as a function of $b$ based on the condition of Theorem 6. 
This leads to just one tuning parameter $b$. This determines the tail behavior and then all other parameters are fixed from that. For a fully default method, we set $b = 0.5$ to yield Cauchy-like tails, but other choices of tail behavior are possible if desired. 

To determine $a$ and $a_\pi$ from a choice of $b$, note that the condition for consistency in Theorem 6 requires that $a_\pi=C/(p_n^{b/2}n^{r b/2} \log n)$, where $C$ and $r$ are arbitrary constants. For a default approach, given the choice of $b$, we set $a_\pi$ exactly at $C/(p_n^{b/2}n^{r b/2} \log n)$ choosing the arbitrary constants $C$ and $r$ each to be 1. This is now the default choice and has been implemented in all of the examples to follow.

\section{Posterior Computation}\label{post.comp}
In this section we develop novel Markov chain Monte Carlo (MCMC) samplers for both the conditional and marginal R2-D2 approaches. The development of these samplers are of interest directly on their own, as after some  transformation and reparametrization, we are able to obtain efficient feasible methods. In particular, the marginal version and the conditional uniform on ellipsoid version allow for fully Gibbs samplers. The conditional version with local shrinkage requires a Metropolis-Hastings sampler and we show how to sample from this posterior even in the case where $p > n$ and hence $\X^T\X$ is not full rank. 

\subsection{Gibbs Sampler for the Conditional Uniform-on-Ellipsoid Prior}
In Proposition \ref{gibbs.uoe}, we showed that $\betabf\mid  \sigma^2$ has a mixture of normals representation for $a \leq p/2$. In practice, we recommend choosing $a = p/n$ as a default, and hence this applies as long as $n > 1$. 
If $\sigma^2$ is Inverse-Gamma($a_1,b_1$) distributed, then $\betabf, \sigma^2, z$ and $w$ are drawn from their full conditional distributions in the Gibbs sampler described as follows: 
\begin{enumerate}[label=(\alph*)]
    \item Set initial values for  $\betabf, \sigma^2, z$ and $w$.
    \item Sample $w\mid \betabf, \s, z$, by first sampling $u \sim $ Gamma$\left( p/2 - a, \,  \betabf^T\X^T\X\betabf/\left(  2g\s\right)\right)$, and setting $w = 1/(1+u)$.
    \item Sample $z\mid\betabf, \s, w \sim$  Inverse-Gamma$\left( p/2+b,\, \betabf^T\X^T\X\betabf/\left(2w\s\right)+1/2 \right)$.
    \item Sample $\s \mid\betabf, z, w \sim$  Inverse-Gamma$\left(  (n+p)/2 + a_1, \, \text{SSE}/2 +\betabf^T\X^T\X\betabf/\left(2wz\right) + b_1 \right)$, where SSE $=\left( \ybf -\X\betabf \right)^T\left( \ybf -\X\betabf \right)$.
    \item Sample $\betabf\mid\sigma^2, z,w \sim $ Normal$\left( c \widehat \betabf_{LS}, \,c \s (\X^T\X)^{-1} \right)$, where $ c = \frac{zw}{zw+1}$  is the shrinkage factor, and $\widehat \betabf_{LS}$ is the least-squares estimate of $\betabf$.
    \item Repeat steps b-e until convergence. \\
\end{enumerate}

\subsection{MCMC for the Conditional Local Shrinkage Model}
\subsubsection{Full-rank case}
We now develop a novel MCMC algorithm for the local shrinkage model, sampling from the full conditionals using a Metropolis-Hastings sampler. 
First, we take the eigen-decomposition of $\Sigmabf_\X = \X^T\X/n = \V\D\V^T$, where $\V_{p\times p}$ is an orthogonal matrix of eigenvectors, and $\D_{p\times p}$ is a diagonal matrix of eigenvalues. 
$\R$ is transformed such that $\theta = {\R}/({1-\R})$ has a Beta-Prime (or Inverted-Beta distribution), with density $p(\theta) = \theta^{a-1} \left( 1 + \theta \right)^{-a-b} / B(a,b)$.
We also transform $\betabf$ to lie on the unit sphere conditional on the other variables; that is,  $\gammabf = \D^{1/2}\V^T\betabf / \sqrt{\theta\sigma^2}$.
Then $\gammabf | \Lambdabf$ has a Bingham distribution, and we can write the full model as follows.
\begin{flalign*}
    \ybf \mid \gammabf, \sigma^2, \theta &\sim  N_n\left(\sqrt{\theta\sigma^2}\Xbf \V \D^{-1/2} \gammabf ,~\sigma^2 I\right)\\ 
    \gammabf \mid\Lambdabf& \sim \text{ Bingham}\left(\D^{-1/2}\V^T \Lambdabf^{-1} \V \D^{-1/2} \right)\\
    \sigma^2 &\sim \text{ Inverse-Gamma}(a_1, b_1)\\
    \theta &\sim \text{ Beta-Prime}(a,b)\\
    \lambda_j &\sim \text{ Gamma}(\nu, \mu), \text{ for } j=1, \dots, .p
\end{flalign*}
This parametrization of the prior models the direction of $\betabf$ independently of $\sigma^2$ and $\R$.  
However, the Bingham distribution of the direction, $\gammabf\mid  \Lambdabf$, contains an intractable normalizing constant, $C_{X, \Lambda}$, depending on $\X$ and $\Lambdabf$.
Specifically, $C_{X, \Lambda}$ is a confluent hypergeometric function with matrix argument $\A =\D^{-1/2}\V^T \Lambdabf^{-1} \V \D^{-1/2}/2 $.

The full conditional posterior distribution of $\gammabf$ is a Fisher-Bingham($\mubf, \A$) distribution \citep{kent1982fisher}, where $\mubf = \ybf^T\X \V\D^{-1/2} \sqrt{\theta/\s}$.
This is equivalent to a $N_p(\A^{-1}\mubf, \A^{-1})$ distribution conditioned to lie on the $p-1$ dimensional unit sphere.
We sample from this using the rejection sampler proposed by \cite{kent2013new} with an Angular Central Gaussian (ACG) envelope distribution. 
Sampling efficiently from the ACG distribution is possible because it is just the marginal unit direction of a multivariate Normal distribution with mean $\zerobf$, and thus only requires draws from a Normal distribution.
Any standard MCMC algorithm can sample $\theta$ and $\sigma^2$, but the Adaptive Metropolis algorithm \citep{haario2001adaptive} automatically accounts for the strong negative correlation between the parameters without the need for manual tuning.  
A bivariate Normal proposal distribution is used for ($\sigma^2, \theta$) with covariance proportional to the running covariance of the samples during the burn-in phase.

The density function of the full conditional posteriors for variance parameters, $\lambda_j$, almost have Generalized Inverse Gaussian distributions (GIG) if it were not for the intractable term $C_{X, \Lambda}$. 
A Metropolis-Hastings algorithm would require computing this quantity. 
Our solution is to propose each candidate $\lambda^*_j$ from a GIG distribution, and introduce auxiliary variables, $\gammabf^*|\Lambdabf^*$, from a Bingham distribution in which the constant, $C_{X, \Lambda^*}$, appears in the density. 
We calculate the Metropolis-Hastings acceptance probability for ($\Lambdabf^*, \gammabf^*$), and since $C_{X, \Lambda^*}$ appears in the posterior and the proposal distribution, we avoid computing it.
This is the so-called ``Exchange Algorithm'' proposed by \cite{murray2006} for doubly-intractable distributions, and also used by \cite{fallaize2016exact} for Bayesian inference of the Bingham distribution. 

The entire sampler for the local shrinkage model is described as follows:
\begin{enumerate}[label=(\alph*)]
    \item Set initial values for  $\gammabf, \sigma^2, \theta, $ and $\Lambdabf$.
    \item Sample $\gammabf\mid\Lambdabf$ from a Fisher-Bingham$\left( \mubf, \A \right)$ distribution, 
        where $\mubf = \ybf^T\X \V\D^{-1/2} \sqrt{\theta/\s}$ and $\A=\D^{-1/2}\V^T \Lambdabf^{-1} \V \D^{-1/2}/2$.
    \item Sample ($\sigma^2, \theta $) jointly using an Adaptive Metropolis algorithm from a bivariate Normal distribution. 
    \item Exchange algorithm to sample $\Lambdabf$:
        \begin{enumerate}[label=(\roman*)]
            \item Sample $\lambda^*_j \mid u_j \sim \text{ GIG}(\nu, 2\mu , u_j^2)$, for $j = 1, \dots, p$, where $\ubf = \V\D^{-1/2}\gammabf$.
            \item Sample $\ubf^*$ from a Bingham($\A^*$) distribution, where $\A^*=\D^{-1/2}\V^T\Lambdabf^{*-1}\V\D^{-1/2}/2$ 
            \item Accept ($\Lambdabf^*, \ubf^*)$ with probability:
            \begin{align*}
               & \frac{p(\Lambdabf^*|\ubf)}{p(\Lambdabf|\ubf)}\times \frac{q(\Lambdabf|\ubf)}{q(\Lambdabf^*|\ubf)}\times
               \frac{q(\ubf^*|\Lambdabf)}{q(\ubf^*|\Lambdabf^*)}
                = exp\left\{ \sum_{j=1}^p u_j^{*2} \left( 1/\lambda_j^{*2} -  1/\lambda_j^{2}\right)\right\} 
            \end{align*}
        \end{enumerate}
    \item Repeat steps (b)-(d) until convergence, and calculate $\betabf = \sqrt{\theta \s} \V \D^{-1/2} \gammabf$ for each sample. 
\end{enumerate}

In step (d), $p(\Lambdabf\mid\ubf)$ is the conditional posterior distribution of $\Lambdabf$; 
$q(\Lambdabf\mid\ubf)$ is the GIG proposal distribution with density:  $p(x; d, a, b) \propto x^{d-1} exp\left\{ -(ax + b/x)/2\right\}$, \\for $-\infty<d<\infty$, and $ a,b>0$;
and $q(\ubf\mid\Lambdabf)$ is the Bingham proposal distribution which has the same constant as in $p(\Lambdabf\mid\ubf)$.
We only need to keep $\Lambdabf^*$ at each step, and can discard $\ubf^*$.
We can efficiently sample from the Bingham distribution because it is a special case of the Fisher-Bingham with $\mubf = \zerobf$ \citep{kent2013new}. 
All modeling is done in terms of $\gammabf, \s, \theta$ and $\Lambdabf$, and we calculate $\betabf$ outside the sampler. 

\subsubsection{Non full-rank case}
Next we address how to fit these models with high-dimensional data, where $p>n$ and $\X^T\X$ is not full rank. 
The restriction on $\betabf: \betabf^T \Sigmabf_\X\betabf= \s\theta$, is no longer an ellipsoid, but an unbounded subspace in $p$-dimensions (e.g. parallel lines for $p=2$, and an infinite cylinder for $p= 3$). 
We assume that the rank$(\X)=n$, and partition $\D = \left( \begin{smallmatrix} \D_1 & \zerobf \\ \zerobf & \zerobf \end{smallmatrix} \right)$ and $\V = (\V_1, \V_2)$. 
Note that $\D_1$ is the $n\times n$ diagonal matrix of positive eigenvalues, $\V_1$ is the matrix of corresponding eigenvectors, and $\V_2$ is the matrix of the $p-n$ eigenvectors spanning the null space of $\X$. 
We define $\gammabf = (\gammabf_1, \gammabf_2)^T = (\D_1^{1/2}\V_1^T\betabf, \V_2^T\betabf)^T/\sqrt{\theta \s}$, so that $\gammabf$ is multivariate Normal with the constraint that $\gammabf_1^T\gammabf_1=1$.
Marginally, $\gammabf_1 = \D_1^{1/2}\V_1^T\betabf/\sqrt{\theta \s}$ is defined on the $n-1$ dimensional unit sphere, and has a Fisher-Bingham distribution just like the full rank case.  
However, the reverse transformation $\betabf =  \sqrt{\theta \s} \V_1 \D_1^{-1/2} \gammabf_1$ is defined on the lower $n-1$ dimensional ellipsoid within the entire constrained space $\{\betabf: \betabf^T\Sigmabf_\X\betabf= \s\theta\}$.
For example, if $p = 3$ and $n = 2$, this is the slice of the 3-dimensional ellipsoid with the minimum $L_2$-norm.
The problem is that this lower dimensional ellipsoid may not be able to favor the sparsity or local shrinkage encouraged by $\Lambdabf$. 
That would be equivalent to principal components regression using the top $n$ principal components. 
To allow for shrinkage of the original coefficients and not the principal components, we sample $\gammabf_2\mid\gammabf_1$, which is multivariate Normal, and make the reverse transformation $ \betabf = \sqrt{\theta \s}\left(\V_1\D_1^{-1/2}\gammabf_1 + \V_2\gammabf_2\right) $.
Since $\V_2$ spans the null space of $\X$, $\betabf$ is still in the constrained region, but offers more flexibility in shrinking $\beta_j$. 

\subsection{Gibbs Sampler for Marginal R2-D2}
For posterior computation in the marginal case, the following equivalent representation is useful. 
The R2-D2 prior  (\ref{eq_R2-D2 prior_dirichlet BP form}) is  equivalent to 
\begin{eqnarray} \label{eq_R2-D2 prior, linear model, normal representation}
\beta_j\mid\sigma^2, \psi_j, \phi_j, \omega   \sim   \text{N}(0,\psi_j\phi_j \omega\sigma^2/2),  \ 
\psi_j  \sim    \text{Exp}(1/2), \  \nonumber  \\
{\phi}  \sim \text{Dir}(a_\pi,\cdots,a_\pi),    \
\omega  \mid \xi     \sim  \text{Ga}(a,\xi),  \ 
\xi  \sim   \text{Ga}(b,1), 
\end{eqnarray} 
where $\text{Exp}(\delta)$ denotes the exponential distribution with mean $\delta^{-1}$. 
The Gibbs sampling procedure  is    based on (\ref{eq_R2-D2 prior, linear model, normal representation}) with $a = pa_\pi$.  	Denote $Z\sim \text{InvGaussian}(\mu, \lambda)$, 
if $ \pi(z) =  {\lambda^{1/2}(2\pi z^3)}^{-1/2}\text{exp}\{-  {\lambda(z-\mu)^2}/(2\mu^2z)\}$. 
Denote $Z\sim\text{giG}(\chi,\rho,\lambda_0)$, the generalized inverse Gaussian distribution \citep{seshadri1997halphen}, if $\pi(z) \propto z^{\lambda_0-1}\text{exp}\{- (\rho z + \chi/z)/2\}$. 

The Gibbs sampling procedure is as follows: 
\begin{enumerate}[label=(\alph*)]
	 \item Set initial values for  $\betabf, \sigma^2, \bm{\psi},\bm{\phi},$ and $\omega$.
	 \item Sample 
	$\bm\beta \mid \bm{\psi},\bm{\phi} ,\omega,   \sigma^2, {\bm Y} \sim N(\bm\mu,\sigma^2 {\bm V})$, where $ \bm\mu ={\bm V} \bm{X}^T\bm{Y} = (\bm{X}^T\bm{X}+{\bm S}^{-1})^{-1} (\bm{X}^T\bm{Y})$,   $ {\bm V} = ( \bm{X}^T\bm{X} +{\bm S}^{-1})^{-1} $,   ${\bm S} = \text{diag}\{ \psi_1\phi_1\omega/2, \cdots, \psi_p\phi_p\omega/2 \}$,  $\bm{X} = (\bm{x}_1,\cdots, \bm{x}_n)^T$, and $\bm{Y} = (Y_1, \cdots, Y_n)^T$. 
	
	
	\item  Sample $\sigma^2 \mid  \bm\beta,\bm{\psi}, \bm{\phi},\omega,{\bm Y}  \sim \text{IG}(a_1+  (n+p)/{2}, b_1 + ( \bm\beta^T {\bm S}^{-1}\bm\beta  +(\bm{Y}-\bm{X}\bm\beta)^T  (\bm{Y}-\bm{X}\bm\beta) )/{2})$.

	\item Sample $\bm{\psi}\mid \bm\beta, \bm{\phi},\omega, \sigma^2$. Draw $ {\psi_j}^{-1}\sim \text{InvGaussian}(\mu_j =   {\sqrt{\sigma^2\phi_j \omega/2}}/{|\beta_j| }, \lambda=1 $), then take the reciprocal to get $\psi_j$. 
	
	\item Sample $\omega\mid \bm\beta, \bm{\psi},\bm{\phi},  \xi, \sigma^2 \sim \text{giG}(\chi = \sum\limits_{j=1}^p {2\beta_j^2} /(\sigma^2\psi_j\phi_j), \rho = 2\xi, \lambda_0 = a-p/2)$. 
	\item Sample $\xi \mid \omega \sim \text{Ga}(a+b, 1+ \omega)$. 
	
	\item Sample $\bm\phi\mid \bm\beta,\bm\psi,  \xi, \sigma^2$.  
	Motivated by \cite{bhattacharya2015dirichlet}, if  $a = pa_\pi$, one can draw $T_1, \cdots, T_p$ independently with  $T_j\sim \text{giG} (\chi = {2\beta_j^2}/(\sigma^2\psi_j), \rho = 2\xi, \lambda_0 = a_\pi-1/2)$. Then set $\phi_j = T_j/T$ with $T=\sum\limits_{j=1}^p T_j$.  
	 \item Repeat steps (b)-(g) until convergence.  
\end{enumerate}


\section{Simulation Study}\label{sim.section}
We conduct a simulation study to compare the proposed approach with other Bayesian regression models. In each setting, 200 datasets are simulated from the homoscedastic linear model  (\ref{eq_linear_regression}), with sample size  $n = 60$, and the number of predictors $p$  varying  in $p\in\{50, 100, 500, 2000\}$. Larger sample sizes were also investigated with the comparisons remaining similar. The covariates $ \bm{x}_i$,  $i=1,\cdots,n$,  are generated from multivariate normal distribution with mean zero, and correlation matrix of autoregressive  structure  AR(1)  with correlation  $\rho=$0$.$5 or 0$.$9.  For the regression coefficients $ \bm{\beta}$, we set 
$ \bm\beta= ({\bm 0}_{10}^T,{\bm B_1}^T,{\bm 0}_{30}^T,{\bm B_2}^T,{\bm 0}^T_{p-50})^T$
with  ${\bm 0}_k$ representing the zero vector of length $k$, and  ${\bm B_1}$ and $ \bm B_2$  each of length $5$ nonzero elements. 
The  fractions of  true coefficients with exactly zero values 
are 80\%, 90\%, 98\%, and 99.5\% for $p\in\{50, 100,   500, 2000\} $, respectively,   and the remaining  20\%, 10\%, 2\%, and 0.5\% nonzero elements  ${\bm B_1}$ and ${\bm B_2}$   were  independently generated from  two scenarios: 
 (i) a   Student  $t$ distribution with 3 degrees of freedom to  give heavy tails; (ii) a Uniform(0,1)  distribution to give weaker signals. For scenario (i), we set the error variance to produce a Signal-to-Noise Ratio (SNR) of 9, yielding $\sigma^2 = 10/3$. Note that since the generated coefficients have expectation of zero, the SNR does not depend on the correlation structure of the covariates. For scenario (ii), the nonzero expectation of the Uniform(0,1) then entails a different SNR for the 2 different correlation setups. For this scenario, we set $\sigma^2 = 6$ to yield SNR $\approx$ 0.5 and 0.7 in the $\rho = 0.5$ and $0.9$ cases, respectively.    

To implement both the marginal and conditional R2-D2 priors,  as discussed as a default choice in Section 4.2.4, we set $b = 0.5$ to yield Cauchy-like tails, and
$a_\pi = C/(p_n^{b/2} n^{rb/2} \log n)$ with choosing the arbitrary constants $C$ and $r$ to be 1.  We also set $b=0.1$ to give heavier tails (then setting $a_\pi$  in the same manner). The results for $b = 0.1$ were similar to $b = 0.5$ and thus not shown, and we recommend as a default to use $b = 0.5$ as a fully automatic approach. 
 For the conditional R2-D2, the choices of $\nu$ and $\mu$ in the Gamma prior are based on the recommendations for implementation of Normal-Gamma priors given in \cite{griffin2010inference}. The choice is based on the degree of sparsity, which we choose as $\min\{n, 0.1p\}$ so that we set the expected number of non-zero coefficients to be $10\%$ of the total, or the sample size, whichever is smaller. Details of how $\nu$ and $\mu$ relate to the sparsity level are given in the Appendix. 

The comparisons are made to some current state-of-the-art global-local priors: Horseshoe,  Horseshoe+, Normal-Beta Prime and Dirichlet-Laplace.  

For the Horseshoe, Dirichlet-Laplace, and Normal-Beta Prime, the implementation is done via the \verb|R| packages \verb|horseshoe|, \verb|dlbayes|, and \verb|NormalBetaPrime|, respectively. The Horseshoe+ is implemented through \verb|Stan| in \verb|R| using the code provided by the author of \cite{bhadra2016horseshoe+}. The proposed R2-D2 approaches are implemented in \verb|R|, based on the discussed MCMC sampling.  
 
 In all cases, $10,000$ samples are collected with the first $5,000$ samples discarded as burn-in.

{\bf Estimation error and AUC.} The  average sum  of squared error corresponding to the posterior mean   across the 200 replicates is provided in Table  \ref{table-sim-t3} for simulation setting i for $p = $100 and 500 and in Table \ref{table-sim-Uniform} for simulation setting ii.  In addition, the averaged area under the  Receiver-Operating Characteristic (ROC) curve (AUC) based on the posterior $t$-statistic, i.e., the ratio of the posterior mean and  posterior standard deviation, is also given   to offer further evaluation of the reliability of the posterior inference on the coefficients. Larger AUC signifies that the method tends to give posterior intervals further away from zero for the true non-zero coefficients and intervals that are concentrated closer to zero for the irrelevant variables.  
To perform variable selection, thresholding  the posteriors either marginally or jointly \citep{bondell2012consistent} is typical, and thus higher AUC would represent more accurate variable selection. 

\begin{table} [h!]
	\spacingset{1.1}
	\renewcommand{\arraystretch}{1.5}\textbf{}
	\resizebox{\columnwidth}{!}{
		\begin{tabular}{|l|cc|cc||cc|cc|}
			\hline
			\multirow{4}{*}{} &\multicolumn{8}{c|}{Scenario 1 [non-zero coefficients from $t_3$]} \\ \cline{2-9}
			& \multicolumn{4}{c||}{$\rho=0.5$}			
			& \multicolumn{4}{c|}{$\rho=0.9$} \\ \cline{2-9}
			& \multicolumn{2}{c|}{p=100} &
			\multicolumn{2}{c||}{p=500}  & \multicolumn{2}{c|}{p=100} &
			\multicolumn{2}{c|}{p=500} \\ \cline{2-9}
			& SSE & AUC & SSE & AUC & SSE & AUC & SSE & AUC\\
			\hline
			Horseshoe & 148 (6.4) & 64 & 236 (13.1) & 61 & 186 (9.2) & 67 & 205 (11.2) & 68 \\
			\hline
			Horseshoe+  & 146 (5.8) & 64  & 215 (11.4) & 62 & 190 (9.7) & 67  & 212 (11.4) & 69 \\
			\hline
			Normal-BetaPrime  & 155 (5.7) & 64  & 578 (20.1) & 59 & 183 (8.1) & 67  & 1399 (62.1) & 57 \\
			\hline
			Dirichlet-Laplace  & 166 (7.8) & \textbf{67}  & 187 (8.1) & 54 & 197 (11.6) & \textbf{74} & 209 (10.9) & 59 \\
			\hline
			R2-D2 - Conditional & 146 (5.6) & 66 & 179 (7.9) & 62 & \textbf{174 (8.1)} & 73 & 192 (11.2) & 72 \\
			\hline	
			R2-D2 - Marginal & \textbf{142 (5.5)} & 65 & \textbf{166 (6.5)} & \textbf{64} & 179 (8.7) & 72 & \textbf{188 (10.1)} & \textbf{73}  \\
			\hline
	\end{tabular}}
	\caption {\small Average sum of squared error (SSE) and average  area under the  Receiver-Operating Characteristic curve (AUC), based on 200 simulated datasets. All values  multiplied by 10 for readability.  For SSE, standard errors are included in parentheses. For AUC, all standard errors are in the range of $0.5 - 0.7$, so omitted to save space. Note that {\it smaller} is better for SSE, while {\it larger} is better for AUC. Best performance in each column is highlighted in bold for reference.} 
	\label{table-sim-t3}
\end{table}

\begin{table} [h!]
	\spacingset{1.1}
	\renewcommand{\arraystretch}{1.5}\textbf{}
	\resizebox{\columnwidth}{!}{
		\begin{tabular}{|l|cc|cc||cc|cc|}
			\hline
			\multirow{4}{*}{} &\multicolumn{8}{c|}{Scenario 2 [non-zero coefficients from U(0,1)]} \\ \cline{2-9}
			& \multicolumn{4}{c||}{$\rho=0.5$}			
			& \multicolumn{4}{c|}{$\rho=0.9$} \\ \cline{2-9}
			& \multicolumn{2}{c|}{p=100} &
			\multicolumn{2}{c||}{p=500}  & \multicolumn{2}{c|}{p=100} &
			\multicolumn{2}{c|}{p=500} \\ \cline{2-9}
			& SSE & AUC & SSE & AUC & SSE & AUC & SSE & AUC\\
			\hline
			Horseshoe  & 44.1 (1.9) & 62 & 59.9 (4.8) & 58 & 41.7 (2.2) & 73 & 39.8 (3.2) & 58 \\
			\hline
			Horseshoe+ & 47.0 (2.2) & 62  & 79.2 (8.1) & 61 & 47.6 (4.2) & 73  & 44.9 (4.4) & 63 \\
			\hline
			Normal-BetaPrime  & 62.4 (2.4) & 62  & 437.1 (14.2) & 57 & 42.5 (1.7) & 70  & 1203 (55.1) & 59 \\
			\hline
			Dirichlet-Laplace  & \textbf{32.4 (0.7)} & 64  & 56.3 (5.7) & 51 & 27.3 (0.6) & 79 & 41.3 (1.2) & 64 \\
			\hline
			R2-D2 - Conditional & 34.5 (1.2) & \textbf{66} & 48.2 (2.9) & 59 & \textbf{26.3 (0.6)} & \textbf{82} & 38.1 (1.7) & 79 \\
			\hline	
			R2-D2 - Marginal & 33.7 (1.0) & 64 & \textbf{41.0 (1.3)} & \textbf{65} & 31.8 (1.2) & 76 & \textbf{34.5 (1.0)} & \textbf{82} \\
			\hline
	\end{tabular}}
	\caption {\small Average sum of squared error (SSE) and average  area under the  Receiver-Operating Characteristic curve (AUC), based on 200 simulated datasets. All values  multiplied by 10 for readability.  For SSE, standard errors are included in parentheses. For AUC, all standard errors are in the range of $0.5 - 0.7$, so omitted to save space. Note that {\it smaller} is better for SSE, while {\it larger} is better for AUC. Best performance in each column is highlighted in bold for reference.} 
	\label{table-sim-Uniform}
\end{table}	

It is clear from Table \ref{table-sim-t3} and \ref{table-sim-Uniform} that the proposed approaches, both conditional and marginal R2-D2 prior, outperform the existing methods in nearly every case, both in terms of the estimation error, and the area under the ROC curve. This is particularly apparent in the cases of $p=500$.
The Dirichlet-Laplace performs well in some of the cases with $p=100$, but then performs significantly worse with $p=500$, particularly in AUC. 
 Both the conditional and marginal versions of R2-D2 prior perform similarly, with slightly better performance of the conditional version  in the lower dimensional case, but slightly better performance of the marginal version in the case of $p = 500$. Overall, the proposed approach gives a much improved result over the existing approaches. This is to be anticipated based on the theoretical results on the concentration and tail behaviors, which we now examine.



To better understand the relative performance of the methods, Table \ref{table-breakdown} shows the average SSE partitioned according to the value of the true $ \bm{\beta}$ at  $\beta_j=0$, $|\beta_j|\in (0,0.5]$,  and $|\beta_j| > 0.5$,  $j=1,\cdots,p$. This allows us to view a more detailed performance of the approaches in their behavior on the zero, small non-zero, and larger non-zero coefficients, respectively. The table shows the results for $\rho = 0.5$ and $p = 100$ for setting i, but the decomposed SSE is similar in the other cases as well.

\begin{table}[htbp]
	\spacingset{1.1}
	\centering
	\renewcommand{\arraystretch}{1.2}
	\begin{tabular}{|l|c|c|c|c|}
		\hline
		& $\beta = 0 $ & $|\beta| \in(0,0.5]$ & $|\beta| > 0.5$ & Total \\ 
		\hline
		Horseshoe & 21 & 24 & 103 & 148\\
		\hline 
		Horseshoe+& 18 & 22 & 106 & 146 \\      
		\hline
		Normal-BetaPrime & 35 & 38 & 82 & 155 \\
		\hline
		Dirichlet-Laplace & 4 & 6 & 156 & 166 \\ 
		\hline
		R2D2 - Conditional & 14 & 18 & 114 & 146 \\  
		\hline
		R2D2 - Marginal & 10 & 13 & 119 & 142 \\   
		\hline
	\end{tabular}
	\caption{\normalsize Average sum of squared error broken down by zero coefficients $(\beta = 0 )$, small coefficients $(|\beta| \in(0,$0$.$5$]$), large coefficients $(|\beta| > $0$.$5$]$), and Total. Results are based on the 200 datasets from Table \ref{table-sim-t3} with $p = 100$ and $\rho = 0.5$.}
	\label{table-breakdown}
\end{table}

From the breakdown provided in Table  \ref{table-breakdown}, we see the differences between the approaches, and the theoretical results on the concentration at zero and tail behavior (Table \ref{table-asymptotics}) really show up strongly here. Since the Dirichlet-Laplace has high concentration around zero, but it has tails that are much lighter than the others, we see this translate into really small error on the zero and small coefficients, but extremely large error on the large coefficients. Meanwhile, the Horseshoe and Horseshoe+ having less concentration around zero leads to poorer performance at estimation of the actual zeros. Note that this also explains the reason for the Horseshoe and Horseshoe+ having worse AUC, as it does not push the zeros close enough to zero in order to distinguish them from the small coefficients. Note that the R2-D2 was set with $b = 0.5$ giving it the Cauchy-like tails as is the case with the Horseshoe. Thus we see similar estimation ability for the larger coefficients. Overall, the proposed R2-D2 approach achieves a strong performance in both regions simultaneously as anticipated by the theory.

{\bf Credible Interval Coverage.} We  also examine the coverage properties of using 95\% marginal credible intervals for each approach. For this same scenario, table \ref{table-cover} reports the average width of the intervals, the proportion of coverage, the Specificity, and the Sensitivity. This gives a broader picture of the posterior inference properties. Although, there is no promise of 95\% Frequentist coverage by the 95\% intervals, we see that in this case of $n = 60$ and $p = 100$, the coverage of most of the approaches are close to 95\%, with the proposed approaches and the Horseshoe+ being almost right on the value, while the Horseshoe and DL are a bit lower than the target, and the NBP being a bit large. For the $n=60$ case, this is a very difficult problem even when $p=100$, and we see that the sensitivity  (power) for all approaches are quite low due to the intervals containing zero even for the majority of the true non-zeros. The sensitivity for the R2-D2 approaches are highest, along with the Horseshoe+ and Normal-Beta Prime. However, in order to do so, the Horseshoe+ and Normal-Beta Prime intervals are quite wide in comparison, thus showing that our proposed approaches look to have good posterior inference properties in addition to their outstanding performance in estimation and variable importance ordering.

\begin{table}[htbp]
	\spacingset{1.1}
	\centering
	\renewcommand{\arraystretch}{1.2}
	\begin{tabular}{|l|c|c|c|c|}
		\hline
		& Avg Width & Coverage & Sensitivity & Specificity \\ 
		\hline
		Horseshoe & 0.81 & 0.936 & 0.190 & 0.999 \\
		\hline 
		Horseshoe+& 1.23 & 0.953 & 0.310 & 0.999 \\      
		\hline
		Normal-BetaPrime & 1.83 & 0.972 & 0.301 & 0.999 \\
		\hline
		Dirichlet-Laplace & 0.91 & 0.939 & 0.150 & 1 \\ 
		\hline
		R2D2 - Conditional & 0.98 & 0.947 & 0.350 & 0.999 \\  
		\hline
		R2D2 - Marginal & 1.11 & 0.948 & 0.290 & 1 \\   
		\hline
	\end{tabular}
	\caption{\normalsize Coverage properties of 95\% marginal posterior credible intervals. Results are based on the 200 datasets from Table \ref{table-sim-t3} with $p = 100$ and $\rho = 0.5$.}
	\label{table-cover}
\end{table}

{\bf Higher dimensional setting.} Table  \ref{table-largep} shows the results for $p = 2000$ and $n=60$ for scenario i with $\rho = 0.5$. Other settings are similar and hence not shown. Table  \ref{table-largep} shows the estimation error and AUC as before. We also include coverage of the 95\% intervals, and also the coverage on only the non-zero coefficients. For estimation and AUC, we again see that the performance of the proposed R2-D2 approaches perform very well in this higher dimensional case.
 
In terms of coverage of the 95\% intervals, as shown in \cite{van2017uncertainty} (Theorems 1 and 2), for the HS credible intervals in the special case of the Normal means model, the coverage will either go to zero or one depending on the size of the true coefficient. We see that in this case of $p = 2000$ and $n = 60$, all methods have overall coverage of nearly 100\%. However, stark differences appear in looking at the coverage on the non-zero coefficients only (of which there are only 10 out of 2000 here). We see that the HS, in particular, almost never covers these non-zero values, with coverage of 1.5\% (HS) or 5\% (HS+). This is anticipated by the results of \cite{van2017uncertainty} given that it attempts to adapt to the sparsity, thus covering the zeros, at the expense of the non-zeros. We see that the other methods do better at coverage of the non-zeros here (with the marginal R2-D2 way up at 47\%). We see in this higher dimensional case, the proposed approaches exhibiting more stability, both in estimation error, and in the coverage.

\begin{table}[htbp]
	\spacingset{1.1}
	\centering
	\renewcommand{\arraystretch}{1.2}
	\begin{tabular}{|l|c|c|c|c|}
		\hline
		& \multicolumn{4}{c|}{$p = 2000$, $n = 60$ [non-zero coefficients from $t_3$]} \\
		\hline
		& SSE & AUC & Coverage & Coverage on non-zeros\\ 
		\hline
		Horseshoe & 249 (9.6) & 44 & 0.995 & 0.015 \\
		\hline 
		Horseshoe+& 233 (8.9) & 51 & 0.996 & 0.050 \\      
		\hline
		Normal-BetaPrime & 468 (15.7) & 61 & 0.995 & 0.240 \\
		\hline
		Dirichlet-Laplace & 240 (8.7) & 56 & 0.996 & 0.165 \\ 
		\hline
		R2D2 - Conditional & 201 (7.1) & \textbf{65} & 0.996 & 0.271 \\  
		\hline
		R2D2 - Marginal & \textbf{196 (6.9)} & 64 & 0.997 & \textbf{0.466} \\   
		\hline
	\end{tabular}
	\caption{\normalsize Simulation setting 1 with $p = 2000$, $n = 60$, and $\rho = 0.5$. Average sum of squared error (with standard error), area under the ROC curve, overall coverage of 95\% credible intervals on all 2000 coefficients, and coverage of the 95\% credible intervals on the 10 non-zero coefficients only. Results are based on 200 datasets.}
	\label{table-largep}
\end{table}

\section{Data Examples}
We study the predictive performance of the posterior generated by the  R2-D2  prior through a variety of real examples which exhibit varying structures. These three data sets have many more parameters than observations, and have very different correlation structures. 
The Cereal data consists of starch content measurements from 15 observations with 145 infrared spectra measurements as predictors. 
The data is provided with the \verb#chemometrics# R package. 
The Cookie data arises from an experiment testing the near-infrared (NIR) spectroscopy of biscuit dough in which the fat content is measured on 72 samples, with 700 NIR spectra measurements as predictors.
The data was generated in the experiment by \cite{osborne1984application}, and is available in the \verb#ppls# R package. 
The Multidrug data are from a pharmacogenomic study investigating the relationship between the drug concentration (at which 50\% growth is inhibited for a human cell line) and expression of the adenosine triphosphate binding cassette transporter \citep{szakacs2004predicting}. 
The data consists of 853 drugs as predictors, 60 samples of human cell lines using the ABCA3 transporter as the response, and is available in the \verb#mixOmics# R package. 
In the statistics literature, the Cereal and Multidrug data were both studied by \cite{polson2012local} and \cite{griffin2013some}; and the Cookie data was studied by \cite{brown2001bayesian} and \cite{ghosh2015bayesian}. 

The three datasets nicely represent 3 different correlation structures among the predictor variables. The Multidrug covariates have low to moderate pairwise correlations, the Cookie covariates are highly positively correlated, and the Cereal covariates have a wide range that are both positively and negatively correlated. 
Figure \ref{figure-histogram} shows histograms of all pairwise correlations for each of the data sets.

\begin{figure}[htbp]
	    \spacingset{1.1}
	\centering
	\includegraphics[width=0.3\linewidth]{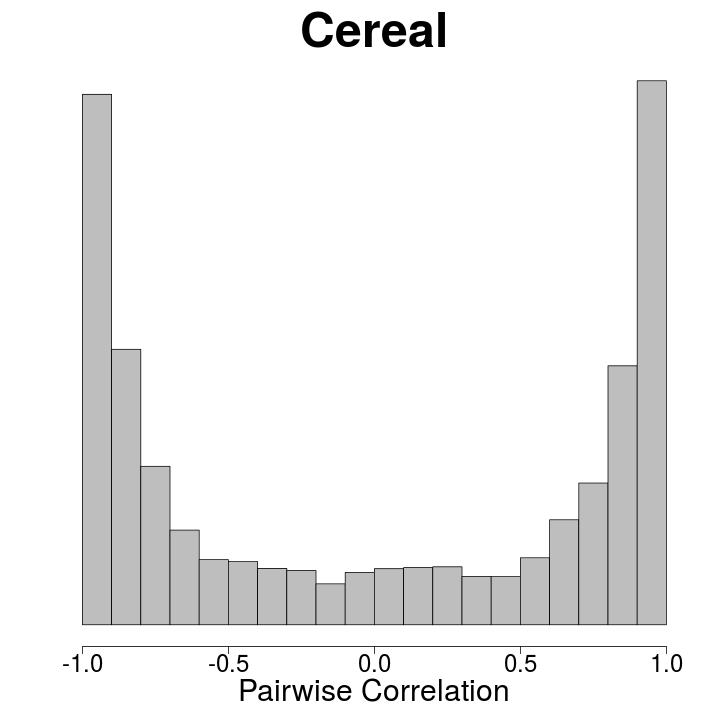}  
	\includegraphics[width=0.3\linewidth]{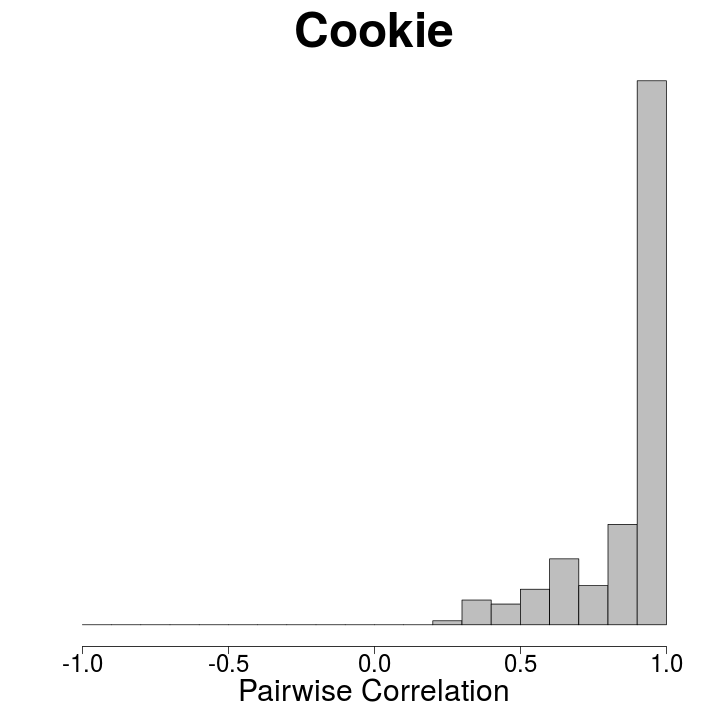}  
	\includegraphics[width=0.3\linewidth]{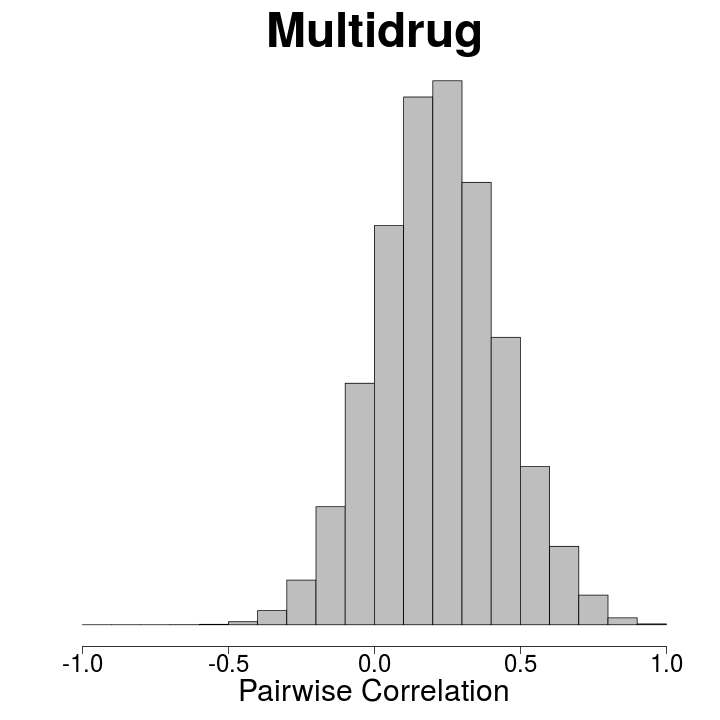}  
	\caption{Histograms of pairwise correlations among the predictor variables for the three datasets.}
	\label{figure-histogram}
\end{figure}

\begin{table}[htbp]
	\centering
	\spacingset{1.1} 
		\renewcommand{\arraystretch}{1.2}
	\begin{tabular}{|l|c|c|c|}
		\hline
		& Cereal & Cookie & Multidrug \\ 
		\hline
		$n$ & 15 & 72 & 60 \\ 
		$p$ & 145 & 700 & 853 \\ 
		\hline
		Horseshoe  & 14.1 (1.0) & 8.3 (0.2)& 15.6 (0.5) \\ 
		\hline
		Horseshoe+ & 14.2 (0.9) & 9.1 (0.3)  & 15.0 (0.5) \\ 
		\hline
		Normal-BetaPrime & 25.4 (1.5) & 11.9 (0.4)  & 18.4 (0.6) \\ 
		\hline
		Dirichlet-Laplace & 15.1 (1.2) & 12.1 (0.5)  & \textbf{12.2 (0.3)} \\ 
		\hline
		R2-D2 - Conditional & 12.2 (0.5) & 9.8 (0.3)  & 12.7 (0.3) \\ 
		\hline
		R2-D2 - Marginal & \textbf{12.1 (0.5)} & \textbf{8.1 (0.2)} & 12.6 (0.3) \\ 
		\hline
	\end{tabular}
	\caption{Average mean square prediction error (and standard errors) for each of the data examples.}
	\label{table-real}
\end{table}

We randomly split each data set into a training and testing sets to evaluate the out-of-sample predictive performance. 
For each data set, 75\% of the observations were used for training, and the remaining 25\% were used for estimating the mean squared prediction error (MSPE) between the test sample and predictions.
This process was repeated to create 200 data sets for each example. 
The same 5 approaches as in the simulations were used on each of the datasets. Due to the various correlation structures and dimensions, this gives a range of potential data structures for comparison of the approaches. 

The average MSPE results are given in Table \ref{table-real}. We see that the R2-D2 approaches consistently outperform the existing methods across the datasets. Note that the Horseshoe and Horseshoe+ perform well on the Cookie data, but are significantly worse than the others on the other 2 datasets.

\section{Discussion}

In this paper, we propose a  shrinkage prior motivated by assuming a prior on $R^2$. The prior exhibits polynomial behavior both around the origin and in the tails and compares favorably with other global-local shrinkage priors. Although the motivation of our R2-D2 prior is via starting with a prior on $R^2$, the resultant prior is simply a member of the class of global-local shrinkage priors, which can then be applied directly to other models, as with other priors. The prior is represented by a hierarchical scale mixture of normals, which can then be implemented in a generalized linear model or other regression setting. The hyperparameters would no longer have the interpretation as parameters of a Beta prior on the $R^2$ of the model. But the form and properties of the resulting prior, such as tail behavior and concentration around zero remain directly useable as is with the other global-local priors.  

There is scope for further advances in algorithms for MCMC sampling for these posteriors, just as there has been for sampling from other prior proposals, as for example, key sampling approaches for the high-dimensional case as in \cite{Bhatt2016sampling}.

\bibliography{LitSummary}
\bibliographystyle{agsm}

\clearpage
\appendix
\section{Appendix: Technical Details} 

\textbf{Definition of the Meijer G-function.}

A general definition of the Meijer G-function is given by the following line integral in the complex plane \citep{bateman1953higher}:
\[
G^{m,n}_{p,q}\left(z \left\vert^{a_1, \dots, a_p}_{b_1,\dots, b_q} \right. \right ) = \frac{1}{2\pi i } \int_L \frac{\prod_{j=1}^m \Gamma(b_j-s)\prod_{j=1}^n \Gamma(1-a_j+s)}{\prod_{j=m+1}^q \Gamma(1-b_j+s)\prod_{j=n+1}^p \Gamma(a_j-s) } z^s \, ds 
\]
where $\Gamma(\cdot)$ denotes the gamma function and  $L$  in the integral represents the path to be followed while integrating. The definition holds under the following assumptions: 
\begin{itemize}
	\item $0\leq m\leq q$ and $0\leq n \leq p$, where $m,n,p$ and $q$ are integer numbers
	\item $a_k-b_j\neq 1,2,3,\dots $ for $k=1,2,\dots,n$ and $j=1,2\dots,m$
	\item $z\neq 0$. 
\end{itemize}

\begin{proof} [\textbf{Proof of Proposition \ref{prop1}}]
Derivation of Equation (\ref{eq:priorbetabf}):
Let $\gammabf$ be uniformly distributed on the $p-1$ dimensional unit sphere. 
That is, 
\begin{align*}
p(\gammabf) &= \frac{\Gamma(p/2)}{2\pi^{p/2}} \onebf\left\{ \gammabf^T\gammabf = 1 \right\}. 
\end{align*} 
Define $\theta = \frac{\R}{1-\R} \sim $ BP($a,b$).
Make the transformation $r = \sqrt{\theta}$:
\begin{align*}
p(\gammabf,\theta) &= p(\gammabf)p(\theta) = \frac{\Gamma(p/2)}{2\pi^{p/2}B(a,b)} \theta^{a-1} \left( 1+\theta \right)^{-a-b} \onebf\left\{ \gammabf^T\gammabf = 1 \right\}, \\ \text{and }
p(\gammabf,r) &= \frac{\Gamma(p/2)}{2\pi^{p/2}B(a,b)} r^{2a-2} \left( 1+r^2 \right)^{-a-b} \onebf\left\{ \gammabf^T\gammabf = 1 \right\} |2r|\\
&= \frac{\Gamma(p/2)}{\pi^{p/2}B(a,b)} r^{2a-1} \left( 1+r^2 \right)^{-a-b} \onebf\left\{ \gammabf^T\gammabf = 1 \right\}. 
\end{align*} 
Make the transformation $\zbf = r\gammabf$. 
The Jacobian is $r^{p-1}$ when decomposing a vector (supported on $\mathbbm R^p$) to a radius (supported in $\mathbbm R_+$) and a unit direction (supported on the $p-1$ unit sphere, $\mathbbm S_{p-1}$), so the reciprocal Jacobian is  $|(\zbf^T\zbf)^{\frac{p-1}{2}}|^{-1} = (\zbf^T\zbf)^{\frac{1-p}{2}}$. Thus,
\begin{align*}
p(\zbf) &= \frac{\Gamma(p/2)}{\pi^{p/2}B(a,b)} (\zbf^T\zbf)^{a-1/2} \left( 1+\zbf^T\zbf \right)^{-a-b} \times  (\zbf^T\zbf)^{\frac{1-p}{2}}\\
&= \frac{\Gamma(p/2)}{\pi^{p/2}B(a,b)} (\zbf^T\zbf)^{a-p/2} \left( 1+\zbf^T\zbf \right)^{-a-b}
\end{align*} 
Finally, make the transformation $\betabf = \V\D^{-1/2}\zbf \sigma $, so that $\zbf^T\zbf = \betabf^T\Xbf^T\Xbf\betabf /(\sigma^2n)$, where $\V\D\V^T$ is the eigendecomposition of $\X^T\X/n=\Sigmabf_\X$:
\begin{align*}
p(\betabf \mid \sigma^2) &= \frac{\Gamma(p/2)}{\pi^{p/2}B(a,b)} \left(\betabf^T\Xbf^T\Xbf\betabf /(\sigma^2n)\right)^{a-p/2} \left( 1+\betabf^T\Xbf^T\Xbf\betabf /(\sigma^2n) \right)^{-a-b} 
|\V\D^{-1/2}\sigma|^{-1}\\
&= \frac{\Gamma\left( p/2\right)|\Sigmabf_\X|^{1/2} }{B(a,b)~\pi^{p/2}} \left(\sigma^2 \right)^{-a} \left( \betabf^T \Sigmabf_\X \betabf \right)^{a-p/2} \left(1 + \betabf^T \Sigmabf_\X \betabf /\sigma^2\right)^{-(a+b)}. 
\end{align*} 
\end{proof}

\begin{proof} [\textbf{Proof of Proposition \ref{gibbs.uoe}}]
Mixture of normals representation for $a \leq p/2$.  
\begin{align*}
\text{Let } \betabf\mid z, w, \Sigmabf &\sim N\left(0, z w \sigma^2 \left(\X^T\X\right)^{-1} \right), \\
z &\sim \text{Inverse-Gamma}\left(  b, n/2\right), \text{ and} \\
w &\sim \text{ beta}\left( a, p/2-a \right).
\end{align*}

Define $\theta = \frac{1-w}{w}$, so $\theta \sim $ BP($p/2-a$, $a$). To simplify notation let $\Sigmabf = \X^T\X /\sigma^2$ and $\Sigmabf_\X = \X^T\X/n$.
Then we have 
\begin{align*}
p(\betabf, \theta, z \mid  \Sigmabf) =& \, (2\pi)^{-p/2} z^{-p/2} (1+\theta)^{p/2} |\Sigmabf|^{1/2} \exp \left\{ -\frac{1+\theta}{2z}\betabf^T\Sigmabf\betabf \right\}\\
&\times \frac{n^b 2^{-b}}{\Gamma(b)}  z^{-b-1} \exp\left\{ -n/2z \right\}
\frac{\Gamma(p/2)}{\Gamma(a)\Gamma(p/2-a)} \theta^{p/2-a-1}(1+\theta)^{-p/2} \\
=& \, \frac{2^{-p/2-b}n^{b}|\Sigmabf|^{1/2}\Gamma(p/2)}{\pi^{p/2}\Gamma(a)\Gamma(b)\Gamma(p/2-a)} 
z^{-p/2-b-1} \theta^{p/2-a-1} \exp \left\{ -\frac{n+\betabf^T\Sigmabf\betabf + \theta\betabf^T\Sigmabf\betabf}{2z} \right\} , 
\end{align*}
and 
\begin{align*}
p(\betabf,z\mid \Sigmabf) =& \, \frac{2^{-\frac{p}{2}-b}n^{b} |\Sigmabf|^{1/2} \Gamma(\frac{p}{2})}{\pi^{\frac{p}{2}}\Gamma(a)\Gamma(b)\Gamma(\frac{p}{2}-a)} z^{-\frac{p}{2}-b-1} \exp \left\{ -\frac{n+\betabf^T\Sigmabf\betabf}{2z} \right\}
\int \theta^{\frac{p}{2}-a-1} \exp \left\{ -\theta \frac{\betabf^T\Sigmabf\betabf}{2z} \right\} d\theta\\
=& \, \frac{2^{-a-b} n^{b}|\Sigmabf|^{1/2}\Gamma(\frac{p}{2})}{\pi^{\frac{p}{2}}\Gamma(a)\Gamma(b)} z^{-a-b-1} \exp \left\{ -\frac{n+\betabf^T\Sigmabf\betabf}{2z} \right\} \left( \betabf^T\Sigmabf\betabf \right)^{a-\frac{p}{2}}. 
\end{align*}
Hence then
\begin{align*}
p(\betabf\mid \Sigmabf) =& \, \frac{2^{-a-b}n^b|\Sigmabf|^{1/2}\Gamma(p/2)}{\pi^{p/2}\Gamma(a)\Gamma(b)} \left( \betabf^T\Sigmabf\betabf \right)^{a-p/2}
\int z^{-a-b-1} \exp \left\{ -\frac{n+\betabf^T\Sigmabf\betabf}{2z} \right\} dz\\
=& \, \frac{2^{-a-b}n^b|\Sigmabf|^{1/2}\Gamma(p/2)}{\pi^{p/2}\Gamma(a)\Gamma(b)} \left( \betabf^T\Sigmabf\betabf \right)^{a-p/2}
{ \left(  n+\betabf^T\Sigmabf\betabf \right)}^{-a-b} 2^{a+b} \Gamma(a+b)\\
=& \, \frac{|n\Sigmabf|^{1/2}\Gamma(p/2)}{\pi^{p/2}B(a,b)} \left( \betabf^T\Sigmabf\betabf/n \right)^{a-p/2} { \left(  1+\betabf^T\Sigmabf\betabf/n \right)}^{-(a+b)}\\
=& \frac{\Gamma\left(p/2\right)|\Sigmabf_\X|^{1/2} }{B(a,b) \pi^{p/2}} \left(\sigma^2\right)^{-a} \left( \betabf^T \Sigmabf_\X \betabf \right)^{a-p/2} \left(1 + \betabf^T \Sigmabf_\X \betabf \right/\sigma^2)^{-(a+b)}. 
\end{align*}
\end{proof}

\begin{proof} [\textbf{Proof of Proposition \ref{prop.uoe.equiv}}]
If we let $\Lambdabf^{-1} = \Xbf^T\Xbf = n\V\D\V^T$, then 
\begin{align*}
p(\gammabf) &= C_{X, \Lambda} \exp\left\{ -\gammabf^T \D^{-1/2}\V^T\V\D\V^T\V \D^{-1/2} \gammabf /2 \right\} \ind\left\{ \gammabf^T\gammabf = 1 \right\} = \frac{1}{C} \, \ind\left\{ \gammabf^T\gammabf = 1 \right\}
\end{align*} 
where the constant $C = \int \ind\left\{ \gammabf^T\gammabf = 1 \right\}d\gammabf = \frac{2\pi^{p/2}}{\Gamma(p/2)}$, the surface area of the a $p-1$ dimensional unit sphere. 
The rest of the proof is identical to that of Proposition 1 deriving the distribution in (\ref{eq:priorbetabf}).
It is clear that $\betabf$ is uniform given the ellipsoid, because it's an elliptical distribution, but it also comes from putting a uniform distribution on $\gammabf$. 
\end{proof}

\begin{proof}[ \textbf{Proof of  	Proposition \ref{proposition_1}}]
	The proposition follows from below derivations.  For $\omega>0$, 
	\begin{eqnarray*}
		\pi(\omega) &=& \int_{ 0}^\infty \pi(\omega\mid\xi) \pi(\xi) d\xi = \int_{ 0}^\infty \frac{\xi^a}{\Gamma(a)} \omega^{a-1}e^{-\xi \omega} 
		\frac{1}{\Gamma(b)} \xi^{b-1}e^{-\xi} d\xi \\ 
		&=&    \frac{1}{\Gamma(a) \Gamma(b)} \omega^{a-1} \int_{ 0}^\infty  \xi^{a+b-1} e^{-(1+\omega)\xi} d\xi \\
		&=&  \frac{1}{\Gamma(a) \Gamma(b)} \omega^{a-1} \frac{\Gamma(a+b)}{(1+\omega)^{a+b}} \\
		&=& \frac{\Gamma(a+b)}{\Gamma(a) \Gamma(b)}  \frac{\omega^{a-1}}{(1+\omega)^{a+b}} . 
	\end{eqnarray*}

\end{proof}

\begin{proof} [\textbf{Proof of Proposition \ref{proposition_tauphi_j} }] 
	The proposition follows from {Lemma IV.3 of \cite{zhou2015negative}}: 
	Suppose $y$ and $(y_1,\cdots,y_K)$ are independent with $y\sim\text{Ga}(\phi, \xi)$, 
	and $(y_1,\cdots,y_K)\sim\text{Dir}(\phi p_1,\cdots,\phi p_K)$, where $\sum_{k=1}^Kp_k=1$. Let $x_k=yy_k$, then  $x_k\sim\text{Ga}(\phi p_k, \xi)$ independently for $ k=1,\cdots, K $.  
\end{proof}

\begin{proof} [\textbf{Proof of Proposition \ref{proposition_marginal density}} ] 
	The marginal density of $\beta$ for the  R2-D2  prior is 
	\begin{eqnarray} \label{eq_marginal_DH}
	&& \pi_{\text{R2-D2}}(\beta)   \nonumber  \\
	& = & \frac{ \Gamma(a_\pi+b)}{\Gamma(a_\pi) \Gamma(b)}  \int_{0}^\infty \frac{1}{2 (\lambda/2)^{1/2}}\exp\{-\frac{|\beta|}{(\lambda/2)^{1/2}}\} \frac{\lambda^{a_\pi-1}}{(1+\lambda)^{a_\pi+b}} \,d\lambda  \nonumber  \label{eq_R2D2marginal density 1st} \\
	&= & \frac{2^{a_\pi}\Gamma(a_\pi+b)}{\Gamma(a_\pi)\Gamma(b)}   \int_{ 0}^\infty   \exp (- |\beta|x )\frac{x^{ 2b}}{(x^2+2)^{a_\pi+b}}  \, dx  \label{eq_mariginal density_integral} . 
	\end{eqnarray}
	Let  $\mu = |\beta|$, $\nu=b+1/2$,  $u^2=2$, and $\rho = 1-a_\pi-b$,    since  $|\text{arg } u|<{\pi}/ {2}$, Re$\mu  >0$, and Re$\nu>0$,    so  we have 
	\begin{eqnarray}
	\pi_{\text{R2-D2}}(\beta)  
	&=& \frac{2^{a_\pi}\Gamma(a_\pi+b)}{\Gamma(a_\pi) \Gamma(b)}  \int_{  0}^\infty  \exp(- \mu x ) x^{2\nu-1} (x^2+u^2)^{\rho-1} \, dx  \nonumber\\
	&=& \frac{2^{a_\pi}\Gamma(a_\pi+b)}{\Gamma(a_\pi)\Gamma(b)} \frac{u^{2\nu+2\rho-2}}{2\pi^{1/2} \Gamma(1-\rho)}G^{3,1}_{1,3}\left( \frac{\mu^2 u^2}{4} \left\vert_{1-\rho-\nu,0,\frac{1}{2}}^{1-\nu}\right.
	\right) \nonumber     \\
	&=&  \frac{2^{a_\pi}\Gamma(a_\pi+b)}{\Gamma(a_\pi)\Gamma(b)} \frac{2^{1/2-a_\pi}}{2\pi^{1/2} \Gamma(a_\pi+b)}  G^{3,1}_{1,3}\left( \frac{\beta^2}{2} \left\vert_{a_\pi-\frac{1}{2},0,\frac{1}{2}}^{\frac{1}{2}-b}\right. 	\right)   \nonumber  \\
	&=&   \frac{1}{ (2\pi)^{1/2}\Gamma(a_\pi)\Gamma(b) }  G^{3,1}_{1,3}\left( \frac{\beta^2}{2} \left\vert_{a_\pi-\frac{1}{2},0,\frac{1}{2}}^{\frac{1}{2}-b}\right.   \right)  \nonumber \\
	&=& \frac{1}{ (2\pi)^{1/2}\Gamma(a_\pi)\Gamma(b) }   G^{1,3}_{3,1}\left( \frac{2}{\beta^2} \left\vert_{ \frac{1}{2}+b }^{\frac{3}{2} -  a_\pi,1,\frac{1}{2}}\right.
	\right)
	\nonumber  
	\end{eqnarray}
	where $G(.)$ denotes the Meijer  G-Function,  the second equality   follows from equation 3.389.2 in \cite{zwillinger2014table}, 
	and the last  equality  follows from 
	{16.19.1} in \cite{NISTDLMF}. 
	Proposition \ref{proposition_marginal density}  follows. 
\end{proof}

\begin{proof} [\textbf{Proof of Theorem \ref{theorem_tail properties}}]
	For the proof of Theorem \ref{theorem_tail properties}, we will use the following lemma found in \cite{miller2006applied}. 
	\begin{lemma} (Watson's Lemma)
		Suppose  $F(s) = \int_{ 0}^\infty e^{-st} f(t) \, dt$,   $f(t) = t^\alpha g(t)$  where $g(t)$ has an infinite number of derivatives in the neighborhood of $t=0$,   with $g(0)\neq 0$, and $\alpha>-1$. 
		Suppose  $|f(t)| < Ke^{ct}$ for any $t\in (0,\infty)$, where $K$ and $c$ are independent of $t$. Then, 
		for $s>0$ and $s\rightarrow\infty$, 
		\[
		F(s) = \sum\limits_{k=0}^n \frac{g^{(k)}(0)} {k!}   \frac{\Gamma(\alpha+k+1)}{s^{\alpha+k+1}} + O(\frac{1}{s^{\alpha+n+2}}). 
		\]
	\end{lemma}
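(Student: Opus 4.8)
The plan is to prove Watson's Lemma by the classical device of splitting the integral at a fixed small cutoff $\delta>0$, Taylor-expanding $g$ on the inner piece, and controlling the outer piece with the exponential bound. First I would write $F(s) = \int_0^\delta e^{-st} f(t)\, dt + \int_\delta^\infty e^{-st} f(t)\, dt$, choosing $\delta>0$ small enough that $g$ has the asserted derivatives throughout $[0,\delta]$. For the outer integral, I use the hypothesis $|f(t)| < Ke^{ct}$ to bound it by $K\int_\delta^\infty e^{-(s-c)t}\,dt = Ke^{-(s-c)\delta}/(s-c)$ for $s>c$. This is exponentially small in $s$, hence negligible compared with any negative power of $s$; in particular it is absorbed into the claimed remainder $O(s^{-(\alpha+n+2)})$.

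The heart of the argument is the inner integral. By Taylor's theorem with remainder, on $[0,\delta]$ one can write $g(t) = \sum_{k=0}^n \frac{g^{(k)}(0)}{k!} t^k + R_n(t)$, with a remainder satisfying $|R_n(t)|\le C t^{n+1}$, where $C$ depends only on a bound for the $(n+1)$-th derivative of $g$ near the origin. Multiplying by $t^\alpha$, I would integrate each polynomial term against $e^{-st}$ and reconstitute the full half-line integral via $\int_0^\delta e^{-st} t^{\alpha+k}\, dt = \int_0^\infty e^{-st} t^{\alpha+k}\, dt - \int_\delta^\infty e^{-st} t^{\alpha+k}\, dt = \Gamma(\alpha+k+1)\, s^{-(\alpha+k+1)} + (\text{exponentially small})$, where the Gamma integral is valid because $\alpha+k>-1$. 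Summing over $k$ from $0$ to $n$ produces exactly the leading asymptotic series in the statement.

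It then remains to bound the contribution of the remainder term, $\int_0^\delta e^{-st} t^\alpha R_n(t)\, dt$. Using $|R_n(t)| \le C t^{n+1}$ and extending the range back to $[0,\infty)$ gives $\bigl|\int_0^\delta e^{-st} t^{\alpha} R_n(t)\, dt\bigr| \le C\int_0^\infty e^{-st} t^{\alpha+n+1}\, dt = C\,\Gamma(\alpha+n+2)\, s^{-(\alpha+n+2)}$, which is precisely of order $O(s^{-(\alpha+n+2)})$. Combining the three contributions — the leading series, the exponentially small corrections from replacing $\int_0^\delta$ by $\int_0^\infty$ and from the outer integral, and the remainder — yields the stated expansion.

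The main obstacle is getting the remainder order sharp: one must ensure the error is genuinely $O(s^{-(\alpha+n+2)})$ rather than merely $o(s^{-(\alpha+n+1)})$. This hinges on the uniform Taylor bound $|R_n(t)|\le Ct^{n+1}$ on $[0,\delta]$, which is exactly where the smoothness of $g$ (possessing at least $n+1$ derivatives near $0$, with $g(0)\neq 0$ guaranteeing the leading coefficient is nonzero) is used in an essential way. The remaining hypotheses play supporting roles: $\alpha>-1$ secures integrability of $t^\alpha$ at the origin, while $|f(t)|<Ke^{ct}$ secures both convergence of $F(s)$ for $s>c$ and the negligibility of every tail contribution.
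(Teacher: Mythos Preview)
Your argument is the standard and correct proof of Watson's Lemma: split at a small cutoff, Taylor-expand $g$ on the inner piece, use the Gamma integral term by term, bound the Taylor remainder by $Ct^{n+1}$ to get the $O(s^{-(\alpha+n+2)})$ error, and kill the outer piece with the exponential growth bound. All the hypotheses are invoked exactly where they are needed, and the sharpness of the remainder order is handled correctly.

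However, the paper does not actually prove this lemma at all. It is stated as a quoted result from \cite{miller2006applied} and used as a black box in the proof of Theorem~\ref{theorem_tail properties}. So there is nothing to compare against: your proposal supplies a complete classical proof where the paper simply cites the literature.
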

	According to  equation (\ref{eq_mariginal density_integral})  in the proof of Proposition \ref{proposition_marginal density},  we denote 
	$F(|\beta|) \equiv	\pi_{\text{R2-D2}}(\beta) $, as follows, 
	\[
	F(|\beta|) = \frac{2^{a_\pi}\Gamma(a_\pi+b)}{\Gamma(a_\pi)\Gamma(b)}   \int_{ 0}^\infty   \exp(- |\beta|x ) \frac{x^{ 2b}}{(x^2+2)^{a_\pi+b}}  \, dx  =   \int_{ 0}^\infty   e^{- |\beta|x} f(x)  \, dx , 
	\]
	where  $f(t) =C^\ast  {t^{ 2b}} / {(t^2+2)^{a_\pi+b}}  \equiv t^{ 2b} g(t) $, 
	$C^\ast = {2^{a_\pi}\Gamma(a_\pi+b)}/ \{\Gamma(a_\pi)\Gamma(b)\}$, and $g(t) = C^\ast(t^2+2)^{-a_\pi-b}$ with $g(t)$ has an infinite number of derivatives in the neighborhood of $t=0$,   with $g(0)\neq 0$.   So the marginal density of R2-D2 prior  is the Laplace transforms of $f(\cdot)$.  By Watson's Lemma,  since $|f(t)| < Ke^{ct}$ for any $t\in (0,\infty)$, where $K$ and $c$ are independent of $t$, then as $|\beta|\rightarrow \infty$, 
	%
	\begin{eqnarray*}
		F(|\beta|) &= &  \sum\limits_{k=0}^n  \frac{g^{(k)}(0)}{k!} \frac{\Gamma(2b+k+1)}{|\beta|^{2b+k+1}} + O(\frac{1}{|\beta|^{2b+n+2}}) , 
	\end{eqnarray*} and setting $n=2$ gives
	\begin{eqnarray}
		F(|\beta|)&=& C^\ast\left\{  \frac{\Gamma(2b+1)}{ 2^{a_\pi+b} |\beta|^{2b+1}} + 0\frac{\Gamma(2b+2)}{|\beta|^{2b+2}}  -  \frac{a_\pi+b)\Gamma(2b+3)}{ 2^{a_\pi+b}|\beta|^{2b+3}} \right\}  
		+ O(\frac{1}{|\beta|^{2b+4}}) \nonumber \\
		&=& C^\ast  2^{-a_\pi-b} \left\{ \frac{\Gamma(2b+1)}{|\beta|^{2b+1}} -(a_\pi+b)\frac{\Gamma(2b+3)}{|\beta|^{2b+3}} \right\}  + O(\frac{1}{|\beta|^{2b+4}})  \label{eq-temp}    \label{eq_tail_detail}\\
		&=& O(\frac{1}{|\beta|^{2b+1} }).  \nonumber 
	\end{eqnarray}
	Hence,  when $b < 1/2$,   as $|\beta|\rightarrow \infty$,   we have 
	\begin{eqnarray*}
		&& \frac{\pi_{\text{R2-D2}}(\beta)}{\frac{1}{\beta^2}}
		= \frac{C^\ast}{  2^{a_\pi+b}} \left\{ \frac{\Gamma(2b+1)}{|\beta|^{2b-1}} -(a_\pi+b)\frac{\Gamma(2b+3)}{|\beta|^{2b+1}} + O(\frac{1}{|\beta|^{2b+2}}) \right\}\rightarrow \infty . 
	\end{eqnarray*}
\end{proof}

\begin{proof} [\textbf{Proof of Theorem \ref{theorem_tail properties Pareto}}]
	It is obvious   based on the marginal density of the generalized double Pareto prior. 
\end{proof}

\begin{proof} [\textbf{Proof of Theorem \ref{theorem_tail properties DL}}]	 
	According to 
	{10.25.3} in \cite{NISTDLMF}, when both $\nu$ and $z$ are real,  if $z\rightarrow\infty$, then   $K_\nu (z) \approx  \pi^{1/2} (2z)^{-1/2}e^{-z}$.    Then as $|\beta| \rightarrow \infty$, the  marginal density of the Dirichlet-Laplace prior given in \cite{bhattacharya2015dirichlet}  satisfies
	\begin{eqnarray*}
		\pi_{\text{DL}}(\beta) 
		&=&  \frac{1}{2^{(1+a^\ast)/2}\Gamma(a^\ast)}  |\beta|^{(a^\ast-1)/2} K_{1-a^\ast}((2|\beta|)^{1/2}) \\
		&\approx &  \frac{1}{2^{(1+a^\ast)/2}\Gamma(a^\ast)}  |\beta|^{(a^\ast-1)/2}  \pi^{1/2}  2^{-3/4}  |\beta|^{-1/4} \exp \{- \sqrt{2|\beta|}\}  \\
		&=&  C_0 |\beta|^{a^\ast/2-3/4} \exp \{- \sqrt{2|\beta|}\}  
		= O(\frac{|\beta|^{a^\ast/2-3/4}}{ \exp\{ { \sqrt{2|\beta|}\} } }) , 
	\end{eqnarray*}
	where $C_0 = { {\pi^{1/2} 2^{-3/4}}} / \{2^{(1+a^\ast)/2}\Gamma(a^\ast)\}$ is a constant value. 
	Furthermore,  as $|\beta| \rightarrow \infty$,
	\[
	\frac{\pi_{\text{DL}}(\beta) }{1/\beta^2} \approx  C_0 |\beta|^{a^\ast/2+5/4}  \exp \{- \sqrt{2|\beta|}\}   \rightarrow 0 . 
	\]
\end{proof}

\begin{proof} [\textbf{Proof of Theorem \ref{theorem_center properties R2-D2}}]
	For the proof of Theorem \ref{theorem_center properties R2-D2}, we use the following lemma from \cite{fields1972asymptotic}.  Some useful notations used in the below proof:
	Denote $a_P = (a_1,\cdots, a_p)$, as  a vector, similarly,  $b_Q = (b_1,\cdots, b_q)$,  $c_M = (c_1,\cdots, c_m)$, and so on. Let 
	$\Gamma_n(c_P-t) = \prod_{k=n+1}^p \Gamma(c_k-t)$,  with 	$\Gamma_n(c_P-t) =1$ when $n=p$, 
	$\Gamma(c_M-t) = \Gamma_0(c_M-t) =\prod_{k=1}^m \Gamma(c_k-t)$,  
	$\Gamma^\ast(a_i-a_N) = \prod_{k=1;k\neq i}^n  \Gamma(a_i-a_k)$, and 
	\[	{}_pF_q\left(  ^{a_P}_{b_Q} \mid  w
	\right) = \sum\limits_{k=0}^{\infty} \frac{\Gamma(a_P+k) \Gamma(b_Q)}{\Gamma(b_Q+k)\Gamma(a_P)} \frac{w^k}{k!} 
	= \sum\limits_{k=0}^{\infty} \frac{\prod\limits_{j=1}^p\Gamma(a_j+k) \prod\limits_{j=1}^q\Gamma(b_j)}{\prod\limits_{j=1}^q\Gamma(b_j+k)\prod\limits_{j=1}^p\Gamma(a_j)} \frac{w^k}{k!}.
	\]
	\begin{lemma} \label{lemma_meijer g approximation}
		(Theorem 1 in \cite{fields1972asymptotic})
		Given  (i) $0\leq m \leq q$, $0\leq n \leq p$; (ii) $a_i - b_k $ is not a positive integer for $j=1, \cdots,  p$ and $k=1,.s , q$;   (iii) $a_i-a_k$ is not an integer for $i,k = 1,\cdots,  p$,  and $i\neq k$; and (iv)  $q< p$  or $q=p$ and $|z|>1$, 
		we have 
		\begin{eqnarray*}
			G_{p,q}^{m,n} 
			\left( z \left\vert^{  a_1, \cdots,a_p}_{b_1, \cdots, b_q} \right.
			\right) 
			=  \sum\limits_{i=1}^{n}      \frac{\Gamma^\ast (a_i-a_N) \Gamma(1+b_M-a_i) z^{-1+a_i}} {\Gamma_n(1+a_P-a_i)\Gamma_m(a_i-b_Q)} 
			{}_{q+1}F_p
			\left(  _{1+a_P-a_i}^{1,1+b_Q-a_i} 
			\left\vert\frac{(-1)^{q-m-n}}{z} \right. 
			\right). 
		\end{eqnarray*}
	\end{lemma}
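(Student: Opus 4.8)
The plan is to prove the expansion directly from the Mellin--Barnes integral that defines the Meijer $G$-function and to evaluate it as a sum of residues. Recall the representation
\[
G_{p,q}^{m,n}\!\left( z \left\vert^{a_1,\ldots,a_p}_{b_1,\ldots,b_q}\right.\right) = \frac{1}{2\pi i}\int_L \frac{\prod_{j=1}^m \Gamma(b_j - s)\,\prod_{j=1}^n \Gamma(1-a_j+s)}{\prod_{j=m+1}^q \Gamma(1-b_j+s)\,\prod_{j=n+1}^p \Gamma(a_j-s)}\, z^s\, ds ,
\]
where $L$ separates the poles of the factors $\Gamma(1-a_j+s)$, $j=1,\ldots,n$ (located at $s=a_j-1-\nu$, $\nu=0,1,2,\ldots$, lying to the left of $L$) from the poles of $\Gamma(b_j-s)$, $j=1,\ldots,m$ (lying to its right). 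Hypothesis (ii) ensures the two families never collide, and (iii) ensures that the left poles arising from distinct indices never coincide, so every left pole is simple.

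First I would show that under (iv) the contour may be closed by a sequence of arcs receding into the left half-plane, and that the integral over these arcs tends to zero. This is the genuinely analytic step: using Stirling's asymptotics for the Gamma quotient along rays with $\operatorname{Re}s\to-\infty$, the modulus of the integrand is governed by a factor of order $|s|^{(q-p)\operatorname{Re}s}$ times $|z|^{\operatorname{Re}s}$, which decays precisely when $q<p$, or $q=p$ and $|z|>1$. Granting this, the integral equals the sum of residues at the enclosed simple poles $s=a_i-1-\nu$, $i=1,\ldots,n$, $\nu\ge 0$. Since $\operatorname{Res}_{s=a_i-1-\nu}\Gamma(1-a_i+s)=(-1)^\nu/\nu!$ and all other factors are merely evaluated at $s=a_i-1-\nu$, each residue is an explicit product of Gamma values times $z^{a_i-1-\nu}$. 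Isolating the term $\nu=0$ reproduces exactly the prefactor $\Gamma^\ast(a_i-a_N)\,\Gamma(1+b_M-a_i)\,z^{-1+a_i}/\{\Gamma_n(1+a_P-a_i)\,\Gamma_m(a_i-b_Q)\}$ in the block notation already fixed above, after substituting $s=a_i-1$ into each Gamma factor and matching the index blocks $M,N,P,Q$.

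Finally I would collapse the sum over $\nu$ into a generalized hypergeometric function. Dividing the $\nu$-th residue by the $\nu=0$ residue and converting every shifted Gamma ratio into a Pochhammer symbol via $\Gamma(c+\nu)/\Gamma(c)=(c)_\nu$ and $\Gamma(c-\nu)/\Gamma(c)=(-1)^\nu/(1-c)_\nu$, the $m$ factors $\Gamma(b_j-s)$ together with the $q-m$ factors $\Gamma(1-b_j+s)^{-1}$ produce the numerator Pochhammers $\prod_{j=1}^q(1+b_j-a_i)_\nu$, while the $n-1$ factors $\Gamma(1-a_k+s)$ with $k\neq i$ together with the $p-n$ factors $\Gamma(a_j-s)^{-1}$ produce the denominator Pochhammers $\prod_{k\neq i}(1+a_k-a_i)_\nu$. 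With the residual $1/\nu!$ this is exactly ${}_qF_{p-1}$ in the parameters $1+b_Q-a_i$ over $1+a_k-a_i$ ($k\neq i$); the cited form ${}_{q+1}F_p$ is the same object written with a redundant upper parameter $1$ and the lower parameter $1+a_i-a_i=1$, which cancel because $(1)_\nu=\nu!$.

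The hard part will be this last bookkeeping. One must track the $(-1)^\nu$ contributions across all four index blocks — one from the pole itself, $n-1$ from the $\Gamma(a_i-a_k-\nu)$ numerator factors, and $q-m$ from inverting the $\Gamma(a_i-b_j-\nu)$ factors — and verify that they coalesce, together with the explicit $z^{-\nu}$, into precisely the argument $(-1)^{q-m-n}/z$ (using $(-1)^{n+q-m}=(-1)^{q-m-n}$). One must also confirm that the accumulated Gamma prefactors separate out cleanly as the stated coefficient, independent of $\nu$. Termwise convergence of each resulting ${}_{q+1}F_p$ under (iv) is then inherited from the convergence of the residue sum already established when closing the contour.
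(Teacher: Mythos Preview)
Your argument is essentially correct and follows the standard route to this identity: write the Meijer $G$-function as its Mellin--Barnes integral, close the contour to the left under hypothesis (iv), pick up the residues at the simple poles $s=a_i-1-\nu$ of the factors $\Gamma(1-a_i+s)$, and regroup the $\nu$-sum as a generalized hypergeometric series. Your sign accounting and the observation that the stated ${}_{q+1}F_p$ is just ${}_qF_{p-1}$ padded with a cancelling pair $(1)_\nu/(1)_\nu$ are both right.

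The paper, however, does not prove this lemma at all: it is quoted verbatim as Theorem~1 of \cite{fields1972asymptotic} and used as a black box in the proof of Theorem~\ref{theorem_center properties R2-D2}. So there is nothing to compare your approach against within the paper itself. What you have written is, in outline, exactly the argument Fields gives, and it is the canonical derivation of this expansion. If you intend to include a self-contained proof, the one nontrivial step you have correctly flagged --- the vanishing of the integral over the receding left arcs via Stirling, which is where hypothesis (iv) enters --- deserves to be written out carefully rather than ``granted''; the rest is algebra.
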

	
	Now to prove Theorem \ref{theorem_center properties R2-D2}, we have from Proposition \ref{proposition_marginal density} that, 
	the marginal density of  the R2-D2  prior has $\pi_{\text{R2-D2}}(\beta_j) =   (2\pi)^{-1/2}\{ \Gamma(a_\pi)\Gamma(b) \}^{-1}G^{m,n}_{p,q} \left(z  | . \right)$ with $m=1, \ n=3,\ p=3,\ q=1,\ a_1=3/2-a_\pi,\ a_2=1, \ a_3=1/2$,    $b_1 =1/2+b$, and $z=2/{\beta^2}$. 
	Conditions (i)-(iv)  in   Lemma  \ref{lemma_meijer g approximation} are  satisfied 
	for $|\beta|$ near 0, since $0<a_\pi<1/2$. 
	Denote 	 
	{  \begin{eqnarray*}
			C_0^\ast &=& { (2\pi)^{-1/2}(\Gamma(a_\pi)\Gamma(b))^{-1} },  \\
			C_1^\ast &=&C_0^\ast \Gamma(\frac{1}{2}-a_\pi)\Gamma(1-a_\pi)\Gamma(a_\pi) \Gamma(\frac{1}{2}+a_\pi)  >0 , \\
			C_2^\ast &=&C_0^\ast \Gamma(a_\pi-\frac{1}{2}) \Gamma(\frac{1}{2})\Gamma(\frac{1}{2}) \Gamma(\frac{3}{2}-a_\pi) <0  , \\
			C_3^\ast &=&C_0^\ast \Gamma(a_\pi-1)\Gamma(-\frac{1}{2})\Gamma(\frac{3}{2})\Gamma(2-a_\pi) > 0 , \\
			U_1(\beta^2) &=& 	
			\sum\limits_{k=0}^{\infty}(-1)^k  u_1(k, \beta^2),  \ 		u_1(k, \beta^2) = \frac{\Gamma(a_\pi+b+k) }{\Gamma(\frac{1}{2}+a_\pi+k)\Gamma(a_\pi+k) } \frac{ (\frac{\beta^2}{2})^{k+a_\pi-1/2}} {k!}  , \\
			U_2(\beta^2) &=&  	\sum\limits_{k=0}^{\infty} (-1)^k u_2(k, \beta^2),    \ 
			u_2(k, \beta^2)= 	\frac{\Gamma(\frac{1}{2}+b+k) }{\Gamma(\frac{3}{2}-a_\pi+k)\Gamma(\frac{1}{2}+k)} \frac{(\frac{\beta^2}{2})^k}{k!} ,  \\ 
			\text{ and }
			U_3(\beta^2) &=& 	\sum\limits_{k=0}^{\infty} (-1)^ku_3(k,\bm\beta^2), \  u_3(k,\bm\beta^2) = \frac{\Gamma(1+b+k)}{\Gamma(2-a_\pi+k)\Gamma(\frac{3}{2}+k)}
			\frac{(\frac{\beta^2}{2})^{k+1/2}}{k!}  . 
	\end{eqnarray*}}
	Then  
	{ \begin{eqnarray*}
			&&\pi_{\text{R2-D2}} (\beta) =C_0^\ast    G^{1,3}_{3,1}\left( \frac{2}{\beta^2} \left\vert_{ \frac{1}{2}+b }^{\frac{3}{2} -  a_\pi,1,\frac{1}{2}}\right.
			\right) \\
			&=&   C_0^\ast  \sum\limits_{i=1}^{3}   \frac{\Gamma^\ast (a_i-a_N) \Gamma(1+b_M-a_i)}  {\Gamma_3(1+a_P-a_i)\Gamma_1(a_i-b_Q)}(\frac{2}{\beta^2})^{-1+a_i} 
			{}_{2}F_3
			\left(  _{1+a_P-a_i}^{1,1+b_Q-a_i} 
			\left\vert\frac{-\beta^2}{2} \right. \right) \\   
			&=&    C_0^\ast  \sum\limits_{i=1}^{3}
			\frac{ \prod\limits_{k=1;k\neq i}^3  \Gamma(a_i-a_k)  \Gamma(1+b_1-a_i)}
			{\prod\limits_{k=3+1}^3 \Gamma(1+a_k-a_i) \prod\limits_{k=1+1}^1 \Gamma(a_i-b_k)}(\frac{2}{\beta^2})^{-1+a_i} 
			{}_{2}F_3
			\left(  _{1+a_P-a_i}^{1,1+b_1-a_i} 
			\left\vert\frac{-\beta^2}{2} \right. \right) \\  
			&=&    C_0^\ast  \sum\limits_{i=1}^{3} \left\{
			\frac{ \prod\limits_{k=1;k\neq i}^3  \Gamma(a_i-a_k)  \Gamma(1+b_1-a_i)}
			{1}(\frac{2}{\beta^2})^{-1+a_i}  \times  \right. \\
			&&	\left. 
			\sum\limits_{k=0}^{\infty} \frac{\Gamma(1+k)\Gamma(1+b_1 - a_i +k) \prod\limits_{j=1}^3 \Gamma(1+a_j-a_i)}
			{\prod\limits_{j=1}^3 \Gamma(1+a_j-a_i+k)  \Gamma(1) \Gamma(1+b_1-a_i) } \frac{(\frac{-\beta^2}{2})^k}{k!}\right\}
			\\    
			&=&  C_0^\ast  \times \\
			&& \left\{  
			\Gamma(\frac{1}{2}-a_\pi)\Gamma(1-a_\pi)   
			\sum\limits_{k=0}^{\infty} \frac{\Gamma(a_\pi+b+k) \Gamma(\frac{1}{2}+a_\pi)\Gamma(a_\pi)}{\Gamma(\frac{1}{2}+a_\pi+k)\Gamma(a_\pi+k) } \frac{(-1)^k(\frac{\beta^2}{2})^{k+a_\pi-1/2}} {k!} \right. \\
			&& +   \Gamma(a_\pi-\frac{1}{2}) \Gamma(\frac{1}{2}) \Gamma(\frac{1}{2})
			\sum\limits_{k=0}^{\infty} \frac{\Gamma(\frac{1}{2}+b+k) \Gamma(\frac{3}{2}-a_\pi)}{\Gamma(\frac{3}{2}-a_\pi+k)\Gamma(\frac{1}{2}+k)} \frac{(-1)^k(\frac{\beta^2}{2})^k}{k!} \\
			&& \left.  + \Gamma(a_\pi-1)\Gamma(-\frac{1}{2})   \Gamma(\frac{3}{2})
			\sum\limits_{k=0}^{\infty}  \frac{\Gamma(1+b+k)\Gamma(2-a_\pi)}{\Gamma(2-a_\pi+k)\Gamma(\frac{3}{2}+k)}
			\frac{(-1)^k(\frac{\beta^2}{2})^{k+1/2}}{k!}  \right\}\\    
			&\equiv&  C_0^\ast  \times 
			\left\{	\Gamma(\frac{1}{2}-a_\pi)\Gamma(1-a_\pi)\Gamma(a_\pi) \Gamma(\frac{1}{2}+a_\pi)U_1(\beta^2) +   \right. \\
			&& \left. 
			\Gamma(a_\pi-\frac{1}{2}) \Gamma(\frac{1}{2})\Gamma(\frac{1}{2}) \Gamma(\frac{3}{2}-a_\pi) U_2(\beta^2)   
			+  \Gamma(a_\pi-1)\Gamma(-\frac{1}{2})\Gamma(\frac{3}{2})\Gamma(2-a_\pi)   U_3(\beta^2)  \right\}	 \\
			&\equiv& C_1^\ast U_1(\beta^2) +  C_2^\ast U_2(\beta^2) +  C_3^\ast U_3(\beta^2) . 
		\end{eqnarray*} 
	}
	For fixed  $\beta$ near the neighborhood of zero,  $u_1(k,\beta^2)$, 	$u_2(k,\beta^2)$, 	and $u_3(k,\beta^2)$ are all monotone decreasing, and converge to zero as $k\rightarrow\infty$. Thus, by alternating series test,  $U_1(\beta^2$),  $U_2(\beta^2)$, and $U_3(\beta^2)$ all converge.  
	Also, we have
	{  \[
		C_0|\beta|^{2a_\pi-1 } - C_1|\beta|^{2a_\pi+1}  =u_1(0,\beta^2) -  u_1(1,\beta^2)   \leq U_1(\beta^2)   \leq  u_1(0,\beta^2) = C_0 |\beta|^{2a_\pi-1 }  
		\]
		\[
		C_2 - C_3|\beta|^2  = u_2(0,\beta^2) -  u_2(1,\beta^2)   \leq U_2(\beta^2) \leq  u_2(0,\beta^2) =C_2
		\]
		\[
		C_4|\beta| -  C_5|\beta|^3   = u_3(0,\beta^2) -  u_3(1,\beta^2)   \leq U_3(\beta^2)  \leq  u_3(0,\beta^2) = C_4|\beta| 
		\]}
	where $C_0, \ C_1,\  C_2, \  C_3$, and   $C_4$ are all positive constants. 
	So  given that $|\beta|$ in the neighborhood of zero and  $a_\pi\in (0, \frac{1}{2}) $, 
	$
	C_1^\ast (C_0|\beta|^{2a_\pi-1 } - C_1|\beta|^{2a_\pi+1}) + C_2^\ast C_2 + C_3^\ast(C_4|\beta| -  C_5|\beta|^3 ) \leq  \pi_{\text{R2-D2}}(\beta)  \leq
	C_1^\ast C_0 |\beta|^{2a_\pi-1 }    +   C_2^\ast(C_2 - C_3|\beta|^2) + C_3^\ast C_4|\beta| , 
	$
	then 	
	$\pi_{\text{R2-D2}}(\beta)  = O(|\beta|^{2a_\pi-1 })$. 
\end{proof}

\begin{proof} [\textbf{Proof of Theorem \ref{theorem_center properties DL}}]
	According to  
	{10.30.2} in \cite{NISTDLMF}, when $\nu>0$,  $z\rightarrow0$ and $z$ is   real, $K_\nu (z) \approx   \Gamma(\nu) (z/2)^{-\nu} /2$.
	So  given $0<a^\ast<1$ and  $|\beta|\rightarrow0$, 
	\begin{eqnarray*}
		\pi_{\text{DL}}(\beta)   
		&	= & \frac{|\beta|^{(a^\ast-1)/2} K_{1-a^\ast} ( {(2|\beta|)^{1/2}})}{2^{(1+a^\ast)/2}\Gamma(a^\ast)}   \\
		& \approx  &  \frac{|\beta|^{(a^\ast-1)/2} \frac{1}{2} \Gamma(1-a^\ast) (\frac{(2|\beta|)^{1/2}}{2} )^{a^\ast-1}
		}{2^{(1+a^\ast)/2}\Gamma(a^\ast)}  
		=  C|\beta|^{a^\ast-1} ,   
	\end{eqnarray*}
	where $C= { \Gamma(1-a^\ast) } / {2^{1+a^\ast}\Gamma(a^\ast)} $ is a constant value.  Theorem \ref{theorem_center properties DL} follows then. 
\end{proof}

\begin{proof} [\textbf{Proof of Theorem \ref{theorem_R2-D2_consistency}}]
	The proof of this theorem depends on Theorem 1 in \cite{armagan2013posterior}.   We will restate this theorem in the following Lemma. 
	
%
%

\begin{lemma}\label{lemma_theorem 1 in armagan}
Under Assumptions \ref{a_p=o(n)} and \ref{a_eigenvalues}, the posterior of $\beta_n$ under prior $\pi_n(\beta_n)$ is strongly consistent, that is, for any $\epsilon>0$,  as $n\rightarrow\infty$, 
\[
\text{Pr}_{\bm\beta^0_n}  \left\{ \pi_n ( \beta_n: ||\beta_n - \beta_n^0|| >\epsilon \mid Y_n  )\rightarrow 0 \right\}  =1
\] 
if
\[
\pi_n ( \beta_n: ||\beta_n - \beta_n^0||  < \frac{\Delta}{n^{r/2}}) >\exp(-dn)
\]
for all $0<\Delta<\epsilon^2 d_\text{min} /(48d_\text{max})$ and $0<d<\epsilon^2 d_\text{min}/(32\sigma^2) -3\Delta d_\text{max} /(2\sigma^2)$ and some $r>0$. 
\end{lemma}

	Denote  the 
	estimated set of non-zero coefficients as $\mathcal{A}_n=\{j:\beta_{nj}\neq0, j=1,\cdots, p_n\}$.   
	Given the  R2-D2  prior (\ref{eq_R2-D2 prior, linear model, simple representation}),  we need to calculate the probability assigned to the region  $\{ \bm\beta:|| \bm\beta- \bm\beta^0|| < t_n\}$  where   $t_n =  {\Delta} / {n^{r/2}}$  with $0<\Delta < \epsilon^2 d_\text{min}/ (48 d_\text{max})$. 
	
	{\small \begin{eqnarray*}\label{6}
		&& 	\pi_n( \bm\beta_n:|| \bm\beta_n- \bm\beta^0_n|| < t_n) = 
		\pi_n\left\{ \bm\beta_n: \sum\limits_{j\in \mathcal{A}_n } (\beta_{nj}-\beta_{nj}^0)^2 + \sum\limits_{j\not\in \mathcal{A}_n}\bm\beta_{n j}^2 < t_n^2\right\} \nonumber \\
		&\geq& 
		\pi_n\left\{\bm\beta_{nj}^{j\in\mathcal{A}_n}: \sum\limits_{j\in \mathcal{A}_n }  (\beta_{nj}-\beta_{nj}^0)^2 <\frac{q_n t_n^2}{{p_n} }\right\} \times 
		\pi_n\left\{\bm\beta_{nj}^{j\not\in\mathcal{A}_n} : \sum\limits_{j\not\in \mathcal{A}_n}
		\beta_{nj}^2 < \frac{(p_n-q_n)t_n^2}{p_n } \right\}  \nonumber \\
		&\geq& 
		\left[\prod\limits_{j\in\mathcal{A}_n} \left\{ \pi_n\left( \beta_{nj}:   |\beta_{nj}-\beta_{nj}^0| <\frac{t_n}{{\sqrt{p_n}}}\right)\right\}  \right]
		\pi_n\left(\beta_{nj}^{j\not\in\mathcal{A}_n} :  \beta_{nj}^2 < \frac{ t_n^2}{p_n }   \text{at least for one $j$} \right)     \nonumber \\
		&= & \left[ \prod\limits_{j\in\mathcal{A}_n} \left\{ \pi_n\left( \beta_{nj}^0 -\frac{t_n}{{\sqrt{p_n}} } <   \beta_{nj} < \beta_{nj}^0 +\frac{t_n}{{\sqrt{p_n}} }\right)\right\}  \right] 
	 \left[  1 - \left\{ \pi_n\left(\beta_{nj}^{j\not\in\mathcal{A}_n} :  
		\beta_{nj}^2 \geq  \frac{ t_n^2}{p_n } \right) \right\}^{p_n-q_n} \right]\nonumber \\
				&\geq & \left\{\prod\limits_{j\in\mathcal{A}_n} 
				2\frac{t_n}{{\sqrt{p_n}}} \pi_{\text{R2-D2}} \left(   |\beta_{nj}^0| + \frac{t_n}{{\sqrt{p_n}}} \right) \right\}
				 \left[  1 - 	\left\{ \pi_n\left(\beta_{nj}^{j\not\in\mathcal{A}_n} :  
				|\beta_{nj}|^b \geq \frac{ t_n^b}{p_n^{b/2} } \right)\right\}^{p_n-q_n} \right] \nonumber \\
					&\geq & \left\{\prod\limits_{j\in\mathcal{A}_n} 
				2\frac{t_n}{{\sqrt{p_n}}} \pi_{\text{R2-D2}} \left(    \sup_{j\in\mathcal{A}_n} |\beta_{nj}^0| + \frac{t_n}{{\sqrt{p_n}}} \right) \right\}
				\left[1- 	\left\{    \frac{p_n^{b/2}   \text{E}( |\beta_{nj}|^b)}{ t_n^b} \right\}^{p_n-q_n}  \right ]\nonumber \\
		%
		&\geq &  \left\{  2\frac{t_n}{{\sqrt{p_n}} } \pi_{\text{R2-D2}}\left(\sup_{j\in\mathcal{A}_n}|\beta_{nj}^0| + \frac{t_n}{{\sqrt{p_n}} } \right) \right\}^{q_n} \times 
		\left[ 	1-   \left\{ \frac{p_n^{b/2}   \text{E}( |\beta_{nj}|^b)}{ t_n^b} \right\}^{p_n-q_n} \right]  , \nonumber  
	\end{eqnarray*}}
	where $\pi_{\text{R2-D2}}(\cdot)$ is the marginal density function of $\beta_{nj}$,   symmetric and decreasing when the support is positive, and the last but one ``$\geq$'' is directly got from Markov's inequality.

	Using the hierarchical form of the R2-D2  prior in (\ref{eq_R2-D2 prior, linear model, simple representation1}),  for any $b>0$,   conditional expectations yield
	\[ 
	\text{E}(|\beta_{nj}|^b) =  \text{E} [ \text{E}   \{ \text{E} (|\beta_{nj}|^b \mid \lambda_j) \mid \xi  \}]
	=  \text{E}_{\xi}  \left[ \text{E} _{\lambda_j \mid  \xi}  \left\{  \frac{\Gamma(b+1)}{({2/\lambda_j})^{b/2} }\mid \xi \right\} \right]
	= \frac{b\Gamma(\frac{b}{2})\Gamma(a_{\pi}+\frac{b}{2}) }{2^{b/2}\Gamma(a_{\pi})}  . 
	\]

	For the R2-D2 prior, from equation (\ref{eq_mariginal density_integral}), it follows that the marginal density is a decreasing function on the positive support.  Then assumptions \ref{a_p=o(n)} -- \ref{a_qplog}, 
	together with  the tail approximation of the marginal density as in the proof of  Theorem \ref{theorem_tail properties}, i.e., equation (\ref{eq_tail_detail}), we have 
	\begin{eqnarray*}
		\pi_\text{R2-D2}(\sup_{j\in\mathcal{A}_n}|\beta_{nj}^0| + \frac{t_n}{{\sqrt{p_n}} } )  
		\geq \pi_\text{R2-D2}(E_n + \frac{t_n}{{\sqrt{p_n}} } ) 
		\geq  \frac{\Gamma(a_{\pi}+b)}{\Gamma(a_{\pi})\Gamma(b)}  2^{-b} \frac{\Gamma(2b+1)}{(E_n  + \frac{\Delta}{n^{r/2}{\sqrt{p_n}} })^{2b+1}}  .
 	\end{eqnarray*}

	Considering  the fact that  $\Gamma(a)=  a^{-1} -\gamma_0+ O(a)$ for $a$ being near zero with $\gamma_0$   the Euler-Mascheroni constant, 
	we have 
	{\small \begin{eqnarray*}
		&& \pi_n( \bm\beta_n:|| \bm\beta_n- \bm\beta_n^0|| < \frac{\Delta}{n^{r/2}}) \\
		&\geq & 	\left\{  2\frac{\Delta}{n^{r/2}{\sqrt{p_n}} }
		\frac{\Gamma(a_{\pi}+b)}{\Gamma(a_{\pi})\Gamma(b)}  2^{- b}\frac{\Gamma(2b+1)}{(E_n  + \frac{\Delta}{n^{r/2}{\sqrt{p_n}} })^{2b+1}}  
		\right\}^{q_n}  
		\left[1 -	\left\{ \frac{p_n^{b/2} n^{r b/2} b\Gamma(\frac{b}{2})\Gamma(a_{\pi}+\frac{b}{2})    }{ \Delta^b 2^{b/2}\Gamma(a_{\pi}) } \right\}^{p_n-q_n}  \right]\\
		&\geq & 	\left\{  \frac{ 2 \Delta}{  n^{r/2}{\sqrt{p_n}} } 
		\frac{\Gamma(a_{\pi}+b) a_{\pi}}{ \Gamma(b)}  2^{- b} \frac{\Gamma(2b+1)}{(E_n   + \frac{\Delta}{n^{r/2}{\sqrt{p_n}} })^{2b+1}}  
		\right\}^{q_n}  
		\left[1 -  	\left\{   \frac{p_n^{b/2} n^{r b/2} b\Gamma(\frac{b}{2})\Gamma(a_{\pi}+\frac{b}{2}) a_{\pi}   }{ \Delta^b 2^{b/2}  } \right\}^{p_n-q_n} \right].
	\end{eqnarray*}}
	Taking the negative logarithm of both sides of the above formula, and letting $a_{\pi}= C/( p_n^{b/2} n^{r b/2 }\log n)$,  
	we have 
	\begin{eqnarray*}
		&& -\log \pi_n ( \bm\beta_n:|| \bm\beta_n- \bm\beta_n^0|| < \frac{\Delta}{n^{r/2}}) \nonumber\\
		& \leq & -q_n \log\left\{ \frac{2\Delta C  \Gamma(a_{\pi}+b) 2^{- b}  \Gamma(2b+1) }{ n^{r/2}{\sqrt{p_n}} p_n^{b/2} n^{r b/2 }\log n \Gamma(b)}  \right\}     
		+ q_n ( 2b+1 ) \log (E_n  + \frac{\Delta}{n^{r/2}{\sqrt{p_n}} }) \\
		&& - q_n  \log \left[ 1 -	\left\{  \frac{p_n^{b/2} n^{r b/2} b\Gamma(\frac{b}{2})\Gamma(a_{\pi}+\frac{b}{2}) C   }{ \Delta^b 2^{b/2}  p_n^{b/2} n^{r b/2 }\log n } \right\}^{p_n-q_n} \right] \\
		& = & -q_n \log \left\{   \frac{2 \Delta C \Gamma(a_{\pi}+b) 2^{- b}  \Gamma(2b+1) }{ \Gamma(b)} \right\}   + q_n ( 2b+1 ) \log (E_n   + \frac{\Delta}{n^{r/2}{\sqrt{p_n}} })  \\
		&&	-    q_n  \log\left[ 1-	\left\{  \frac{  b\Gamma(\frac{b}{2})\Gamma(a_{\pi}+\frac{b}{2}) C   }{ \Delta^b 2^{b/2}  \log n } \right\}^{p_n-q_n}  \right]  
		+   q_n\log\log n  \\
		&& +  
		\frac{b+1}{2} q_n   \log p_n +   	\frac{b+1}{2} q_n r  \log n   
	\end{eqnarray*}
Since $\log E_n = O(\log n)$, the dominating term is $O (q_n \log n)$. Hence, if $q_n = o(n/\log n )$, then we have that
	$-\log \pi_n ( \bm\beta_n:|| \bm\beta_n- \bm\beta_n^0|| < {\Delta}/ {n^{r/2}}) < dn$ for all  $0<d<\epsilon^2 d_\text{min}/(32\sigma^2) -3\Delta d_\text{max} /(2\sigma^2)$, so  $\pi_n( \bm\beta_n:|| \bm\beta_n- \bm\beta_n^0||  <  {\Delta}/{n^{r/2}})  > \exp(-dn)$. The posterior consistency is then completed by applying Lemma \ref{lemma_theorem 1 in armagan}. 
\end{proof}


\begin{proof} [\textbf{Proof of Theorem \ref{theorem_new theorem}}] 
	Now that we have established the properties of marginal prior for the R2-D2 hierarchical formulation, the proof will now be based on the results similar to Theorems 2.1, 2.2, A.1 and A.2  in \cite{song2017nearly}.  We will restate these theorems in the following lemma. 
	
	\begin{lemma}\label{lemmaB1}
		Consider the linear regression model (\ref{eq_linear_regression}) and suppose the regularity conditions  \ref{new1}-\ref{new4}  hold. Suppose that the prior for $\pi(\beta, \sigma^2)$ is of the form 
		\[
		\pi(\bm\beta\mid \sigma^2)  = \prod\limits_{i=1}^p [g(\beta_i/\sigma) / \sigma ], \ \sigma^2 \sim \text{IG} (a_1, b_1).
		\]
		Denote $\epsilon_n = M\sqrt{q_n (\log p_n)/n}$  where   $M>0$ is sufficiently large. 
		If the density $g(\cdot)$  in the above formula satisfies
		\begin{equation}\label{condition_b1}
		1 - \int_{-k_n}^{k_n} g(\beta)\, d\beta \leq p_n^{-(1+u)},  \ 
		-\log\left(  \inf\limits_{\beta\in[-E_n, E_n]} g(\beta) \right) = O(\log p_n), 
		\end{equation}
		where $u>0$ is a constant and $k_n \asymp \sqrt{q_n (\log p_n)/n}/p_n$, then the following  results hold: 
		\[
		\text{Pr}_{\bm\beta^0} \Big\{\pi (\bm\beta_n: || \bm\beta_n - \bm\beta^0_n || \geq c_1 \sigma^0 \epsilon_n \mid \bm Y_n ) \geq e^{-c_2 n\epsilon_n^2}   \Big\}  \leq e^{-c_3 n\epsilon_n^2}, 
		\]
		\[
		\text{Pr}_{\bm\beta^0} \Big\{ \pi (\bm\beta_n: || \bm\beta_n - \bm\beta^0_n ||_1 \geq c_1 \sigma^0  \sqrt{q_n}\epsilon_n \mid \bm Y_n ) \geq e^{-c_2 n\epsilon_n^2}   \Big\}   \leq e^{-c_3 n\epsilon_n^2}, 
		\]
		\[
		\text{Pr}_{\bm\beta^0} \Big\{ \pi (\bm\beta_n: || \bm X_n\bm\beta_n - \bm X_n \bm\beta^0_n || \geq c_1 \sigma ^0 \sqrt{n}\epsilon_n \mid \bm Y_n ) \geq e^{-c_2 n\epsilon_n^2}   \Big\}  \leq e^{-c_3 n\epsilon_n^2}, 
		\]
		for some constants $c_1, c_2, c_3>0$. 
	\end{lemma}
	
	For the R2-D2 prior in (\ref{eq_R2-D2 prior, linear model, simple representation}),  according to (\ref{eq_mariginal density_integral}) the corresponding $g(\cdot)$ function is 
	\[
	g(\beta)  = \frac{2^{a_{\pi}}\Gamma(a_{\pi}+b)}{\Gamma(a_{\pi})\Gamma(b)}   \int_{ 0}^\infty   \exp (- |\beta|x )\frac{x^{ 2b}}{(x^2+2)^{a_{\pi}+b}}  \, dx . 
	\]
	By the symmetry of $g(\beta)$ and Fubini's Theorem, we have
	\begin{eqnarray*}
		&& 1 - \int_{-k_n}^{k_n} g(\beta)\, d\beta = 2\int_{k_n}^{\infty} g(\beta)\, d\beta  \\
		&& =   2 \frac{2^{a_{\pi}}\Gamma(a_{\pi}+b)}{\Gamma(a_{\pi})\Gamma(b)}    \int_{ 0}^\infty    \int_{k_n}^{\infty}   \exp (- |\beta|x )  \, d\beta   \frac{x^{ 2b}}{(x^2+2)^{a_{\pi}+b}}  \, dx  \\
		&& =   2 \frac{2^{a_{\pi}}\Gamma(a_{\pi}+b)}{\Gamma(a_{\pi})\Gamma(b)}    \int_{ 0}^\infty    \frac{ \exp (- k_n x )}{x}   \frac{x^{ 2b}}{(x^2+2)^{a_{\pi}+b}}  \, dx  \\
		&& =      \frac{2 \Gamma(a_{\pi}+b)}{\Gamma(a_{\pi})\Gamma(b)}      \int_{ 0}^\infty     { \exp (- k_n \sqrt{2}x )}     \frac{x^{2b-1}}{(x^2 +1)^{a_{\pi}+b}}  \, dx  \\
		&& =   \frac{2 \Gamma(a_{\pi}+b)}{\Gamma(a_{\pi})\Gamma(b)}    \frac{1}{2\sqrt{\pi} \Gamma(a_{\pi} +b)} G_{1,3}^{3,1}\left(\frac{k_n^2}{2}
		\left\vert ^{1-b}_{a_{\pi}, 0, \frac{1}{2}} \right. \right)   \\
		&& =   \frac{1}{\sqrt{\pi}\Gamma(a_{\pi})\Gamma(b)}   G_{1,3}^{3,1}\left(\frac{k_n^2}{2}
		\left\vert ^{1-b}_{a_{\pi}, 0, \frac{1}{2}} \right. \right) 
	\end{eqnarray*}
	where the    second to last  ``='' follows from equation 3.389.2 in \cite{zwillinger2014table}.  
	The right side of the above equation looks similar to the marginal density of R2-D2 prior,  so  we can apply exactly the same technique used  in proof of Theorem  \ref{theorem_center properties R2-D2}. So   in the proof 
	\[1 - \int_{-k_n}^{k_n} g(\beta)\, d\beta  = 
	\frac{1}{\sqrt{\pi}\Gamma(a_{\pi})\Gamma(b)}   G_{1,3}^{3,1}\left(\frac{k_n^2}{2}
	\left\vert ^{1-b}_{a_{\pi}, 0, \frac{1}{2}} \right. \right)   
	= C_1^\ast U_1(k_n^2) + C_2^\ast U_2(k_n^2) + C_3^\ast U_3(k_n^2), 
	\]
	where 
	\begin{eqnarray*}
		C_1 ^\ast  &=& \frac{1}{\sqrt{\pi}\Gamma(a_{\pi})\Gamma(b)}   \Gamma(-a_{\pi})\Gamma(\frac{1}{2}-a_{\pi})\Gamma(a_{\pi} + \frac{1}{2}) \Gamma(1+a_{\pi})  <0 , \\
		C_2^\ast  &=& \frac{1}{\sqrt{\pi}\Gamma(a_{\pi})\Gamma(b)}  \Gamma(a_{\pi} ) \Gamma(\frac{1}{2})\Gamma(\frac{1}{2}) \Gamma(1-a_{\pi}) >0  , \\
		C_3^\ast   &=& \frac{1}{\sqrt{\pi}\Gamma(a_{\pi})\Gamma(b)}    \Gamma(a_{\pi}-\frac{1}{2})\Gamma(-\frac{1}{2})\Gamma(\frac{3}{2})\Gamma(\frac{3}{2}-a_{\pi}) > 0 , \\
		U_1(k_n^2) &=& 
		\sum\limits_{j=0}^{\infty}(-1)^j  u_1(j, k_n^2),  \ 
		u_1(j, k_n^2) =  \frac{\Gamma(a_{\pi}+b+j) }{\Gamma(1+a_{\pi}+j)\Gamma(\frac{1}{2} + a_{\pi}+j) } \frac{ (\frac{k_n^2}{2})^{j+a_{\pi}}} {j!} , \\
		U_2(k_n^2) &=& 	 
		\sum\limits_{j=0}^{\infty} (-1)^j u_2(j, k_n^2),  \
		u_2(j, k_n^2)	 =\frac{\Gamma(b+j) }{\Gamma(1-a_{\pi}+j)\Gamma(\frac{1}{2}+j)} \frac{ (\frac{k_n^2}{2})^j}{j!} , \\
		\text{ and }
		U_3(k_n^2) &=& 	
		\sum\limits_{j=0}^{\infty} (-1)^ju_3(j, k_n^2),  \ 
		u_3(j, k_n^2) =  \frac{\Gamma(\frac{1}{2}+b+j)}{\Gamma(\frac{3}{2}-a_{\pi}+j)\Gamma(\frac{3}{2}+j)}
		\frac{(\frac{k_n^2}{2})^{j+1/2}}{j!}. 
	\end{eqnarray*}
	Then we have  \[
	u_1(0,k_n^2) -  u_1(1,k_n^2)   \leq U_1(k_n^2)   \leq  u_1(0,k_n^2)  ,
	\]
	\[
	U_2(k_n^2) \leq  u_2(0,k_n^2) ,
	\]
	\[
	U_3(k_n^2)  \leq  u_3(0,k_n^2)  . 
	\]
	Hence,   based on the fact  $\Gamma(1-z) = -z\Gamma(-z)$,  $\Gamma(z) \approx 1/z$ as $z\rightarrow0$,   $k_n \asymp \sqrt{(q_n\log p_n)/n}/p_n\rightarrow 0 $,  and $a_{\pi}  \leq  \frac{\log(1-p_n^{-(1+u)})}{2\log k_n}\rightarrow 0$, it follows 
	\begin{eqnarray*}
		1 - \int_{k_n}^{k_n} g(\beta)\, d\beta  & \leq  &
		C_1^\ast \Big\{  u_1(0,k_n^2) -  u_1(1,k_n^2)  \Big \} + C_2^\ast u_2(0,k_n^2)   + C_3^\ast u_3(0,k_n^2)  \\
		&=&  1   - k_n^{2a_{\pi}} \Big\{  -\frac{\Gamma(-a_{\pi})\Gamma(\frac{1}{2} - a_{\pi})\Gamma(a_{\pi}+b)}{\sqrt{\pi}\Gamma(a_{\pi})\Gamma(b) 2^{a_{\pi}}}
		- C_4^\ast k_n^2 - C_5^\ast  k_n^{1-2a_{\pi}}  		   \Big\}\\
		&=&  1   - k_n^{2a_{\pi}} \Big\{  \frac{\Gamma(1-a_{\pi})\Gamma(\frac{1}{2} - a_{\pi})\Gamma(a_{\pi}+b)}{\sqrt{\pi} \Gamma(b) 2^{a_{\pi}}}
		- C_4^\ast k_n^2 - C_5^\ast k_n^{1-2a_{\pi}}  		   \Big\}\\
		& \rightarrow &  1   -    k_n^{2a_{\pi}} \leq p_n^{-(1+u)}, 
	\end{eqnarray*}
	where  $C_4^\ast, C_5^\ast\geq 0$.  Hence for  $a_{\pi}  \leq  \frac{\log(1-p_n^{-(1+u)})}{2\log k_n}\rightarrow 0$, we proved the first condition in   (\ref{condition_b1}) holds.

	
	Now let's prove that the second condition  in  (\ref{condition_b1}) also holds.  
	By Theorem \ref{theorem_tail properties}, $g(\beta)  = O(|\beta|^{-2b-1})$ as $|\beta|\rightarrow\infty$.  Since $\log(E_n ) = O(\log p_n)$, 
	\[
	\inf _{ \beta \in [-E_n, E_n]} g(\beta) = g(E_n) \approx O(E_n^{-2b-1}). 
	\]
	So 
	\[
	-\log\big(\inf _{ \beta \in [-E_n, E_n]} g(\beta)  \big)  =  (2b+1)\log(E_n) =   O( \log p_n), 
	\]
	i.e.,  the second condition in  (\ref{condition_b1}) holds. Hence, all conditions in  Lemma  \ref{lemmaB1} are satisfied. 
	Theorem  \ref{theorem_new theorem}  is proven. 
\end{proof}

\section{Appendix: Choosing Hyper-parameters $\nu$ and $\mu$}

Assume that $q$ of the $p$ variance components $(\lambda_j)$ account for $(1-\epsilon)$ proportion of the
variability. That is, $\frac{\sum_{j=1}^q \lambda_{(j)}}{\sum_{j=1}^p \lambda_{(j)}} = 1-\epsilon$, where $\lambda_{(j)}$ is the j-th largest value. Suppose that $\lambda_j \sim$ Gamma $(\nu, \mu)$. The degree of sparsity, i.e. the number of relevant components, $q$ can then be viewed as finding $q$ such that the proportion achieves $1-\epsilon$. Since the proportion is a random quantity, we seek $q$ such that the median of the distribution of the proportion reaches equality. Note that since, $\mu$ is a scale parameter, the distribution of the proportion is unaffected by $\mu$. Hence a grid search can be achieved to find $\nu$ such that the equation is satisfied by the median of the distribution, this just requires sampling from a Gamma $(\nu, 1)$ distribution to approximate the distribution for a given $\nu$. 

After choosing $\nu$, we then choose the scale parameter, $\mu$, we seek to ensure that the prior distribution for $\betabf$ is near zero at $p - q$ directions. To do so, recall that the distribution must lie on the ellipsoid given by $\betabf^T \Sigma_X \betabf = \theta \sigma^2$. Hence, we wish to push $p-q$ directions to be inside the ellipsoid, i.e. near zero. So, we set $\mu$ such that the probability of $\beta_j$ under the unrestricted Normal distribution, being inside the ellipsoid is $1 - q/p$. Note that this probability is just $P(\beta_j^2 < \theta \sigma^2)$. Now 
\begin{align*}
P(\beta_j^2 < \theta \sigma^2) =& \ P\left ( w_j < \frac{1}{\lambda_j} \right ) = \int_{0}^{\infty} \int_{0}^{1/n\lambda_j} f_w (w_j) f_\lambda(\lambda_j) dw_j d\lambda_j \\
=& \ \int_{0}^{\infty} \frac{1}{\Gamma(1/2)} \gamma \left ( \frac{1}{2},\frac{1}{2\lambda_j} \right ) f_\lambda(\lambda_j) d\lambda_j \\
=& \ \frac{1}{\Gamma(1/2)} E_\lambda \left [ \gamma \left ( \frac{1}{2},\frac{1}{2\lambda_j} \right ) \right ],
\end{align*}
where $w_j$ is a $\chi^2_1$ random variable, and $\gamma$ is the lower incomplete gamma function. This can again be evaluated on a grid. This time a grid of $\mu$.
	
\end{document}